\newtheorem{theorem}{Theorem}
\newtheorem{assumption}{Assumption}
\newtheorem{lemma}{Lemma}
\newtheorem{remark}{Remark}
\newtheorem{definition}{Definition}
\g@addto@macro\normalsize{%
	\setlength\abovedisplayskip{2pt}
	\setlength\belowdisplayskip{2pt}
	\setlength\abovedisplayshortskip{2pt}
	\setlength\belowdisplayshortskip{2pt}
}
\def\ee{{\epsilon}}
\def\bi{\begin{itemize}}
\def\ba{\begin{assumption}}
\def\ea{\end{assumption}}
\def\ee{\end{equation}}
\def\ei{\end{itemize}}
\def\xd{\dot{x}}
\def\Vd{\dot{V}}
\def\Vd{\dot{V}}
\def\ed{\dot{e}}
\def\cP{{\cal P}}
\def\cF{{\cal F}}
\def\xb{\bar{x}}
\def\cD{{\cal D}}
\def\cK{{\cal K}}
\def\cP{{\cal P}}
\def\cN{{\cal N}}
\def\cN{{\cal N}}
\def\bR{\mathbb{R}}
\def\cD{{\cal D}}
\def\cK{{\cal K}}
\def\bmat{\begin{matrix}}
\def\emat{\end{matrix}}
\begin{document}
\bstctlcite{IEEEexample:BSTcontrol}
	
%
\title{The distributed biased min-consensus protocol revisited: pre-specified finite time control strategies and small-gain based analysis}
%
%
%

\author{Yuanqiu Mo, \IEEEmembership{Member IEEE}, He Wang, \IEEEmembership{Member IEEE} 
\thanks{Mo and Wang are with the Department of System Science, School of Mathematics, Southeast University, Nanjing 211189, China (email: yuanqiumo@seu.edu.cn)}
}
\maketitle

\begin{abstract}
Unlike the classical distributed consensus protocols enabling the group of agents as a whole to reach an agreement regarding a certain quantity of interest in a distributed fashion, the distributed biased min-consensus protocol (DBMC) has been proven to generate advanced complexity pertaining to solving the shortest path problem. As such a protocol is commonly incorporated as the first step of a hierarchical architecture in real applications, e.g., robots path planning, management of dispersed computing services, an impedance limiting the application potential of DBMC lies in, the lack of results regarding to its convergence within a user-assigned time. In this paper, we first propose two control strategies ensuring the state error of DBMC decrease exactly to zero or a desired level manipulated by the user, respectively. To compensate the high feedback gains incurred by these two control strategies, this paper further investigates the nominal DBMC itself. By leveraging small gain based stability tools, this paper also proves the global exponential input-to-state stability of DBMC, outperforming its current stability results.  Simulations have been provided to validate the efficacy of our theoretical result.
\end{abstract}

\begin{IEEEkeywords}
Consensus, Biased Min-consensus, the Shortest Path Problem, Pre-specified Finite Time Control, Small Gain Theorems
\end{IEEEkeywords}

%
\IEEEpeerreviewmaketitle{\normalsize }

\section{Introduction}
The distributed consensus protocol, which aims to ensure that all agents reach an agreement in a distributed fashion, is one of the basic problems in the control field and has been extensively studied over the past decades \cite{yao2023event}. Depending on the agreement value, consensus can be roughly divided into 3 categories: 1) average-consensus; 2) min-consensus; and 3) max-consensus, their merits can be discovered in practical fields including formation control, clock synchronization, load balancing and so on \cite{yao2023event,min-max,cortesmaxmin}. Different from distributed consensus protocols which mainly serve as a means to mitigate state differences between agents/nodes, the distributed biased min-consensus protocol (DBMC) proposed in \cite{Zhang2017} has been proven to produce complex behaviors related to the shortest path problem \cite{bellman1958routing}, which is a complicated combinatorial optimization problems studied in computer science and artificial intelligence fields. In particular, DBMC is capable of enabling each node in a graph to find a shortest path (in the sense that the sum of weights of its constituent edges is minimized) from its nearest source. Moreover, compared with the classical centralized shortest path algorithms, e.g., Dijkstra's algorithm \cite{dijkstra2022note}, Bellman-ford algorithm \cite{bellman1958routing}, $\mathrm{A}^\ast$ algorithm \cite{russell2016artificial}, and etc, DBMC possesses better scalability and robustness in that 1) it is a distributed algorithm based on dynamic evolution; and 2) it achieves global stability as it has no requirement on the initial states while the aforementioned classical methods all require the initial states to be infinity.

Given the above advantages of DBMC, stability of DBMC (especially its discretized version) has been well studied from the control community. \cite{shi2018biased, zhang2017perturbing, yao2020hierarchical, shi2020distributed, TAC, mo2022stability} all have proven that the discrete-time DBMC and its variants will converge within a finite number of iterations while providing different upper bounds of such number. Our prior work also shows that the ultimate boundedness of the discrete-time DBMC under persistent perturbations on the edge weights, via a discrete-time Lyapunov-based approach. Owing to its finite time convergence, fruitful applications based on the discrete-time DBMC have exploited. To name a few, in \cite{shi2018biased, shi2020distributed} the discrete-time DBMC is adopted to search the fast charging station with the lowest overall objective, and to discover the best supply candidate with minimum power loss in power and traffic networks, respectively. \cite{yao2020hierarchical} utilizes the discrete-time DBMC to realize the optimal and safe path planning for unmanned surface vessels. In contrast to its discretized version, theoretical analyses of DBMC remains scarce. \cite{Zhang2017} proves the global asymptotic stability of DBMC via using LaSalle's invariance principle. The regional exponential stability (requiring the initial state errors to be non-negative) of DBMC is further demonstrated in \cite{mo2021lyapunov} by designing a non-smooth Lyapunov function. However, to the best of our knowledge, there is no existing work concerning convergence of DBMC within a prescribed time, which impedes the development of DBMC in real applications, as DBMC is commonly incorporated as a core and initial part of the hierarchical architecture in real applications, necessitating its convergence in prescribed time, e.g., obtaining the optimal path for the subsequent obstacles avoidance in robot path planning \cite{yao2020hierarchical}, or calculate the best route for the following management of dispersed computing services in content delivery networks \cite{paulos2019framework}. Even though the discrete-time DBMC can attain finite time convergence, the derived upper bounds on the finite time in the existing literature are all defined by structural parameters of the graph, such as the number of nodes or the diameter \cite{yao2020hierarchical, TAC, zhang2017perturbing}, resulting in its  inapplicability in situations where DBMC must complete within a given time, plus the incompatibility between discrete-time DBMC and dynamical systems, it is of great importance to develop DBMC such that convergence within a presetting time is guaranteed.  

Recently, the pre-specified finite time (PT) control for nonlinear systems has attracted increasing attention, the salient feature of such control strategy is that it ensures the system stabilizes within a finite presetting time  irrespective of initial conditions or any other design parameters \cite{song2023prescribed}, which outperforms the finite time or fix time control strategies, as their settling time either depends on the initial states or a conservative estimate \cite{ning2022fixed}. Moreover, the PT stabilizing effect can be achieved via placing a novel but simple time-varying scaling function in the feedback loop \cite{wang2018leader}, permitting it to be incorporated in DBMC to guarantee convergence in a prescribed time. However, when time tends to the presetting time infinite gains are inevitably incurred to achieve zero steady-state error via such control strategy \cite{krishnamurthy2020dynamic}, limiting the usage of this method beyond the presetting time interval while impeding its practical implementation simultaneously \cite{wang2018prescribed}. On this basis, the practical pre-specified finite time (PPT) control has been proposed \cite{cao2022practical}, which is a less ambitious version of PT control strategy in that, within the presetting time the state error will converge to the neighborhood of the origin, and the magnitude of the neighborhood can be adjusted by a user-defined parameter. Similar to the PT control strategy, practical finite time stabilizing effect can be attained by taking a continuous time-varying function, e.g., TBG, as the feedback gain \cite{ning2019practical}.

Inspired by the above PT and PPT control strategies and taking into account the urgent need in devising a DBMC equipped with user-defined convergence time, in this paper, we provide three design strategies for DBMC suitable for various application scenarios. As for the PPT stabilization, instead of using TBG directly, we repeatedly use the truncation of TBG such that the PPT stabilizing effect can be achieved orderly from near to far from the source till the entire network. In regard to the PT stabilization, technical issues arise as the feedback gain approaches to infinity at the presetting time, we obtain a PT stable solution beyond the presetting time interval, via the time-varying scaling function being concatenated with a constant function. To prove the existence of a continuous solution resulted from such a control strategy, a comparison function is subtly designed to prove the continuity of the dynamic of DBMC. Turning back to the nominal DBMC that adopts a constant feedback gain, upon which we have derived the range sufficient to guarantee the global exponential input-to-state stability (expISS) of DBMC, by leveraging small gain based approaches \cite{geiselhart2016relaxed,dashkovskiy2010small,ruffer2010monotone}. The contribution of the paper can be summarized as twofold:
\begin{itemize}
	\item To the best of our knowledge, there is no existing work addressing the pre-specified finite time convergence of the DBMC, which is an indispensable property for the implementation of DBMC in real applications, while this work fills the gap by proposing two control strategies such that PT and PPT stabilization on DBMC can be realized, respectively.
	\item As for DBMC itself, while existing works have demonstrated its global asymptotic stability or regional exponential stability, this work further proves its expISS, which not only implies the global exponential stability of DBMC without perturbations, but also reveals its state error eventually becomes small under small perturbations, regardless of the initial states.  
\end{itemize}
The rest of the paper is organized as follows. Section \ref{sec:pre} gives the necessary notations and definitions, as well as state the problem of interest. Section \ref{sec:ptandppt} introduces the PPT and PT control strategies on DBMC, respectively. Section \ref{sec:smallgain} introduces the small gain based analysis of the nominal DBMC. Section \ref{sec:simulations} provides the simulations, and Section \ref{sec:con} concludes.

\section{preliminary knowledge}\label{sec:pre}
As DBMC is a graph-based algorithm, in this section we first give the preliminary knowledge of graphs. The distributed biased min-consensus protocol, as well as some definitions facilitating our later stability analysis will also be presented.

We consider undirected graphs $G=(V,E)$ with $V = \{1,\cdots,n\}$ the set of nodes and $E$ the set of undirected edges. The set of neighbors of node $i \in V$ is denoted by $\cN(i)$, and no node is deemed to be its own neighbor, i.e., $i \notin \cN(i)$ for all $i \in V$. The weight of the edge between nodes $i$ and $j$ is denoted by $w_{ij}$ and is assumed to be positive,  $j \in \cN(i)$ implies $i \in \cN(j)$ and $w_{ij} = w_{ji}$ as the considered graph is undirected. Further, the set comprising the sources is denoted by $S \subsetneq V$.

In addition to the graphical framework introduced above, we also need the following notations, most of which will be used in Section \ref{sec:smallgain}. Define $\mathbb{R}, \mathbb{R}_+, \mathbb{Z}$ and $\mathbb{Z}_+$ as the set of real numbers, the set of nonnegative real numbers, the set of integers and the set of nonnegative integers, respectively. For any $x \in \mathbb{R}^n$, denote $|x|$, $|x|_{\infty}$ and $||x||$ as its $\ell_1$ norm, $\ell_\infty$ norm and any arbitrary fixed monotonic norm, respectively. For any function $\phi: \mathbb{R}_+ \rightarrow \bR^m$, its sup-norm is denoted by $||\phi||_\infty = \sup\{||\phi(t)||: t \in \bR_{+}\}$. The set of all functions $\mathbb{R}_+ \rightarrow \bR^m$ with finite sup-norm is denoted by $\ell^\infty(\bR^m)$. 

Comparison functions $\cK$ and $\cK_\infty$ are the same as defined in \cite{khalil}. 
We use the partial order induced by the component-wise ordering, i.e., for $x,y \in \bR_{+}^{n}$, we write $x < y~ (\mathrm{resp.}~x \leq y)$ to mean $x_i < y_i~(\mathrm{resp.}~x_i \leq y_i)$ for all $i \in \{1,\cdots,n\}$, and by $x \not\geq y$ we mean there exists at least one index $i$ such that $x_i < y_i$. A function $f: \mathbb{R} \rightarrow \mathbb{R}$ is of class $C^2$ if its second derivative is continuous. $co\{\cdot\}$ denotes the convex hull, i.e., for $n$ points $p_1, \cdots, p_n$, $co\{p_i~|~ i \in \{1,\cdots,n \}  \}= \{\sum_{j = 1}^{n}\lambda_jp_j: \lambda_j\geq 0 ~\mathrm{for ~all ~}j~ \mathrm{and} ~\sum_{j = 1}^{n}\lambda_j = 1 \}$.
\begin{figure}[h]
	\centering
	\includegraphics[width=0.35\columnwidth]{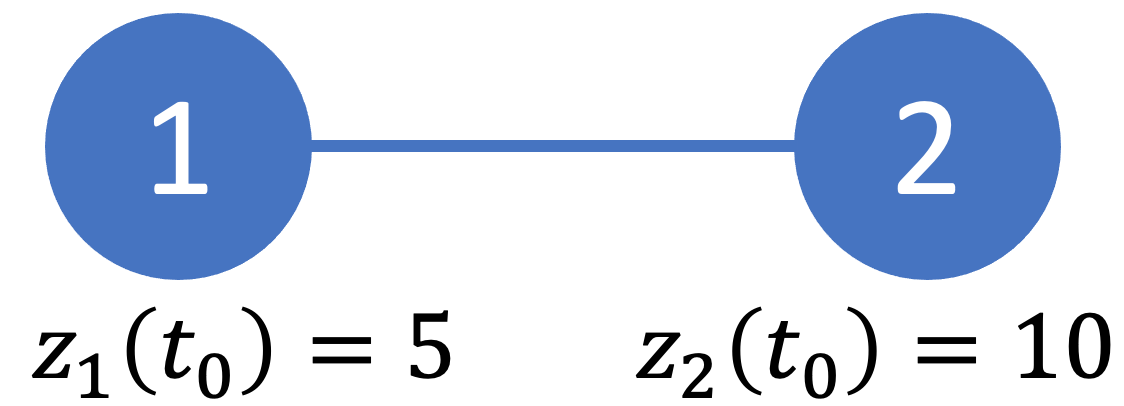}
	\caption{Illustration of the necessity of the non-increasing property of $z_i(\cdot)$. In this graph consisting of two nodes, if $z_i(\cdot)$ is not non-increasing, node 1 and node 2 will continuously exchange their initial value $z_1(t_0) = 5$ and $z_1(t_0) = 10$ and min-consensus will never be achieved.}
	\label{fig:example2}
\end{figure}

The definition of \emph{the shortest path problem}  considered in this paper is formally characterized as the follows. Note that in this paper multiple sources in the graph is allowed.
\begin{definition}
	Consider an undirected graph $G = (V,E)$, the shortest path problem aims to find a path from each node $i$ to its nearest source in $S$ such that the sum of the weights of its constituent edges is minimized.
\end{definition}
Let $x_i$ denote  the length/distance of the shortest path from node $i$ to the source set. According to the Bellman's principle of optimality \cite{bellman1958routing}, $x_i$ obeys the following recursion:
\begin{equation}\label{eq:bellman}
x_i = \begin{cases}
0, &i \in S \\
\min_{j \in \cN(i)} \{x_j +w_{ij} \},& i \notin S
\end{cases}.
\end{equation}

Before turning to the DBMC, we first revisit the rudiment of DBMC, the discrete-time min-consensus protocol, 
\begin{equation}\label{eq:min}
z_i(k+1) = \min_{j \in \cN(i)}\{ z_j(k) \}
\end{equation}
where $z_i(k)$ the state of node $i$ in the $k$-th time step of the discrete-time min-consensus protocol. Then it has been shown in \cite{min-max} that all nodes' states will converge to the minimum initial state within finite iterations if $z(\cdot)$ is a non-increasing function with respect to $k$. As shown in Figure \ref{fig:example2}, the non-increasing property is necessary as otherwise node $1$ and node $2$ will keep exchanging their initial states everlastingly and min-consensus will never be achieved. 

Assuming $t_0 = 0$ is the initial time, let $x_i(t)$ be the state (estimated distance of the shortest path from the source set at time $t$) of $i \in V$. Mimicking the style of (\ref{eq:min}) with using derivative instead of difference and adding suitable edge weights, DBMC proposed in \cite{Zhang2017, mo2021lyapunov} specifies the following recursion:
\begin{equation}\label{eq:practical-protocol}
\xd_i(t) = 
\begin{cases}
0, & i \in S\\
-\eta\left(x_i(t) - \min_{j \in \cN(i)}\{x_j(t)+w_{ij}\}\right), & i \notin S
\end{cases},
\end{equation}
where $x_i(0) = 0$ for all $i \in S$ and $\eta > 0$ is the feedback gain. The behavior of DBMC is also close to the recursion (\ref{eq:bellman}), in the sense that the state of the source node is anchored at 0 while that of a non-source node evolves using the minimum among the summations of its neighbors' states and the edge weights in between. 

It is worthwhile noting that the distributed biased min-consensus protocol is fully distributed as each node only receives 1) the current states from its neighbors and 2) the edge weight transmitted by its neighbors or measured by the node itself. It has been proven in \cite{Zhang2017} and \cite{mo2021lyapunov} respectively that DBMC can achieve global asymptotic stability and regional exponential stability (under a mild assumption that all initial states are overestimates, i.e., $x_i(0) \geq x_i$ for all $i \in V\setminus S$).

To facilitate the analysis in the next section, we introduce the following definitions based on (\ref{eq:bellman}).
\begin{definition}\label{def:true}
	A minimizing $j$ in the right hand side of the second bullet of (\ref{eq:bellman}) is called a true parent node of $i$. As $i$ may have multiple true parent nodes, we use $\cP(i)$ to denote the set of true parent nodes of node $i$. A source node does not have any true parent node.
	
	Further, for an undirected graph $G=(V,E)$, consider a sequence of nodes such that the predecessor of each node is one of its true parent nodes. We define $\cD(G)$, the effective diameter of $G$, as the longest length such a sequence can have in graph $G$.
\end{definition}
Another definition is needed to quantify the stability analysis.
\begin{definition}\label{def:longest}
	We call a path from a node $i$ to its nearest source $j \in S$ a shortest path, if it starts at $i$, ends with $j \in S$, and each node in the path is a true parent node of its predecessor. We call such a shortest path the longest shortest path if it has the most nodes among all shortest paths of $i$. The set $\cF_\ell$ is the set of nodes whose longest shortest paths to the source set have $\ell + 1$ nodes. 
\end{definition}
\begin{remark}
	Based on Definition \ref{def:longest}, there holds $\cF_0 = S, \cF_{\cD(G) + i} = \emptyset, \forall i \in \mathbb{Z}_+$ and $\cF_{i} \neq \emptyset, \forall i \in \{0,\cdots, \cD(G) - 1 \}$. Moreover, 
	\begin{equation}\label{eq:F}
		 \cup_{i \in \{0,\cdots, \cD(G) - 1  \}}\cF_i = V ~\mathrm{and}~ \cF_i \cap \cF_j = \emptyset, ~ \forall i \neq j,
	\end{equation}
(\ref{eq:F}) comes from the fact that each node $i \in V$ has a longest shortest path, and the number of nodes in such a path cannot exceed $\cD(G)$ defined in Definition \ref{def:true}. For any $i \in \cF_\ell$ with $\ell \in \{1,\cdots, \cD(G) - 1 \}$, it has a true parent node $j \in \cF_{\ell - 1}$.

 Besides, if $i$ has multiple paths comprising of different numbers of nodes, $i$ only belongs to  $\cF_\ell$ with $\ell+1$ the largest number of nodes  among those paths.
\end{remark}
\begin{figure}[h]
	\centering
	\includegraphics[width=1\columnwidth]{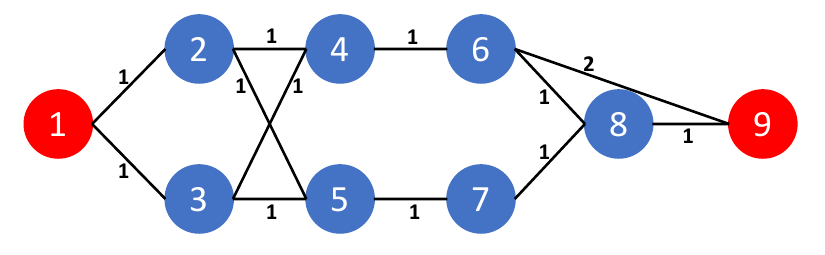}
	\caption{An undirected graph consisting of 7 non-source nodes and 2 source nodes. In this scenario, node 4 has 2 true parent nodes: node 2 and node 3. The effective diameter of the graph is 3. The longest shortest path of node 6 is $6 \rightarrow 8 \rightarrow 9$, and $6 \in \cF_2$ while $6 \notin \cF_1$.}
	\label{fig:example1}
\end{figure}
We use Figure \ref{fig:example1} to illustrate the above definition. In the above graph consisting of 9 nodes (2 source nodes, node 1 and node 9 are in red while 7 non-source nodes in blue), each edge weight is equal to 1 except that $w_{69} = 2$. The length of the shortest path from node 2 to its nearest source in the source set (node 1) is 2, resulted from the path $4 \rightarrow 2 \rightarrow 1$ or $4 \rightarrow 3 \rightarrow 1$, and thus the set of true parent nodes of node $4$ obeys $\cP(4) = \{2,3\}$. The effective diameter of the graph is 3, resulted from multiple sequences of nodes, e.g., $6 \rightarrow 8 \rightarrow 9$. 

Node 6 has two shortest paths towards its nearest source 9, $6 \rightarrow 8 \rightarrow 9$ and $6 \rightarrow 9$, while the path $6 \rightarrow 8 \rightarrow 9$ is the longest shortest path of node $6$. Further, $6 \in \cF_2$ while $6 \notin \cF_1$ by Definition \ref{def:longest}.

To avoid trivialities the main assumption throughout the paper is summarized as the follows.
\begin{assumption}\label{ass:main}
	The graph $G = (V, E)$ is connected and undirected, the source set $S \neq V$, each edge weight is positive, and $t_0 = 0$ is the initial time.
\end{assumption}

\section{pre-specified finite time control strategies}\label{sec:ptandppt}
In this section, we provide two control strategies on DBMC such that PPT or PT stabilization of (\ref{eq:protocol}) to the stationary value defined in (\ref{eq:bellman}) can be realized. 

The following assumption holds throughout this section.
\begin{assumption}\label{ass:overestimate}
	All non-source nodes have overestimated initial states, i.e., $x_i(0) \geq x_i$ for all $i \in V\setminus S$. 
\end{assumption}
\begin{remark}
	The above assumption is a mild one as one can set the initial states sufficiently large to ensure the assumption holds. Moreover, as will be shown later, the prescribed finite time for DBMC to converge is independent of the initial states under both the proposed two strategies.
\end{remark}
\subsection{The practical pre-specified finite time control strategy}\label{sec:ppt}
The PPT control strategy is realized via simply replacing the feedback gain in (\ref{eq:practical-protocol}) with the time base generator (TBG) gain \cite{ning2019practical, morasso1997computational}. Under TBG gain, (\ref{eq:protocol}) now is interpreted as 
\begin{equation}\label{eq:protocol}
\xd_i(t) = 
\begin{cases}
0, & i \in S\\
-\eta(t)\left(x_i(t) - \min_{j \in \cN(i)}\{x_j(t)+w_{ij}\}\right), & i \notin S
\end{cases}
\end{equation}
with the TBG gain $\eta(t)$ obeying 
\begin{equation}\label{eq:TBG}
\eta(t) = \frac{\dot{\varepsilon}(t - k*T_s)}{1 - \varepsilon(t - k*T_s) + \delta}, \forall t \in [k*T_s, (k+1)*T_s)
\end{equation}
where $k \in \mathbb{Z}_+$, $T_s$ is a pre-specified time constant,  $\varepsilon(t)$ is a TBG and $0 < \delta << 1$. In particular, $\varepsilon(t)$ has the following properties:
\begin{itemize}
	\item $\varepsilon(t)$ is at least $C^2$ on $(0, +\infty)$;
	\item $\varepsilon(t)$ is continuous and non-decreasing from an initial value $\varepsilon(0) = 0$ to a terminal value $\varepsilon(T_s) = 1$, where $T_s < +\infty$ is a prescribed time instant;
	\item $\dot{\varepsilon}(0) = \dot{{\varepsilon}}(T_s) = 0$;
	\item $\varepsilon(t) = 1$ when $t > T_s$.
\end{itemize}
A suitable example of TBG could be 
\begin{equation}\label{eq:tbgex}
\varepsilon(t) = \begin{cases}
\frac{10}{4^6}t^6 - \frac{24}{4^5} t^5+ \frac{15}{4^4}t^4, & 0\leq t \leq T_s\\
1, & t > T_s
\end{cases}.
\end{equation}
\begin{remark}\label{re:TBG}
Different from the TBG used in \cite{ning2019practical, morasso1997computational}, this paper indeed repeatedly uses a truncated version of TBG. As $\varepsilon(t)$ is of class $C^2$ on $(0, +\infty)$ and $\dot{\varepsilon}(T_s) = 0$, the TBG gain $\eta(t)$ used in (\ref{eq:protocol}) is continuous, bounded and non-negative on $[0, +\infty)$.
\end{remark}
As $\eta(t)$ in (\ref{eq:protocol}) is bounded and continuous on $[0,+\infty)$, (\ref{eq:protocol}) admits a unique solution per the following theorem.
\begin{theorem}\label{the:unique}
	Suppose Assumption \ref{ass:main} holds, (\ref{eq:protocol}) has a unique solution for $t \in [0, +\infty)$.
\end{theorem}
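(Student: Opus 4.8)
The plan is to recognize (\ref{eq:protocol}) as a nonautonomous ordinary differential equation $\xd = f(t,x)$ on $\bR^n$ and to invoke the classical global existence and uniqueness theorem, for which it suffices to verify two conditions: that $f$ is continuous in $t$ and globally Lipschitz in $x$ uniformly in $t$. Writing $f_i(t,x) = 0$ for $i \in S$ and $f_i(t,x) = -\eta(t)\big(x_i - \min_{j \in \cN(i)}\{x_j + w_{ij}\}\big)$ for $i \notin S$, continuity in $t$ is immediate from Remark \ref{re:TBG}, which guarantees that $\eta(t)$ is continuous and bounded on $[0,+\infty)$; let $\bar{\eta} := \sup_{t \ge 0}\eta(t) < +\infty$ denote its bound.

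The crux is the Lipschitz estimate, and the main obstacle is that the vector field is nonsmooth owing to the pointwise minimum. I would handle this by exploiting that a minimum of finitely many affine functions is nonexpansive. Concretely, for the map $m_i(x) := \min_{j \in \cN(i)}\{x_j + w_{ij}\}$ and any $x, y \in \bR^n$, the elementary inequality $|\min_j a_j - \min_j b_j| \le \max_j |a_j - b_j|$ yields $|m_i(x) - m_i(y)| \le \max_{j \in \cN(i)}|x_j - y_j| \le |x - y|_\infty$, so each $m_i$ is $1$-Lipschitz with respect to the $\ell_\infty$ norm. Consequently $g_i(x) := x_i - m_i(x)$ obeys $|g_i(x) - g_i(y)| \le 2|x-y|_\infty$, and combining this with the bound on $\eta(t)$ gives the uniform estimate $|f(t,x) - f(t,y)|_\infty \le 2\bar{\eta}\,|x - y|_\infty$ for all $t \ge 0$. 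This is precisely the global Lipschitz condition required.

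With both hypotheses in hand, I would apply the global version of the Picard--Lindel\"of theorem (e.g., \cite{khalil}): on any compact interval $[0,T]$ the conditions above furnish a unique solution, and because the Lipschitz constant $2\bar{\eta}$ is independent of $t$ and of the interval, the attendant linear growth bound precludes finite-time escape, so the solution extends uniquely to all of $[0,+\infty)$. I expect the Lipschitz step to be the only genuinely delicate point; once nonexpansiveness of the minimum is established, the extension to the half-line is routine. The only additional care needed is to confirm that the truncated-and-repeated TBG structure of (\ref{eq:TBG}) does not destroy continuity or boundedness of $\eta$ across the switching instants $t = k T_s$, but this is exactly what the properties $\dot{\varepsilon}(T_s) = 0$ and $\varepsilon \in C^2$ guarantee, as noted in Remark \ref{re:TBG}.
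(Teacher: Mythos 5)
Your proposal is correct and follows essentially the same route as the paper: bound $\eta(t)$ using Remark \ref{re:TBG}, show the vector field is globally Lipschitz in the state with constant $2\bar{\eta}$ (the paper obtains the same $2L$ bound, arguing the $1$-Lipschitzness of the pointwise minimum via the minimizers rather than via the inequality $|\min_j a_j - \min_j b_j| \le \max_j |a_j - b_j|$, but this is cosmetic), and invoke the global existence--uniqueness theorem from \cite{khalil}. No gaps.
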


Before moving on, we introduce the following technical result regarding our TBG gain $\eta(t)$.
\begin{lemma}\label{le:practical}
Consider the differential equation
\begin{equation}\label{eq:TBGy}
	\dot{y}(t) = -\eta(t)y(t), y(0) = y_0, t \in [0, T_s]
\end{equation}
where $\eta(t)$ is defined in (\ref{eq:TBG}). Then $y(T_s) = \frac{\delta}{1 + \delta}y_0$ without dependency on the initial state.
\end{lemma}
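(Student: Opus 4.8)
The plan is to treat (\ref{eq:TBGy}) as a scalar linear homogeneous ODE with a continuous, time-varying coefficient and solve it in closed form rather than argue qualitatively. On the interval $[0,T_s]$ we are in the block $k=0$, so by (\ref{eq:TBG}) the coefficient is $\eta(t) = \dot{\varepsilon}(t)/(1-\varepsilon(t)+\delta)$, which Remark \ref{re:TBG} already guarantees is continuous. The equation is therefore Lipschitz in $y$, and the integrating-factor formula
\[
y(t) = y_0 \exp\!\left(-\int_0^t \eta(s)\,ds\right)
\]
furnishes the unique solution.

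The crux is to evaluate the exponent, and the key observation is that $\eta$ is a logarithmic derivative: the numerator $\dot{\varepsilon}(s)$ is exactly the negative of the derivative of the denominator $1-\varepsilon(s)+\delta$. The substitution $u = 1-\varepsilon(s)+\delta$ then turns the integral into $-\int du/u$. Carrying this out and using the initial property $\varepsilon(0)=0$ yields
\[
\int_0^t \eta(s)\,ds = \ln\frac{1+\delta}{1-\varepsilon(t)+\delta},
\]
so that $y(t) = y_0\,(1-\varepsilon(t)+\delta)/(1+\delta)$ for all $t \in [0,T_s]$. Substituting $t=T_s$ and invoking the terminal property $\varepsilon(T_s)=1$ collapses the numerator to $\delta$, which gives $y(T_s) = \frac{\delta}{1+\delta}y_0$, as claimed.

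There is no genuine analytic obstacle here; the only point that requires care is the well-definedness of the integral, i.e.\ ensuring the integrand never blows up. I would address this by noting that the non-decreasing property of $\varepsilon$ together with $\varepsilon(T_s)=1$ forces $\varepsilon(s)\le 1$ on $[0,T_s]$, whence the denominator satisfies $1-\varepsilon(s)+\delta \ge \delta > 0$ throughout the interval; the logarithm is therefore applied to a strictly positive quantity and the substitution is legitimate. Finally, I would emphasize that the resulting multiplicative factor $\frac{\delta}{1+\delta}$ carries no dependence on $y_0$ whatsoever, which is precisely the feature exploited in the PPT construction: each truncated-TBG cycle contracts the local state by the same fixed ratio irrespective of its starting value.
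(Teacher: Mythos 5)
Your proof is correct and arrives at exactly the closed form the paper asserts: the paper simply states that the solution is $y(t) = y_0\bigl(1 - \tfrac{\varepsilon(t)}{1+\delta}\bigr)$, which equals your $y_0\,\tfrac{1-\varepsilon(t)+\delta}{1+\delta}$, and declares it "readily verified." You derive it explicitly via the integrating factor and the logarithmic-derivative substitution, but the substance is the same.
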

Define $e_i(t)$ as the error between $x_i(t)$ the state of $i$ and its stationary value $x_i$ as 
\begin{equation}\label{eq:error}
e_i(t) = x_i(t) - x_i, \forall i \in V.
\end{equation}
With $x_i(t)$ defined in (\ref{eq:protocol}), magnitudes of the largest state error and the least state error are given by
\begin{equation}\label{eq:greatest}
V^+(t) :=  \max\bigg\{0,\underset{i \in V}{\max}\{e_i(t)\}\bigg\}
\end{equation}
and
\begin{equation}\label{eq:least}
V^-(t) :=  \max\bigg\{0,-\underset{i \in V}{\min}\{e_i(t)\}\bigg\},
\end{equation}
respectively. Apparently $V^+(t), V^-(t)\geq 0$ for all $t \geq 0$, and $V^+(t) = V^-(t) = 0$ implies $e_i(t) = 0$ for all $i \in V$, i.e., all states converge to their stationary values at time $t$. 

As $V^+(t)$ and $V^-(t)$ defined above are non-smooth functions, we need the following two sets to calculate their Clarke's generalized derivatives \cite{Zhang2017,clarke1990optimization}. Define $\cK(t)$ as the set comprising nodes of which state errors equal $V^+(t)$,
\begin{equation}\label{eq:K+}
\cK^+(t) = \{i\in V ~|~ e_i(t) = V^+(t) \}.
\end{equation}
Similarly, $\cK^-(t)$, the set comprising nodes whose state errors equal $V^+(t)$, is defined as
\begin{equation}\label{eq:K-}
\cK^-(t) = \{i\in V ~|~ e_i(t) = -V^-(t) \}.
\end{equation}
Both $V^+(t)$ and $V^-(t)$ are non-increasing, per the following two lemmas.
\begin{lemma}\label{le:greatest}
	Suppose Assumption \ref{ass:main} holds, with $\cK^+(t)$ defined in (\ref{eq:K+}), $V^+(t)$ defined in (\ref{eq:greatest}) obeys $\Vd^+(t) \leq 0$ for all $t \geq 0$.
\end{lemma}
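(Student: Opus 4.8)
The plan is to treat $V^+$ as a nonsmooth Lyapunov-type function and bound its generalized derivative along the trajectories of (\ref{eq:protocol}). First I would record three structural facts. (i) Every source keeps zero error: for $i\in S$ we have $\dot x_i=0$ and $x_i(0)=x_i=0$, so $e_i(t)\equiv 0$. (ii) Since $S\neq\emptyset$, at least one source contributes a zero error at every instant, hence $\max_{i\in V}e_i(t)\ge 0$ for all $t$; consequently the outer maximum with $0$ in (\ref{eq:greatest}) is never strictly binding and $V^+(t)=\max_{i\in V}e_i(t)$, with active set $\cK^+(t)$ as in (\ref{eq:K+}). (iii) The right-hand side of (\ref{eq:protocol}) is continuous in $t$, being a minimum of finitely many continuous functions multiplied by the continuous gain $\eta(t)$ of Remark \ref{re:TBG}; therefore each $x_i$, and hence each $e_i$ defined in (\ref{eq:error}), is continuously differentiable.

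Next, since $V^+$ is the pointwise maximum of the $C^1$ functions $\{e_i\}_{i\in V}$, its Clarke generalized derivative along the trajectory is controlled by the currently active nodes, concretely $\dot V^+(t)\le \max_{i\in\cK^+(t)}\dot e_i(t)$, following the nonsmooth calculus used in \cite{Zhang2017,clarke1990optimization}. It therefore suffices to establish the per-node bound $\dot e_i(t)\le 0$ for every $i\in\cK^+(t)$, and then take the maximum over the active set.

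The crux is this per-node estimate, and here the machinery of Definition \ref{def:true} does the work. If $i\in\cK^+(t)$ is a source, then $\dot e_i=0$ by fact (i). If $i$ is a non-source, I would select a true parent $j^\ast\in\cP(i)$, so that the Bellman relation (\ref{eq:bellman}) gives the exact identity $x_i=x_{j^\ast}+w_{ij^\ast}$. Substituting $x_k(t)=e_k(t)+x_k$ into $x_i(t)-\bigl(x_{j^\ast}(t)+w_{ij^\ast}\bigr)$, the stationary parts cancel precisely and leave $e_i(t)-e_{j^\ast}(t)$. Because $i$ attains the maximal error, $e_i(t)=V^+(t)\ge e_{j^\ast}(t)$, whence $x_i(t)\ge x_{j^\ast}(t)+w_{ij^\ast}\ge\min_{j\in\cN(i)}\{x_j(t)+w_{ij}\}$. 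Plugging this into (\ref{eq:protocol}) and invoking $\eta(t)\ge 0$ from Remark \ref{re:TBG} yields $\dot e_i(t)=-\eta(t)\bigl(x_i(t)-\min_{j\in\cN(i)}\{x_j(t)+w_{ij}\}\bigr)\le 0$. Maximizing over $\cK^+(t)$ then delivers $\dot V^+(t)\le 0$.

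The main obstacle I anticipate is the nonsmooth bookkeeping rather than the algebra: rigorously justifying the generalized-derivative inequality $\dot V^+\le\max_{i\in\cK^+}\dot e_i$ for a maximum of finitely many $C^1$ functions whose maximizer set $\cK^+(t)$ may be non-unique, and in particular handling the degenerate instants where $V^+(t)=0$, at which the active set contains every source. This is harmless here because source derivatives vanish and the true-parent estimate only uses $e_i\ge e_{j^\ast}$, which persists when $e_i=0$; but the reduction from $V^+$ non-increasing to the uniform per-node bound must be stated carefully so that the Clarke-derivative step is airtight across the whole active set.
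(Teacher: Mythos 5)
Your proposal is correct and follows essentially the same route as the paper: Clarke's generalized derivative of the pointwise maximum reduced to a per-active-node bound, with the key step being the substitution of a true parent $j^\ast$ via the Bellman identity $x_i = x_{j^\ast}+w_{ij^\ast}$ so that $x_i(t)-\min_{j\in\cN(i)}\{x_j(t)+w_{ij}\}\ge e_i(t)-e_{j^\ast}(t)\ge 0$ for a maximizer $i$. The only cosmetic difference is that you collapse the paper's three-case split ($\cK^+(t)\cap S=\emptyset$, $S=\cK^+(t)$, $S\subsetneq\cK^+(t)$) by observing that source errors vanish identically, so $V^+(t)=\max_{i\in V}e_i(t)$ and sources in the active set contribute zero derivative; this is a harmless and slightly tidier bookkeeping of the same argument.
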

\begin{proof}
According to the Clarke's generalized derivative \cite{Zhang2017,clarke1990optimization}	(detailed proof can be referred to sections 2.2 and 2.3 in \cite{clarke1990optimization}), $\Vd^+(t) = co\{\ed_i(t)~ | ~ i \in \cK^+(t)\}$ when $V^+(t) > 0$ and $\Vd^+(t) = co\{\ed_i(t), 0~ | ~ i \in  \cK^+(t) \}$ when $V^+(t) = 0$.

We need to consider three cases: 1) $\cK^+(t)\cap S = \emptyset$, i.e., $\exists i \in V\setminus S, e_i(t) > 0$; 2) $S = \cK^+(t)$, i.e., $\forall i \in V\setminus S$, $e_i(t) < 0$; 3) $S \subsetneq \cK^+(t)$, i.e., $\forall i \in V\setminus S$, $e_i(t) \leq 0$. 

In the first case, there holds $V^+(t) > 0$ as $\cK^+(t)\cap S = \emptyset$, it follows from (\ref{eq:protocol}) and (\ref{eq:error}) that
\begin{flalign}
&\Vd^+(t)  = \sum_{i \in \cK^+(t)}\lambda_i\eta(t)\big( -x_i(t) + \underset{k \in \cN(i)}{\min}~ \{x_k(t) + w_{ik} \}\big) \nonumber \\
& = \sum_{i \in \cK^+(t)}\lambda_i\eta(t)\big( -x_i - e_i(t) + \underset{k \in \cN(i)}{\min}~ \{x_k(t) + w_{ik} \}\big) \nonumber \\
& = \sum_{i \in \cK^+(t)}\lambda_i\eta(t)\big( -x_j - w_{ij} - e_i(t) + \underset{k \in \cN(i)}{\min}~ \{x_k(t) + w_{ik}\} \big) \label{eq:tr} \\
& \leq \sum_{i \in \cK^+(t)}\lambda_i\eta(t)\big( -x_j - w_{ij} - e_i(t) + x_j(t) + w_{ij}\big) \nonumber \\
& = \sum_{i \in \cK^+(t)}\lambda_i\eta(t)(e_j(t) - e_i(t)) \nonumber \\
& \leq 0 \label{eq:lar}
\end{flalign}
where in (\ref{eq:tr}) we assume $j$ is a true parent node of node $i$, and (\ref{eq:lar}) uses the fact that $\eta(t) > 0, \lambda_i \geq 0$ and $e_i(t) \geq e_j(t)$ for all $j \in V$ as $i \in \cK^+(t)$. 

In the second case, we have $V^+(t) = 0$ as $S = \cK^+(t)$, it follows from (\ref{eq:protocol}) and (\ref{eq:error}) that $\Vd^+(t) = \sum_{i \in S}\lambda_i\ed_i(t) = 0$ with $\lambda_i \geq 0$ and $\sum_{i \in S}\lambda_i = 1$ for $i \in S$.

For case 3), it follows from the above two cases that
\begin{flalign}
\Vd^+(t) = \sum_{i \in \cK^+(t)}\lambda_i\ed_i(t) &= \sum_{k \in S}\lambda_k\ed_k(t) + \sum_{j \in \cK^+(t)\setminus S}\lambda_j\ed_j(t) \nonumber \\
& \leq 0 
\end{flalign}
where $0 \leq \lambda_k, \lambda_j \leq 1$, and thus our proof is complete.
\end{proof}
To prove that $V^-(t)$ is also monotonically non-increasing, we first define $\cP_i(t)$, the set of current parent nodes of node $i$, as the follows:
\begin{equation}\label{eq:parent}
\cP_i(t) = \begin{cases}
\emptyset, & i \in S \\
\arg \underset{j \in \cN(i)}{\min} \{x_j(t) + w_{ij} \}, & i \notin S
\end{cases}.
\end{equation}
\begin{lemma}\label{le:least}
	Suppose Assumption \ref{ass:main}, with $\cK^-(t)$ defined in (\ref{eq:K-}), $V^-(t)$ defined in (\ref{eq:least}) obeys $\Vd^-(t) \leq 0$ for all $t \geq 0$.
\end{lemma}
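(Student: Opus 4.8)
The plan is to mirror the structure of the proof of Lemma~\ref{le:greatest}, but now tracking the least state error $V^-(t)$ rather than the largest, and to exploit the set of current parent nodes $\cP_i(t)$ defined in (\ref{eq:parent}) to handle the $\min$ operation. By Clarke's generalized derivative, when $V^-(t) > 0$ we have $\Vd^-(t) = co\{-\ed_i(t) ~|~ i \in \cK^-(t)\}$, so it suffices to show that $-\ed_i(t) \leq 0$, i.e.\ $\ed_i(t) \geq 0$, for every node $i$ attaining the minimum error (and when $V^-(t) = 0$ we additionally include $0$ in the convex hull, which cannot raise the sign). I would again split into cases according to whether $\cK^-(t)$ meets the source set $S$.

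First I would dispose of the easy cases. If $\cK^-(t) = S$ then $V^-(t) = 0$ and every $\ed_i(t) = 0$ for $i \in S$, so $\Vd^-(t) = 0$. If $\cK^-(t) \cap S = \emptyset$, then $V^-(t) > 0$ and every minimizing node is a non-source node with $e_i(t) = -V^-(t) \leq e_k(t)$ for all $k \in V$. The mixed case $S \subsetneq \cK^-(t)$ would then follow by splitting the convex combination into its source and non-source parts exactly as in case~3 of Lemma~\ref{le:greatest}. So the crux is the second case: showing $\ed_i(t) \geq 0$ whenever $i \notin S$ attains the global minimum error.

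For that case, the plan is to pick a current parent node $j \in \cP_i(t)$, so that $\min_{k\in\cN(i)}\{x_k(t)+w_{ik}\} = x_j(t) + w_{ij}$, and then lower-bound $\ed_i(t)$. Writing $\ed_i(t) = -\eta(t)\big(x_i(t) - x_j(t) - w_{ij}\big)$ and substituting $x_i(t) = x_i + e_i(t)$, $x_j(t) = x_j + e_j(t)$, I would use the Bellman recursion (\ref{eq:bellman}) in the form $x_i \leq x_j + w_{ij}$ (since $j \in \cN(i)$) to obtain
\begin{flalign}
\ed_i(t) &= -\eta(t)\big(x_i + e_i(t) - x_j - e_j(t) - w_{ij}\big) \nonumber \\
&\geq -\eta(t)\big(e_i(t) - e_j(t)\big) \geq 0, \nonumber
\end{flalign}
where the first inequality uses $x_i - x_j - w_{ij} \leq 0$ together with $\eta(t) \geq 0$, and the final inequality uses $e_i(t) \leq e_j(t)$, which holds because $i \in \cK^-(t)$ attains the global minimum error. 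This delivers $-\ed_i(t) \leq 0$ for each minimizing index, and hence $\Vd^-(t) \leq 0$.

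I expect the main obstacle to be the direction of the Bellman inequality: for the largest-error lemma the true parent node $j \in \cP(i)$ gives the exact relation $x_i = x_j + w_{ij}$, whereas here I must instead use a \emph{current} parent node $j \in \cP_i(t)$ (the minimizer of the actual dynamics), for which only the \emph{inequality} $x_i \leq x_j + w_{ij}$ is available from (\ref{eq:bellman}). One must verify that this inequality points the correct way so that dropping the $x_i - x_j - w_{ij}$ term only strengthens the bound on $\ed_i(t)$ from below; this is precisely why $\cP_i(t)$ rather than $\cP(i)$ is the right object to invoke, and it is the one place where the argument is genuinely dual to, rather than a verbatim copy of, the proof of Lemma~\ref{le:greatest}.
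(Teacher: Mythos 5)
Your proposal is correct and follows essentially the same route as the paper's proof: the same Clarke-derivative setup, the same three-case split on how $\cK^-(t)$ meets $S$, and the same key chain in the non-source case, namely picking $j \in \cP_i(t)$, invoking the Bellman inequality $x_i \leq x_j + w_{ij}$, and finishing with $e_i(t) \leq e_j(t)$ for the minimizing node. The point you flag as the "main obstacle" --- that one must use a current parent from $\cP_i(t)$ and only the inequality direction of (\ref{eq:bellman}) --- is exactly how the paper argues in (\ref{eq:con})--(\ref{eq:las}).
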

\begin{proof}
	Again, from Clarke's generalized derivative, $\Vd^-(t) = co\{-\ed_i(t)~ | ~ i \in \cK^-(t)\}$ when $V^-(t) > 0$, and $\Vd^-(t) = co\{-\ed_i(t), 0~ | ~ i \in \cK^-(t)\}$ when $V^-(t) = 0$. 
	
	We consider the three cases: 1) $\cK^-(t)\cap S = \emptyset$, i.e., $V^-(t) > 0$; 2) $S = \cK^-(t)$; 3) $S \subsetneq \cK^-(t)$.  
	
	For the first case, it follows from (\ref{eq:protocol}) and (\ref{eq:error}) that
	\begin{flalign}
	\Vd^-(t) &= \sum_{i \in \cK^-(t)}\lambda_i\eta(t)\big( x_i(t) - \underset{k \in \cN(i)}{\min}~ \{x_k(t) + w_{ik} \}\big) \nonumber \\
	& = \sum_{i \in \cK^-(t)}\lambda_i\eta(t)\big( x_i + e_i(t) - x_k(t) - w_{ik}  \big) \label{eq:con}  \\
	& \leq \sum_{i \in \cK^-(t)}\lambda_i\eta(t)\big( x_k + w_{ik} + e_i(t) - x_k(t) - w_{ik}\big) \label{eq:us}  \\
	& = \sum_{i \in \cK^-(t)}\lambda_i\eta(t)(e_i(t) - e_k(t)) \nonumber \\
	& \leq 0 \label{eq:las}
	\end{flalign}
	where in (\ref{eq:con}) we assume $k \in \cP_i(t)$, (\ref{eq:us}) uses (\ref{eq:bellman}), and (\ref{eq:las}) uses the fact that $\eta(t) >0, \lambda_i \geq 0$ and $e_i(t) \leq e_j(t)$ for all $j \in V$ as $i \in \cK^-(t)$. 
	
	Then similar to the arguments in Lemma \ref{le:greatest}, there holds $\Vd^-(t) = 0$ in case 2) and $\Vd^-(t) \leq 0$ in case 3), completing our proof. 
\end{proof}
Lemma \ref{le:greatest} and Lemma \ref{le:least} indicate that both the largest overestimate and the least underestimate are non-increasing. Also,
note that both lemmas do not require Assumption \ref{ass:overestimate}. Under Assumption \ref{ass:overestimate}, we get the following appealing result.
\begin{lemma}\label{le:allover}
	Suppose Assumptions \ref{ass:main} and \ref{ass:overestimate} hold. With $x_i(t)$ and $x_i$ defined in (\ref{eq:protocol}) and \ref{eq:bellman}, respectively, there holds $x_i(t) \geq x_i$ for all $i \in V$ and $t \geq 0$. 
\end{lemma}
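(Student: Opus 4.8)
The plan is to reformulate the desired inequality in terms of the non-smooth function $V^-(t)$ defined in (\ref{eq:least}) and then exploit its monotonicity established in Lemma \ref{le:least}. Observe that the claim $x_i(t) \geq x_i$ for all $i \in V$ is, in view of the error definition (\ref{eq:error}), exactly the statement $e_i(t) \geq 0$ for all $i \in V$; and since $\min_{i \in V}\{e_i(t)\} \geq 0$ holds if and only if $V^-(t) = 0$, it suffices to prove $V^-(t) = 0$ for every $t \geq 0$.

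First I would verify the initial condition $V^-(0) = 0$. For every non-source node, Assumption \ref{ass:overestimate} gives $x_i(0) \geq x_i$, hence $e_i(0) \geq 0$. For every source node we have $x_i = 0$ by (\ref{eq:bellman}) and $x_i(0) = 0$ by the initialization in (\ref{eq:protocol}), so $e_i(0) = 0$. Thus $e_i(0) \geq 0$ for all $i \in V$, which gives $\min_{i \in V}\{e_i(0)\} \geq 0$ and therefore $V^-(0) = \max\{0, -\min_{i \in V}\{e_i(0)\}\} = 0$.

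Next I would combine this with Lemma \ref{le:least}, which asserts $\Vd^-(t) \leq 0$ for all $t \geq 0$, i.e., $V^-$ is non-increasing. Together with the fact that $V^-(t) \geq 0$ for all $t \geq 0$ (immediate from its definition as a maximum with $0$), a non-increasing function starting at $V^-(0) = 0$ cannot become positive, so it is trapped at $0$: $0 \leq V^-(t) \leq V^-(0) = 0$ for all $t \geq 0$, whence $V^-(t) = 0$.

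Finally, $V^-(t) = 0$ means $\min_{i \in V}\{e_i(t)\} \geq 0$, i.e., $e_i(t) \geq 0$ and hence $x_i(t) \geq x_i$ for all $i \in V$ and all $t \geq 0$, as required. There is no real obstacle here once Lemma \ref{le:least} is in hand; the only mild subtlety I would flag is the monotonicity argument through a non-smooth function, since $V^-$ is merely locally Lipschitz, so I would interpret ``non-increasing'' via the Clarke generalized derivative exactly as in Lemma \ref{le:least} rather than a classical derivative. No further work beyond invoking that lemma is needed.
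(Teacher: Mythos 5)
Your proposal is correct and follows exactly the same route as the paper: verify $V^-(0)=0$ from Assumption \ref{ass:overestimate} (and the source initialization), then combine $V^-(t)\geq 0$ with the monotonicity $\Vd^-(t)\leq 0$ from Lemma \ref{le:least} to trap $V^-$ at zero, which is equivalent to $x_i(t)\geq x_i$ for all $i$ and $t$. Your added care about interpreting the monotonicity via the Clarke generalized derivative is a reasonable elaboration of what the paper leaves implicit, but the argument is the same.
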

\begin{proof}
	In this case $V^-(0) = 0$. As $V^-(t) \geq 0$ by its definition and $\Vd^-(t) \leq 0$ by Lemma \ref{le:least}, we have $V^-(t) = 0$ for all $t \geq 0$, and thus our claim follows. 
\end{proof}
The following lemma characterizes the upper bound for the state error of (\ref{eq:protocol}). The basic pattern is that, once the state error of the true parent of some node is upper bounded at a given time, $T_s$ time later the node's own state error will drop below an upper bound. Both upper bounds are defined by the maximum initial state and the adjustable parameter $\delta$ in (\ref{eq:TBG}). To this end, we first define the maximum initial state error as
\begin{equation}\label{eq:maxerror}
e_{\max}(0) = \max_{i \in V} \{e_{i}(0)  \}.
\end{equation}
\begin{lemma}\label{le:bound}
	Suppose Assumptions \ref{ass:main} and \ref{ass:overestimate} hold, consider (\ref{eq:protocol}), with $e_i(t)$, $T_s$ and $\cF_\ell$ defined in (\ref{eq:error}), (\ref{eq:TBG}) and Definition \ref{def:longest}, respectively. Then for all $i \in \cF_\ell$ with $\ell \in \{1,\cdots, \cD(G)-1 \}$ there holds for all $t \geq \ell T_s$
	\begin{equation}\label{eq:instant}
		e_i(t) \leq \ell\frac{\delta}{1+\delta}e_{\max}(0),
	\end{equation}
where $0 < \delta << 1$ is introduced in (\ref{eq:TBG}).
\end{lemma}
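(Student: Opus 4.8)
The plan is to prove the bound by induction on $\ell$, converting the node dynamics into a scalar comparison inequality that can be integrated over one period $[kT_s,(k+1)T_s]$ via Lemma~\ref{le:practical}. Throughout I write $\ed_i(t)=\xd_i(t)$ (since $x_i$ is constant) and abbreviate the target level as $c_\ell := \ell\frac{\delta}{1+\delta}e_{\max}(0)$, so that $c_0=0$. The Remark following Definition~\ref{def:longest} guarantees that every $i\in\cF_\ell$ has a true parent $j\in\cF_{\ell-1}$; this parent is exactly the quantity the induction feeds on.

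First I would establish the core differential inequality. Fix $i\in\cF_\ell$ and a true parent $j\in\cF_{\ell-1}$, so that $x_i=x_j+w_{ij}$ with $j\in\cN(i)$. Because $j$ is an admissible neighbour, $\min_{k\in\cN(i)}\{x_k(t)+w_{ik}\}\leq x_j(t)+w_{ij}$, and substituting $x_i(t)=x_i+e_i(t)$, $x_j(t)=x_j+e_j(t)$ into (\ref{eq:protocol}) yields, for all $t\geq 0$, the bound $\ed_i(t)\leq -\eta(t)\big(e_i(t)-e_j(t)\big)$, where I use $\eta(t)\geq0$ (Remark~\ref{re:TBG}). This is the crucial step: it decouples node $i$ from the nonsmooth $\min$ and exposes $e_j$ as a forcing term.

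Next I would integrate this inequality over a single period. Setting $z(t)=e_i(t)-c_{\ell-1}$ and using the inductive hypothesis $e_j(t)\leq c_{\ell-1}$ for $t\geq(\ell-1)T_s$, the inequality becomes $\dot z(t)\leq-\eta(t)z(t)$. A standard comparison argument then gives $z(\ell T_s)\leq z((\ell-1)T_s)\,e^{-\int_{(\ell-1)T_s}^{\ell T_s}\eta(s)\,ds}$, and by the periodicity of $\eta$ together with Lemma~\ref{le:practical} the multiplier equals exactly $\frac{\delta}{1+\delta}$. To bound $z((\ell-1)T_s)$ I would invoke Lemma~\ref{le:greatest}: $e_i(t)\leq V^+(t)\leq V^+(0)=e_{\max}(0)$, the last equality using Assumption~\ref{ass:overestimate}, which forces $e_{\max}(0)\geq0$. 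Combining these gives $e_i(\ell T_s)\leq \frac{\delta}{1+\delta}\big(e_{\max}(0)-c_{\ell-1}\big)+c_{\ell-1}=\frac{\delta}{1+\delta}e_{\max}(0)+\big(1-\frac{\delta}{1+\delta}\big)c_{\ell-1}$, and a one-line computation shows this is $\leq c_\ell$ (indeed the bracketed factor equals $\ell-\frac{\delta}{1+\delta}(\ell-1)\leq\ell$). The base case $\ell=1$ is identical, with $j\in S$, $e_j\equiv0$, and $c_0=0$.

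Finally, the statement asks for the bound at every $t\geq\ell T_s$, not merely at the grid point $t=\ell T_s$, so the last step is an invariance (barrier) argument. Once $e_i(\ell T_s)\leq c_\ell$, I would argue that $c_\ell$ can never be exceeded: whenever $e_i(t)\geq c_\ell$ one has $e_i(t)-e_j(t)\geq c_\ell-c_{\ell-1}=\frac{\delta}{1+\delta}e_{\max}(0)\geq0$, hence $\ed_i(t)\leq0$, and integrating across any first crossing time produces a contradiction; this closes the induction. The main obstacle I anticipate is precisely this ``for all $t$'' claim: the per-node error $e_i$ need not be monotone, so the per-period contraction only controls $e_i$ at the points $\{\ell T_s\}$, and one genuinely needs the sign structure of the forcing (parent strictly below the child's threshold) to certify invariance between grid points. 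A secondary technical point is justifying the comparison principle, which is legitimate here because the unique solution of (\ref{eq:protocol}) is $C^1$ in $t$ (Theorem~\ref{the:unique}), even though the vector field is nonsmooth in the state.
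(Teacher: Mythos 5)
Your proposal is correct and follows essentially the same route as the paper's proof: induction on $\ell$, the differential inequality $\ed_i(t)\leq-\eta(t)\big(e_i(t)-e_j(t)\big)$ through the true parent $j\in\cF_{\ell-1}$, the per-period contraction factor $\frac{\delta}{1+\delta}$ from Lemma~\ref{le:practical}, and Lemma~\ref{le:greatest} to bound the starting value by $e_{\max}(0)$. The only (harmless) deviation is the tail step for $t\geq\ell T_s$: you use a barrier/invariance argument at the level $c_\ell$, whereas the paper simply reapplies the contraction estimate on each subsequent interval $[(\ell+1)T_s,(\ell+2)T_s],\dots$; both close the induction.
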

As the lower bound of the state error has been determined in Lemma \ref{le:allover} under Assumption \ref{ass:overestimate}, the following theorem completes this subsection by demonstrating the PPT stabilization of (\ref{eq:protocol}).
\begin{theorem}\label{the:ppt}
	Suppose Assumptions \ref{ass:main} and \ref{ass:overestimate} hold, consider (\ref{eq:protocol}), with $e_i(t)$ and $T_s$ defined in (\ref{eq:error}) and Definition \ref{def:longest}, respectively. Then for all $i \in V$ there holds for all $t \geq (\cD(G) - 1)T_s$
	\begin{equation}\label{eq:practical}
	|e_i(t)| \leq \ell\frac{\delta}{1+\delta}e_{\max}(0),
	\end{equation}
	where $0 < \delta << 1$ is introduced in (\ref{eq:TBG}).
\end{theorem}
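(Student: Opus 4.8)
The plan is to obtain the claim as a direct consequence of the two-sided estimate assembled from Lemma~\ref{le:allover} and Lemma~\ref{le:bound}, after observing that the single threshold $(\cD(G)-1)T_s$ dominates every node's individual threshold $\ell T_s$. Concretely, by the partition property~(\ref{eq:F}) each node $i \in V$ lies in exactly one layer $\cF_\ell$ with $\ell \in \{0,\cdots,\cD(G)-1\}$, so it suffices to bound $|e_i(t)|$ layer by layer and then take the common time threshold.

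First I would dispose of the source layer $\cF_0 = S$. For $i \in S$ the state is anchored by~(\ref{eq:protocol}), so $x_i(t) = 0 = x_i$ and hence $e_i(t) = 0$ for all $t \geq 0$; this is exactly the asserted bound~(\ref{eq:practical}) in its degenerate instance $\ell = 0$, holding for every $t$.

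Next I would treat the non-source layers $\cF_\ell$ with $\ell \in \{1,\cdots,\cD(G)-1\}$. Here Assumption~\ref{ass:overestimate} is in force, so Lemma~\ref{le:allover} gives $x_i(t) \geq x_i$, i.e.\ $e_i(t) \geq 0$; this removes the absolute value, reducing the problem to an upper bound on $e_i(t)$ alone. Lemma~\ref{le:bound} supplies precisely that upper bound, $e_i(t) \leq \ell\frac{\delta}{1+\delta}e_{\max}(0)$, valid for every $t \geq \ell T_s$. Since $\ell \leq \cD(G)-1$ for any such node, we have $\ell T_s \leq (\cD(G)-1)T_s$, so as soon as $t \geq (\cD(G)-1)T_s$ the hypothesis $t \geq \ell T_s$ of Lemma~\ref{le:bound} is met simultaneously for every node, and~(\ref{eq:practical}) follows at once.

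Combining the two cases yields~(\ref{eq:practical}) for all $i \in V$ and all $t \geq (\cD(G)-1)T_s$. The step I expect to carry the (modest) weight is not a computation but the bookkeeping: verifying that the common threshold $(\cD(G)-1)T_s$ absorbs each layer's individual threshold $\ell T_s$, and recognizing the source nodes as the $\ell = 0$ instance of the same estimate. Everything else is inherited verbatim, the lower bound (hence the disappearance of the absolute value) from Lemma~\ref{le:allover} and the upper bound from Lemma~\ref{le:bound}, so the theorem is essentially the corollary that packages these two lemmas into a single uniform-in-$i$ statement.
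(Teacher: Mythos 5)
Your proposal is correct and matches the paper's own (one-line) proof, which simply cites Lemma~\ref{le:allover} for the lower bound and Lemma~\ref{le:bound} for the upper bound. Your additional bookkeeping---handling the source layer $\cF_0$ explicitly and noting that the uniform threshold $(\cD(G)-1)T_s$ dominates each layer's individual threshold $\ell T_s$---is exactly the detail the paper leaves implicit.
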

\begin{proof}
	This is a direct result of Lemma \ref{le:allover} and Lemma \ref{le:bound}.
\end{proof}
Under the PPT control strategy, each node's state will converge to the neighborhood of its stationary value within the settling time. Such settling time is prescribed without depending on the initial state. Though a state disagreement always exists after the settling time, one can adjust the parameter $\delta$ in the TBG to reduce the discrepancy to a desired level.

\subsection{The pre-specified finite time control strategy}\label{sec:pt}
A gap between the state and its stationary value always exists under the PPT control strategy  introduced in the previous subsection. To remedy such an imperfection, in this subsection the PT control strategy is presented such that the state error will diminish exactly to zero within a presribed time.

Similar to the PPT control strategy, the PT control strategy is achieved through replacing the feedback gain $\eta$ in (\ref{eq:practical-protocol}) with a time-varying scaling function $\bar{\eta}(t)$ such that (\ref{eq:practical-protocol}) becomes 
\begin{equation}\label{eq:prespecified}
	\xd_i(t) = 
	\begin{cases}
		0, & i \in S\\
		-\bar{\eta}(t)\left(x_i(t) - \min_{j \in \cN(i)}\{x_j(t)+w_{ij}\}\right), & i \notin S
	\end{cases}
\end{equation}
where 
\begin{equation}\label{eq:ebar}
\bar{\eta}(t) = 
\begin{cases}
\gamma + 2\frac{\dot{\rho}(t)}{\rho(t)}, & t \in [0, \bar{T}_s) \\
0 , & t \geq T_s
\end{cases}
\end{equation}
with $\gamma > 0$ and $\rho(t)$ obeying 
\begin{equation}\label{eq:rho}
	\rho(t) = 
		\frac{\bar{T}_s^{1 + h}}{(\bar{T}_s - t)^{1 + h}}, t \in [0, \bar{T}_s)
\end{equation}
where $\bar{T}_s$ is a prescribed time constant and $h > 1$ ($h \in \mathbb{Z}_+$) is user-chosen constant.
\begin{remark}\label{re:prespecified}
Unlike \cite{wang2018leader}, where the time-varying scaling function is constructed via 
reusing the first bullet of (\ref{eq:ebar}), 
the time-varying scaling function adopted in this paper reduces to zero constantly after the prescribed time. As will be shown later, this setup ensures the continuity of $\xd_i(t)$ and thus the existence of the solution to (\ref{eq:prespecified}).
\end{remark}
We first give a generalized result extended from Lemma 2 in \cite{wang2018leader}. 
\begin{lemma}\label{le:limi}
Consider the differential equation
\begin{equation}\label{eq:y}
\dot{y}(t) = -\left(\gamma + \alpha\frac{\dot{\rho}(t)}{\rho(t)} \right)y(t), y(0) = y_0, t \in [0, \bar{T}_s)
\end{equation}
where $\rho(t)$ is defined in (\ref{eq:rho}), $\gamma, \alpha > 0$. Then $y(t) = \rho^{-\alpha}(t)e^{-\gamma t}y_0$ for $t \in [0, \bar{T}_s)$.
\end{lemma}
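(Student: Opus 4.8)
The plan is to treat (\ref{eq:y}) as what it plainly is: a scalar linear homogeneous ODE $\dot y = a(t)\,y$ with time-varying coefficient $a(t) = -\bigl(\gamma + \alpha\,\dot\rho(t)/\rho(t)\bigr)$, and to establish the closed form either by direct integration or by verifying the proposed candidate. First I would confirm that $a(t)$ is continuous on $[0,\bar T_s)$: from (\ref{eq:rho}) we have $\rho(t) = \bar T_s^{1+h}/(\bar T_s - t)^{1+h} > 0$ for every $t \in [0,\bar T_s)$, so $\rho$ is $C^1$ and strictly positive there, which makes $\dot\rho/\rho$ and hence $a(t)$ continuous on the half-open interval. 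Consequently the initial value problem admits a unique solution by the standard existence–uniqueness theorem for linear ODEs with continuous coefficients (on any compact subinterval $[0,T]\subset[0,\bar T_s)$ the coefficient is bounded, and the solution extends to the full interval), so it suffices to exhibit one solution and invoke uniqueness.

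Next I would separate variables: dividing (\ref{eq:y}) by $y$ and integrating from $0$ to $t$ gives
\[
\ln\frac{y(t)}{y_0} = -\gamma t - \alpha\bigl(\ln\rho(t) - \ln\rho(0)\bigr).
\]
The one arithmetic point that matters is that $\rho(0) = \bar T_s^{1+h}/\bar T_s^{1+h} = 1$, so $\ln\rho(0) = 0$ and the constant of integration vanishes automatically. Exponentiating then yields $y(t) = \rho^{-\alpha}(t)e^{-\gamma t}y_0$, exactly the claimed expression.

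Equivalently — and this is the route I would actually present, since it sidesteps any fuss over the sign of $y$ inside the logarithm — I would substitute the candidate $y(t) = \rho^{-\alpha}(t)e^{-\gamma t}y_0$ directly into (\ref{eq:y}). Differentiating the product gives
\[
\dot y = y_0\,\rho^{-\alpha}e^{-\gamma t}\Bigl(-\alpha\tfrac{\dot\rho}{\rho} - \gamma\Bigr) = -\Bigl(\gamma + \alpha\tfrac{\dot\rho}{\rho}\Bigr)y,
\]
which matches the right-hand side, while $y(0) = \rho^{-\alpha}(0)\,y_0 = y_0$ matches the initial condition. By the uniqueness established above, this is the solution on all of $[0,\bar T_s)$.

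There is no genuine obstacle here: the lemma is a direct generalization of Lemma 2 in \cite{wang2018leader}, obtained by replacing the fixed gain coefficient with an arbitrary $\alpha > 0$, and the argument is otherwise unchanged. The only items demanding a little care are confirming $\rho(0) = 1$ (so the normalization is automatic) and noting that the coefficient remains continuous up to but not including $\bar T_s$, which is what guarantees the closed form holds on the entire half-open interval rather than merely locally.
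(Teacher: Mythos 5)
Your proposal is correct and is essentially the same argument as the paper's: the paper multiplies (\ref{eq:y}) by the integrating factor $\rho^{\alpha}(t)$ to get $\frac{d}{dt}\left(\rho^{\alpha}(t)y(t)\right) = -\gamma\rho^{\alpha}(t)y(t)$ and then integrates, using $\rho(0)=1$, which is the same elementary computation as your separation-of-variables/direct-verification route. No gap; both hinge on the continuity and positivity of $\rho$ on $[0,\bar{T}_s)$ and the normalization $\rho(0)=1$.
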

By definition of $\rho(t)$, Lemma \ref{le:limi}
implies that
\begin{equation}
\lim_{t \rightarrow \bar{T}_s^-} y(t) = 0.
\end{equation}
Now let the state error $e_i(t)$, the greatest overestimate $V^+(t)$ and the least underestimate $V^-(t)$ be in the form of (\ref{eq:error}),  (\ref{eq:greatest}) and (\ref{eq:least}), respectively, with $x_i(t)$ there being replaced by the one defined in (\ref{eq:prespecified}). The following lemma summarizes their properties under the PT control strategy, over the interval $[0, \bar{T}_s)$.
\begin{lemma}\label{le:alloverpt}
	Suppose Assumption \ref{ass:main} holds, $V^+(t)$ and $V^-(t)$ defined above obey $\Vd^+(t) \leq 0$ and $\Vd^-(t) \leq 0$ for $t \in [0, \bar{T}_s)$. Moreover, suppose Assumption \ref{ass:overestimate} also holds, then $x_i(t) \geq x_i$ for all $i \in V$ and $t \in [0, \bar{T}_s)$.
\end{lemma}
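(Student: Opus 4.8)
The plan is to notice that the PT dynamics (\ref{eq:prespecified}) differ from the PPT dynamics (\ref{eq:protocol}) only in replacing the gain $\eta(t)$ by $\bar{\eta}(t)$, and that the proofs of Lemma \ref{le:greatest}, Lemma \ref{le:least} and Lemma \ref{le:allover} invoked the feedback gain solely through its positivity. Consequently the entire statement reduces to verifying that $\bar{\eta}(t) > 0$ on $[0,\bar{T}_s)$ and then rerunning those three arguments essentially verbatim with the substitution $\eta(t) \mapsto \bar{\eta}(t)$.

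First I would establish positivity of the gain. Taking logarithms in (\ref{eq:rho}) gives $\ln\rho(t) = (1+h)\ln\bar{T}_s - (1+h)\ln(\bar{T}_s - t)$, so that
\[
\frac{\dot{\rho}(t)}{\rho(t)} = \frac{1+h}{\bar{T}_s - t}, \qquad t \in [0,\bar{T}_s).
\]
Since $\bar{T}_s - t > 0$ on this interval and $h > 1$, this ratio is strictly positive, whence $\bar{\eta}(t) = \gamma + 2\,\dot{\rho}(t)/\rho(t) > \gamma > 0$ for all $t \in [0,\bar{T}_s)$. I would also record that $\bar{\eta}(t)$ is continuous and finite on every compact subinterval of $[0,\bar{T}_s)$, so that the Clarke generalized-derivative computations used below remain well posed there.

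Next, to obtain $\dot{V}^+(t) \le 0$ and $\dot{V}^-(t) \le 0$, I would observe that $V^+(t)$ and $V^-(t)$ are defined exactly as in (\ref{eq:greatest})--(\ref{eq:least}), and the $\min$-structure on the right-hand side of (\ref{eq:prespecified}) is identical to that of (\ref{eq:protocol}). Hence the three-case Clarke-derivative arguments in the proofs of Lemma \ref{le:greatest} and Lemma \ref{le:least} carry over line by line; in particular the key inequalities (\ref{eq:lar}) and (\ref{eq:las}) go through unchanged, because the single property they invoke, namely positivity of the feedback gain, has just been verified for $\bar{\eta}(t)$. For the second assertion I would then repeat the argument of Lemma \ref{le:allover}: under Assumption \ref{ass:overestimate} every initial error satisfies $e_i(0)\ge 0$, so $\min_{i\in V}e_i(0)\ge 0$ and hence $V^-(0)=0$; combining $V^-(t)\ge 0$, $V^-(0)=0$ and $\dot{V}^-(t)\le 0$ forces $V^-(t)=0$ throughout $[0,\bar{T}_s)$, i.e. $e_i(t)\ge 0$ and thus $x_i(t)\ge x_i$ for all $i\in V$ and all $t\in[0,\bar{T}_s)$.

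There is no genuine technical obstacle here, since the lemma is the PT-counterpart of results already established for the PPT gain, and everything hinges on the single fact that $\bar{\eta}(t)$ stays positive on $[0,\bar{T}_s)$. The only point requiring care is that, unlike $\eta(t)$, the gain $\bar{\eta}(t)$ is \emph{unbounded} as $t\to\bar{T}_s^-$; this does not affect the sign arguments but it is precisely why the conclusion is restricted to the open interval $[0,\bar{T}_s)$ rather than to all of $[0,+\infty)$, and why the behavior at and beyond $\bar{T}_s$ must be handled separately.
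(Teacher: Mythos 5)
Your proposal is correct and follows essentially the same route as the paper, whose proof is a one-line reduction: it notes that $\bar{\eta}(t)$ is positive (the paper states $\bar{\eta}(t)\geq 1$) on $[0,\bar{T}_s)$ and then invokes the proofs of Lemma \ref{le:greatest} through Lemma \ref{le:allover} verbatim. Your explicit computation of $\dot{\rho}(t)/\rho(t)=(1+h)/(\bar{T}_s-t)$ and your remark on the unboundedness of the gain near $\bar{T}_s$ merely make the same argument more self-contained.
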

\begin{proof}
	As $\bar{\eta}(t) \geq 1$ over $[0, \bar{T}_s)$, the proof follows directly from those of Lemma \ref{le:greatest} to Lemma \ref{le:allover}.
\end{proof}
With the lower bound of the state being determined in Lemma \ref{le:alloverpt}, the left-hand limit of each state at $\bar{T}_s$ is given in the following lemma, as the first step to calculate the left-hand limit of the derivative of each state.
\begin{lemma}\label{le:leftlimit}
	Suppose Assumptions \ref{ass:main} and \ref{ass:overestimate} hold, consider (\ref{eq:prespecified}), with $x_i$ defined in (\ref{eq:bellman}), for all $i \in V\setminus S$, there holds
	\begin{equation}\label{eq:leftlimit}
	\lim_{t \rightarrow \bar{T}_s^-} x_i(t) = x_i.
	\end{equation}
\end{lemma}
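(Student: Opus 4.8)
The plan is to prove the equivalent statement that the error $e_i(t)=x_i(t)-x_i$ tends to $0$ as $t\to\bar{T}_s^-$ for every $i\in V\setminus S$. By Lemma~\ref{le:alloverpt} we already know $e_i(t)\geq 0$ on $[0,\bar{T}_s)$, so it suffices to build an upper bound on $e_i(t)$ that vanishes at $\bar{T}_s$. For each non-source $i$ I would fix one true parent $j\in\cP(i)$, so that $x_i=x_j+w_{ij}$. Exactly as in the first case of the proof of Lemma~\ref{le:greatest}, the inequality $\min_{k\in\cN(i)}\{x_k(t)+w_{ik}\}\leq x_j(t)+w_{ij}=x_i+e_j(t)$ together with $\bar{\eta}(t)\geq 1>0$ on $[0,\bar{T}_s)$ yields the scalar differential inequality $\dot{e}_i(t)\leq-\bar{\eta}(t)\big(e_i(t)-e_j(t)\big)$. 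I would then induct on the layers $\cF_\ell$ of Definition~\ref{def:longest}, showing $\lim_{t\to\bar{T}_s^-}e_i(t)=0$ for all $i\in\cF_\ell$: by the Remark following Definition~\ref{def:longest} the fixed parent $j$ of $i\in\cF_\ell$ lies in $\cF_{\ell-1}$, and $\cup_\ell\cF_\ell=V$, so the induction reaches every node.

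For the base case $\ell=1$ the parent $j\in S=\cF_0$ satisfies $e_j\equiv 0$, so the comparison lemma applied to the inequality above gives $0\leq e_i(t)\leq y(t)$, where $y$ solves $\dot{y}=-\bar{\eta}(t)y$, $y(0)=e_i(0)$; since $\bar{\eta}=\gamma+2\dot{\rho}/\rho$, Lemma~\ref{le:limi} with $\alpha=2$ gives $y(t)=\rho^{-2}(t)e^{-\gamma t}e_i(0)\to 0$. For the inductive step I would treat $e_j$ as a known forcing term and integrate the inequality with the integrating factor $\mu(t)=e^{\gamma t}\rho^2(t)$, which satisfies $\mu'(t)=\mu(t)\bar{\eta}(t)$, $\mu(0)=1$ (because $\rho(0)=1$) and $\mu(t)\to\infty$ as $t\to\bar{T}_s^-$. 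This produces
\begin{equation*}
0\leq e_i(t)\leq\frac{e_i(0)}{\mu(t)}+\frac{1}{\mu(t)}\int_0^t\mu'(s)\,e_j(s)\,ds ,
\end{equation*}
whose first term vanishes because $\mu(t)\to\infty$.

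The crux, and the step I expect to be the main obstacle, is showing that the convolution term $\tfrac{1}{\mu(t)}\int_0^t\mu'(s)e_j(s)\,ds$ vanishes even though the weight $\mu'$ blows up near $\bar{T}_s$. Since $\int_0^t\mu'(s)\,ds=\mu(t)-1$, this term is a weighted average of $e_j(s)$ whose weights $\mu'(s)/\mu(t)$ concentrate near $s=t$, so together with the inductive hypothesis $e_j(s)\to 0$ it is a Toeplitz/Ces\`{a}ro-type limit. Concretely, given $\epsilon>0$ I would pick $\delta>0$ with $e_j(s)<\epsilon$ for $s\in(\bar{T}_s-\delta,\bar{T}_s)$ and split the integral at $\bar{T}_s-\delta$: the lower piece is at most $\tfrac{e_{\max}(0)}{\mu(t)}\,\mu(\bar{T}_s-\delta)\to 0$ (using $e_j\leq V^+(0)=e_{\max}(0)$), while the upper piece is at most $\tfrac{\epsilon}{\mu(t)}\big(\mu(t)-\mu(\bar{T}_s-\delta)\big)\leq\epsilon$. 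Hence $\limsup_{t\to\bar{T}_s^-}e_i(t)\leq\epsilon$ for every $\epsilon>0$, so $e_i(t)\to 0$, closing the induction; squeezing with $e_i(t)\geq 0$ gives $\lim_{t\to\bar{T}_s^-}x_i(t)=x_i$.
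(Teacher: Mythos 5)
Your proof is correct, and its overall skeleton --- squeezing against $e_i(t)\geq 0$ from Lemma~\ref{le:alloverpt}, induction along the true-parent layers, and a base case solved exactly via Lemma~\ref{le:limi} with $\alpha=2$ --- coincides with the paper's. The inductive step, however, takes a genuinely different route. The paper argues by contradiction: assuming $e_i(t)\not\to 0$, it uses $e_j(t)<\epsilon_2$ near $\bar T_s$ to obtain $\dot e_i(t)\leq-\bar\eta(t)\left(e_i(t)-\epsilon_2\right)$, compares with the shifted homogeneous equation (again through Lemma~\ref{le:limi}), and contradicts the assumed $\epsilon_1$. You instead write the Duhamel-type bound $e_i(t)\leq e_i(0)/\mu(t)+\mu(t)^{-1}\int_0^t\mu'(s)e_j(s)\,ds$ with $\mu(t)=e^{\gamma t}\rho^2(t)$ and kill the convolution term by a Toeplitz/Ces\`{a}ro split, using the monotonicity $e_j(s)\leq V^+(0)$ from Lemma~\ref{le:alloverpt} on the lower piece. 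Both arguments hinge on the same two facts ($e_j\to 0$ plus contraction of the homogeneous flow), but yours is direct rather than by contradiction and yields an explicit $\limsup_{t\to\bar T_s^-}e_i(t)\leq\epsilon$ estimate, while the paper's avoids integrals entirely and reuses the closed form of Lemma~\ref{le:limi} verbatim. One small imprecision: an arbitrarily fixed true parent $j$ of $i\in\cF_\ell$ need not lie in $\cF_{\ell-1}$ (in Figure~\ref{fig:example1}, node $9\in\cF_0$ is a true parent of node $6\in\cF_2$); it lies in some $\cF_{\ell'}$ with $\ell'\leq\ell-1$, so either pick the parent in $\cF_{\ell-1}$ whose existence the Remark guarantees, or phrase the induction as strong induction on $\ell$ --- either repair is immediate.
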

Though Lemma \ref{le:leftlimit} shows that each state converges to its stationary value when $t$ approaches the prescribed time $\bar{T}_s$, to demonstrate the finite time convergence of (\ref{eq:prespecified}) we still need to prove the continuity of each state on $[0, +\infty)$, a simple and feasible way is to show the continuity of its derivative, i.e., prove that the left-hand limit of $\xd_i(t)$ at $\bar{T}_s$ is 0. 

The following lemma first proves the continuity of derivatives of states for nodes in $\cF_1$, while furnishing the upper bound of state errors for the remaining nodes over $[0, \bar{T}_s)$.
\begin{lemma}\label{le:continuity}
	Suppose Assumptions \ref{ass:main} and \ref{ass:overestimate} hold, consider (\ref{eq:prespecified}), with $e_{\max}(0)$ the largest initial state error, $\rho(t)$ and $\cF_\ell$ defined in (\ref{eq:rho}) and Definition \ref{def:longest}, respectively, there holds
	\begin{equation}\label{eq:leftdot}
	\lim_{t \rightarrow \bar{T}_s^-} \xd_i(t) = \ed_i(t) = 0,~ \forall i \in \cF_1
	\end{equation}
and
\begin{equation}\label{eq:upper}
	e_i(t) \leq \frac{1}{\rho^{2(1 - \frac{1}{k}) }(t) }e^{-\gamma(1 - \frac{1}{k})t}ke_{\max}(0),~ \forall i \in \cF_\ell
\end{equation}
for $\ell \in \{2,\cdots, \cD(G) - 1 \}$ and some $k \geq 2$.
\end{lemma}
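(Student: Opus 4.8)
The plan is to handle the two assertions by different devices: an exact reduction for $\cF_1$, and a cascade/comparison induction for the deeper layers. For $i \in \cF_1$, Lemma~\ref{le:alloverpt} gives $x_k(t) \ge x_k$ for every neighbour $k$, so by (\ref{eq:bellman}) each term obeys $x_k(t)+w_{ik} \ge x_k + w_{ik} \ge x_i$, while the source true parent $j \in \cF_0 = S$ guaranteed by the remark following Definition~\ref{def:longest} attains $x_j(t)+w_{ij} = w_{ij} = x_i$ exactly (recall $x_j(t) \equiv 0$ for $j \in S$). Hence $\min_{k \in \cN(i)}\{x_k(t)+w_{ik}\} \equiv x_i$ on $[0,\bar{T}_s)$ and (\ref{eq:prespecified}) collapses to the exact linear equation $\ed_i(t) = -\bar{\eta}(t)e_i(t)$. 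I would then invoke Lemma~\ref{le:limi} with $\alpha = 2$ to obtain $e_i(t) = \rho^{-2}(t)e^{-\gamma t}e_i(0)$, so that $\xd_i(t) = -\bar{\eta}(t)\rho^{-2}(t)e^{-\gamma t}e_i(0)$. Substituting $\dot{\rho}/\rho = (1+h)/(\bar{T}_s - t)$ and $\rho^{-2}(t) = \bar{T}_s^{-2(1+h)}(\bar{T}_s - t)^{2(1+h)}$ from (\ref{eq:rho}), the only contribution not already carrying the factor $(\bar{T}_s-t)^{2(1+h)}$ is the one from $2\dot{\rho}/\rho$, which scales like $(\bar{T}_s-t)^{2h+1}$; since $h > 1$ this exponent is positive, so $\xd_i(t) \to 0$ as $t \to \bar{T}_s^-$. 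Because $\bar{\eta} \equiv 0$ for $t \ge \bar{T}_s$ by (\ref{eq:ebar}), this matches $\xd_i(\bar{T}_s) = 0$ and establishes the continuity (\ref{eq:leftdot}).

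For $i \in \cF_\ell$ with $\ell \ge 2$ I would argue by induction on $\ell$. Fix a true parent $j \in \cF_{\ell-1}$ (again from the remark following Definition~\ref{def:longest}); using $\min_{k}\{x_k(t)+w_{ik}\} \le x_j(t)+w_{ij} = x_i + e_j(t)$ together with $x_i = x_j + w_{ij}$ yields the cascade inequality $\ed_i(t) \le -\bar{\eta}(t)\big(e_i(t) - e_j(t)\big)$, understood in the generalized sense used in Lemmas~\ref{le:greatest} and~\ref{le:least}. Feeding in the inductive bound $g_j(t)$ on $e_j(t)$ and comparing against the scalar equation $\dot{u} = -\bar{\eta}u + \bar{\eta}\,g_j$ with $u(0) = e_i(0)$ (legitimate since $\bar{\eta} \ge 0$), I would solve by the integrating factor $\mu(t) = \rho^{2}(t)e^{\gamma t}$, whose reciprocal $\mu^{-1}(t) = \rho^{-2}(t)e^{-\gamma t}$ is exactly the fundamental solution furnished by Lemma~\ref{le:limi}. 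The decisive simplification is that, once $g_j$ has the relaxed form $\rho^{-2(1-1/k)}e^{-\gamma(1-1/k)t}$, the entire integrand of the particular solution becomes a total derivative, $\bar{\eta}(s)\rho^{2/k}(s)e^{\gamma s/k} = k\,\tfrac{d}{ds}\big(\rho^{2/k}(s)e^{\gamma s/k}\big)$, so it telescopes to $k\big(\rho^{2/k}(t)e^{\gamma t/k} - 1\big)$. Multiplying back by $\mu^{-1}(t)$ reproduces a bound of the same functional shape $\rho^{-2(1-1/k)}(t)e^{-\gamma(1-1/k)t}$, which is what propagates (\ref{eq:upper}) from one layer to the next.

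The base layer $\ell = 2$ is special because its $\cF_1$ parent carries the sharp rate $\rho^{-2}e^{-\gamma t}$ rather than the relaxed one; there the same integrating-factor computation produces $\int_0^t \bar{\eta}\,ds = \gamma t + 2\ln\rho(t)$, i.e.\ a logarithmic factor. Relaxing the decay exponent from $2$ to $2(1-1/k)$ is precisely the device that absorbs it, since $\rho^{-2/k}(t)\ln\rho(t)$ is bounded on $[0,\bar{T}_s)$ (a positive power of $\rho$ dominates $\ln\rho$, with supremum of order $k$); choosing $k$ large enough that $k(1-1/e) \ge 1 + \gamma\bar{T}_s$ then closes the base case. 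I expect the main obstacle to be the bookkeeping of the multiplicative constant through the at most $\cD(G)-1$ layers: the telescoping identity yields a recursion of the form $c_\ell = k\,c_{\ell-1}$, so the honest coefficient is geometric in $k$ and finite only because the number of layers is finite; presenting it as the single factor $k$ in (\ref{eq:upper}) amounts to absorbing this finite constant into a sufficiently large relaxation parameter. Throughout, I would also record the elementary facts $\rho(t) \ge 1$ and $\rho^{-2/k}(t)e^{-\gamma t/k} \le 1$ that justify each passage from the sharp to the relaxed form, and note that every such bound tends to $0$ as $t \to \bar{T}_s^-$ because $2(1-1/k) > 0$, which is all the subsequent continuity argument for the deeper layers will require.
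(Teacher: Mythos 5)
Your treatment of (\ref{eq:leftdot}) is essentially the paper's: both reduce the $\cF_1$ dynamics to the exact linear equation $\ed_i=-\bar{\eta}(t)e_i$ via Lemma \ref{le:alloverpt} and the source true parent, invoke Lemma \ref{le:limi}, and compute the limit of $\bar{\eta}(t)\rho^{-2}(t)e^{-\gamma t}$ explicitly. For (\ref{eq:upper}), however, you take a genuinely different route. The paper splits on whether $e_{i}(0)\geq k e_{j}(0)$ and then partitions $[0,\bar{T}_s)$ into alternating intervals on which $e_{i}(t)\geq ke_{j}(t)$ (where the cascade inequality yields $\ed_{i}\leq-\bar{\eta}(1-\tfrac1k)e_{i}$) or $e_{i}(t)\leq ke_{j}(t)$ (where the child is dominated via a sandwich of three auxiliary trajectories $\bar{e}_{i},\tilde{e}_{i},\tilde{e}_{j}$), gluing the bound across switching times by the multiplicative structure of the linear flow. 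You instead compare against the inhomogeneous scalar equation $\dot{u}=-\bar{\eta}u+\bar{\eta}g_j$ and solve it with the integrating factor $\rho^{2}(t)e^{\gamma t}$, observing that $\bar{\eta}(s)\rho^{2/k}(s)e^{\gamma s/k}=k\,\tfrac{d}{ds}\bigl(\rho^{2/k}(s)e^{\gamma s/k}\bigr)$ so the particular solution telescopes; your identification of the logarithmic term $\gamma t+2\ln\rho(t)$ at the base layer, and of the relaxation $2\mapsto 2(1-\tfrac1k)$ as the device that absorbs it (with the explicit choice $k(1-1/e)\geq 1+\gamma\bar{T}_s$), is correct and arguably more transparent than the paper's interval-switching argument, which relies on the parent obeying an exact equality and is only carried out for $\cF_2$ ("the proof of other nodes is in a similar fashion"). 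The one loose end you flag --- that an honest induction produces a coefficient geometric in $k$ across layers, which cannot literally be rewritten in the stated form $k\,\rho^{-2(1-1/k)}e^{-\gamma(1-1/k)t}e_{\max}(0)$ by re-choosing $k$ --- is a real imprecision, but it is shared by the paper (hidden in the omitted deeper layers), and it is immaterial downstream: Lemma \ref{le:squeeze} only needs $e_i(t)\leq C\rho^{-2(1-1/k)}(t)e^{-\gamma(1-1/k)t}$ for some finite $C$ and some $k\geq2$, since $\bar{\eta}(t)\rho^{-2(1-1/k)}(t)\to0$ as $t\to\bar{T}_s^-$ for any such $k$. So your proof is sound where it matters, and in fact more candid about the constant than the original.
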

With the upper bound of state error provided in Lemma \ref{le:continuity}, we then use Squeeze theorem \cite{stewart2007multivariable} to prove the continuity of derivatives of states for the remaining nodes.
\begin{lemma}\label{le:squeeze}
	Suppose Assumptions \ref{ass:main} and \ref{ass:overestimate} hold, with $\cF_\ell$ defined in Definition \ref{def:longest}, there holds
	\begin{equation}\label{eq:remaining}
		\lim_{t \rightarrow \bar{T}_s^-} \xd_i(t) = \ed_i(t) = 0,~ \forall i \in \cF_\ell
	\end{equation}
for $\ell \in \{2,\cdots, \cD(G) - 1 \}$.
\end{lemma}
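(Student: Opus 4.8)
The plan is to fix a node $i \in \cF_\ell$ with $\ell \geq 2$ and apply the Squeeze theorem to $\xd_i(t)$, sandwiching it between two quantities that both vanish as $t \to \bar{T}_s^-$. The first task is to control the ``innovation'' term $x_i(t) - \min_{k \in \cN(i)}\{x_k(t)+w_{ik}\}$ in (\ref{eq:prespecified}) two-sidedly in terms of state errors. By the remark following Definition \ref{def:longest}, node $i$ has a true parent $j' \in \cF_{\ell-1}$, so $x_{j'} + w_{ij'} = x_i$ by (\ref{eq:bellman}). Invoking Lemma \ref{le:alloverpt} (hence $e_k(t) \geq 0$ for every $k$) together with (\ref{eq:bellman}), each neighbour obeys $x_k(t)+w_{ik} \geq x_k + w_{ik} \geq x_i$, so $\min_k\{x_k(t)+w_{ik}\} \geq x_i$; while the parent gives $\min_k\{x_k(t)+w_{ik}\} \leq x_{j'}(t)+w_{ij'} = x_i + e_{j'}(t)$. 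Combining yields $e_i(t) - e_{j'}(t) \leq x_i(t) - \min_k\{x_k(t)+w_{ik}\} \leq e_i(t)$.

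Next I would multiply through by $-\bar{\eta}(t)$, which is positive on $[0,\bar{T}_s)$ and equals $\gamma + 2(1+h)/(\bar{T}_s - t)$ by (\ref{eq:ebar}) and (\ref{eq:rho}); flipping the inequalities gives, for $t \in [0,\bar{T}_s)$,
\[
-\bar{\eta}(t)e_i(t) \leq \xd_i(t) \leq \bar{\eta}(t)e_{j'}(t).
\]
It then remains to show both bounds tend to $0$. For the left bound I would insert the estimate (\ref{eq:upper}) for $e_i(t)$, which applies since $i \in \cF_\ell$ with $\ell \geq 2$. Writing $\rho^{-2(1-\frac1k)}(t) = \big(\bar{T}_s^{-1}(\bar{T}_s - t)\big)^{2(1+h)(1-\frac1k)}$, the product $\bar{\eta}(t)e_i(t)$ is dominated by a term proportional to $(\bar{T}_s-t)^{2(1+h)(1-\frac1k)-1}$, and since $h > 1$ and $k \geq 2$ force $2(1+h)(1-\frac1k) \geq 3 > 1$, this exponent is strictly positive, so $\bar{\eta}(t)e_i(t) \to 0$.

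For the right bound I would split on where the parent lies. If $\ell - 1 \geq 2$, then $j' \in \cF_{\ell-1}$ also satisfies (\ref{eq:upper}) and the identical exponent computation gives $\bar{\eta}(t)e_{j'}(t) \to 0$. If instead $\ell - 1 = 1$, i.e. $j' \in \cF_1$, I would note that the true parent of $j'$ is a source whose error is identically zero, so that $\min_k\{x_k(t)+w_{j'k}\} = x_{j'}$ for all $t$ and hence $\xd_{j'}(t) = -\bar{\eta}(t)e_{j'}(t)$ exactly; the required limit $\bar{\eta}(t)e_{j'}(t) = -\xd_{j'}(t) \to 0$ is then precisely (\ref{eq:leftdot}) of Lemma \ref{le:continuity}. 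With both bounding expressions vanishing, the Squeeze theorem delivers $\lim_{t\to\bar{T}_s^-}\xd_i(t) = 0$, and because $x_i$ is constant $\ed_i = \xd_i$, which finishes the proof.

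The main obstacle I anticipate is the $\infty \times 0$ indeterminacy at $\bar{T}_s$: $\bar{\eta}(t)$ blows up like $(\bar{T}_s - t)^{-1}$ while the error bound (\ref{eq:upper}) only decays polynomially, so everything hinges on verifying that the decay exponent $2(1+h)(1-\frac1k)$ strictly exceeds $1$ and thus dominates the singular feedback gain -- exactly what the design constraints $h>1$ and $k \geq 2$ are engineered to guarantee. A secondary, more clerical difficulty is the case split for the parent, since the $\cF_1$ layer falls outside the scope of (\ref{eq:upper}) and must instead be routed through the already-established limit (\ref{eq:leftdot}).
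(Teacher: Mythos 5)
Your proof is correct and follows essentially the same route as the paper: squeeze $\xd_i(t)$ between $-\bar{\eta}(t)e_i(t)$ and a parent-error term, then kill both bounds via the exponent count $2(1+h)(1-\tfrac{1}{k})-1>0$ against the $(\bar{T}_s-t)^{-1}$ blow-up of $\bar{\eta}(t)$. Your explicit case split for a parent in $\cF_1$ (routed through the exact relation $\xd_{j'}=-\bar{\eta}(t)e_{j'}(t)$ and (\ref{eq:leftdot})) is a small but welcome refinement of a point the paper's proof passes over silently when it cites (\ref{eq:upper}) for the parent.
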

The pre-specified finite time convergence of (\ref{eq:prespecified}) comes as a byproduct of the above lemma.
\begin{theorem}\label{the:existence}
	Suppose Assumptions \ref{ass:main} and \ref{ass:overestimate} hold, consider (\ref{eq:prespecified}), with $x_i$ defined in (\ref{eq:bellman}), then for all $i \in V$, $x_i(t)$ is continuous on $[0, +\infty)$ and
	\begin{equation}
		x_i(t) = x_i, ~ \forall i \in V ~\mathrm{and}~ \forall t \geq \bar{T}_s.
	\end{equation}
\end{theorem}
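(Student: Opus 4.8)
The plan is to obtain the theorem as an assembly of the preceding lemmas, treating the trajectory piecewise: a smooth ODE solution on $[0,\bar{T}_s)$ glued to a constant trajectory on $[\bar{T}_s,+\infty)$, with the only nontrivial point being continuity (indeed $C^1$ regularity) at the junction $\bar{T}_s$. First I would dispose of the source nodes: for $i \in S$, (\ref{eq:prespecified}) gives $\xd_i(t) = 0$ together with $x_i(0) = 0 = x_i$, so $x_i(t) \equiv x_i$ for all $t \geq 0$, which is continuous and already sits at its stationary value.

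For a non-source node $i$, on the open interval $[0,\bar{T}_s)$ the gain $\bar{\eta}(t) = \gamma + 2\dot{\rho}(t)/\rho(t)$ is continuous and the right-hand side of (\ref{eq:prespecified}) is Lipschitz in the state (the term $\min_{j \in \cN(i)}\{x_j(t)+w_{ij}\}$ is a minimum of affine maps), so on every compact subinterval $[0,\bar{T}_s-\varepsilon]$ standard ODE theory yields a unique $C^1$ solution. The a priori monotonicity bounds of Lemma \ref{le:alloverpt} keep $V^+(t),V^-(t)$ bounded, preventing finite-time escape, so $x_i(\cdot)$ exists and is continuous on all of $[0,\bar{T}_s)$ despite $\bar{\eta}(t)\rightarrow\infty$ as $t\rightarrow\bar{T}_s^-$.

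The crux is the behavior at and beyond $\bar{T}_s$. I would set $x_i(\bar{T}_s) := x_i$; by Lemma \ref{le:leftlimit} this value equals $\lim_{t\rightarrow\bar{T}_s^-}x_i(t)$, so $x_i(\cdot)$ is left-continuous at $\bar{T}_s$. For $t \geq \bar{T}_s$, definition (\ref{eq:ebar}) forces $\bar{\eta}(t)=0$, hence $\xd_i(t)=0$ and $x_i(t)$ is constant on $[\bar{T}_s,+\infty)$; therefore $x_i(t)=x_i(\bar{T}_s)=x_i$ for all $t\geq\bar{T}_s$, which proves the second claim and simultaneously delivers right-continuity at $\bar{T}_s$. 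Combining the two one-sided limits gives continuity at $\bar{T}_s$, and hence on all of $[0,+\infty)$.

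Finally, to confirm that the glued trajectory is a genuine solution of (\ref{eq:prespecified}) rather than a mere continuous patch, I would invoke Lemmas \ref{le:continuity} and \ref{le:squeeze}: together they give $\lim_{t\rightarrow\bar{T}_s^-}\xd_i(t)=0$ for every $i\in\cF_\ell$ with $\ell\in\{1,\cdots,\cD(G)-1\}$, which by (\ref{eq:F}) exhausts all non-source nodes. Since $\xd_i(t)=0$ holds for $t\geq\bar{T}_s$ as well, the derivative matches across $\bar{T}_s$ and $x_i(\cdot)$ is $C^1$ on $[0,+\infty)$. The main obstacle is not the present theorem but the singular coefficient $\bar{\eta}(t)\rightarrow\infty$ near $\bar{T}_s$; that difficulty has already been absorbed by Lemma \ref{le:leftlimit} (a finite left limit of the state) and by Lemmas \ref{le:continuity} and \ref{le:squeeze} (a vanishing left limit of the derivative), so here the argument reduces to the piecewise gluing described above.
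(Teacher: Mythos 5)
Your proposal is correct and follows essentially the same route as the paper: continuity across $\bar{T}_s$ via Lemmas \ref{le:leftlimit}, \ref{le:continuity} and \ref{le:squeeze}, then constancy for $t \geq \bar{T}_s$ from $\bar{\eta}(t) = 0$. The extra scaffolding you supply (the source-node case, existence on compact subintervals of $[0,\bar{T}_s)$) is sound but is already covered by Theorem \ref{the:unique} and the preceding lemmas, which the paper's terser proof implicitly relies on.
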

\begin{proof}
The continuity of $x_i(t)$ on $[0, +\infty)$ is guaranteed by Lemma \ref{le:squeeze}, leading to $x_i(\bar{T}_s) = x_i$ for all $i \in V$ by (\ref{eq:leftlimit}) in Lemma \ref{le:leftlimit}. Further, as $\xd_i(t) = \ed_i(t) = 0$ for all $t \geq \bar{T}_s$, we have $x_i(t) = x_i$ for all $i \in V$ and $t \geq \bar{T}_s.$	
\end{proof}
The theorem below completes this section by showing that the solution to (\ref{eq:prespecified}) given in Theorem \ref{the:existence} is also unique, by exploiting the fact that each state lies in a compact set for $t \geq 0$.
\begin{theorem}
	Suppose Assumptions \ref{ass:main} and \ref{ass:overestimate} hold, (\ref{eq:prespecified}) admits a unique solution for $t \in [0, +\infty)$.
\end{theorem}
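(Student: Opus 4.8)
The plan is to split the time axis at the prescribed instant $\bar{T}_s$ and treat the three regimes $[0,\bar{T}_s)$, the single point $\bar{T}_s$, and $(\bar{T}_s,+\infty)$ separately, because the scaling gain reduces to $\bar{\eta}(t)=\gamma+2(1+h)/(\bar{T}_s-t)$ on $[0,\bar{T}_s)$, which is smooth and finite on every compact subinterval but diverges as $t\to\bar{T}_s^-$, so a single global Lipschitz estimate over the closed interval $[0,\bar{T}_s]$ is unavailable. On the open interval $[0,\bar{T}_s)$ I would argue as follows. By Lemma \ref{le:alloverpt} together with the non-increasing property of $V^+(t)$, every state satisfies $x_i\le x_i(t)\le x_i+e_{\max}(0)$, so the trajectory is confined to the compact box $K=\prod_{i\in V}[x_i,\,x_i+e_{\max}(0)]$. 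The right-hand side of (\ref{eq:prespecified}) has the form $-\bar{\eta}(t)\big(x_i-\min_{j\in\cN(i)}\{x_j+w_{ij}\}\big)$; since the pointwise minimum of affine maps is $1$-Lipschitz in each coordinate, on $[0,T]\times K$ with any fixed $T<\bar{T}_s$ this vector field is Lipschitz in $x$ with constant at most $2\bar{\eta}(T)<\infty$.

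Given that Lipschitz bound, uniqueness on each $[0,T]$ follows from the standard Gr\"onwall estimate: if $x(\cdot)$ and $y(\cdot)$ are two solutions sharing the same initial datum, then $\|x(t)-y(t)\|\le \|x(0)-y(0)\|\exp\!\big(2\int_0^t\bar{\eta}(s)\,ds\big)=0$ for $t\in[0,T]$, the integral being finite because $\bar{\eta}$ is continuous on $[0,T]$. As $T<\bar{T}_s$ is arbitrary, the two solutions coincide throughout $[0,\bar{T}_s)$. To cross the single point $\bar{T}_s$ I would invoke Theorem \ref{the:existence}, which guarantees that any such solution is continuous on $[0,+\infty)$ with $x_i(\bar{T}_s)=x_i$; hence the common left limit forces the two solutions to agree at $\bar{T}_s$ as well. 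Finally, on $(\bar{T}_s,+\infty)$ the gain vanishes, $\bar{\eta}(t)\equiv 0$, so $\xd_i(t)=0$ and each solution is constant, equal to its value $x_i$ at $\bar{T}_s$; the two solutions therefore coincide for all $t\ge\bar{T}_s$.

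The main obstacle is precisely the blow-up of $\bar{\eta}(t)$ at $\bar{T}_s$, which precludes a direct application of the Picard--Lindel\"of theorem on the closed interval and makes $\int_0^{\bar{T}_s}\bar{\eta}(s)\,ds$ diverge. The remedy is to never integrate up to $\bar{T}_s$: uniqueness is established only on compact subintervals $[0,T]$ where the Gr\"onwall constant stays finite, and the endpoint $\bar{T}_s$ is then absorbed by the already-established continuity (Theorem \ref{the:existence}) rather than by a Lipschitz argument. The confinement of the state to the compact set $K$ is exactly what lets me upgrade the merely locally Lipschitz field to a globally Lipschitz one on $[0,T]\times K$, cleanly supplying the constant $2\bar{\eta}(T)$ needed in the Gr\"onwall step; this is the sense in which the proof exploits each state lying in a compact set for $t\ge 0$.
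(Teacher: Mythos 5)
Your argument is correct and follows essentially the same route as the paper: Lipschitz continuity of the vector field on every compact $[0,t_1]\subset[0,\bar{T}_s)$ (inherited from the proof of Theorem \ref{the:unique}), confinement of the trajectory to a compact box via Lemma \ref{le:alloverpt} and Theorem \ref{the:existence}, and the trivial dynamics beyond $\bar{T}_s$; the only difference is presentational, in that you make the Gr\"onwall estimate and the three-regime split explicit where the paper delegates the same conclusion to Theorem 2.39 of \cite{haddad2008nonlinear}. One cosmetic remark: the minimum of affine maps is globally $1$-Lipschitz, so the field is already globally Lipschitz in $x$ on $[0,t_1]\times\mathbb{R}^n$ and the compact box is not actually needed to supply the Gr\"onwall constant --- its real work is done at and after $t=\bar{T}_s$.
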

\begin{proof}
	From the proof of Theorem \ref{the:unique} and the fact that $\bar{\eta}(t)$ defined in (\ref{eq:ebar}) is continuous on $[0, \bar{T}_s)$, $f_i(t ,\tilde{x}) = -\bar{\eta}(t) (\tilde{x}_i - \min_{j \in \cN(i)} \{\tilde{x}_j + w_{ij} \})$ with $\tilde{x} = [\tilde{x}_1, \cdots, \tilde{x}_n]^\top \in \mathbb{R}^n$ and $i \in \{1,\cdots,n \}$, is Lipschitz continuous with respect to its second argument on $[0, t_1]$, with $t_1 < \bar{T}_s$. Further, as the largest state error is non-increasing by Lemma \ref{le:alloverpt} and $x_i(t)$ is continuous as well as $x_i(t) = x_i$ for $t \geq \bar{T}_s$ by Theorem \ref{the:existence}, define a compact set $P_i = [x_i, x_i + L]$ for all $i \in V$ with some $L > 0$, there holds $x_i(t)$ lies entirely in $P_i$ once $x_i(0) \in P_i$, since $f_i(t, \tilde{x})$ is continuous on $t$, it follows from Theorem 2.39 in \cite{haddad2008nonlinear} that the solution to (\ref{eq:prespecified}) is unique.
\end{proof}

\section{Small gain based analysis}\label{sec:smallgain}
In this section, we return to the nominal model of the distributed biased min-consensus protocol (\ref{eq:practical-protocol}). We will show that (\ref{eq:practical-protocol}) is globally exponentially input-to-state stable (per Definition X) while its edge weight is under persistent perturbation, i.e., the edge weight $w_{ij}$ in (\ref{eq:practical-protocol}) is perturbed from its nominal value and becomes time-varying. Under such a perturbation (\ref{eq:practical-protocol}) now turns to  
\begin{equation}\label{eq:per-protocol}
	\xd_i(t) = 
	\begin{cases}
		0, & i \in S\\
		-\eta\left(x_i(t) - \min_{j \in \cN(i)}\{x_j(t)+w_{ij}(t)\}\right), & i \notin S
	\end{cases}.
\end{equation}
Unlike \cite{mo2021lyapunov}, where the perturbation on the edge weight is assumed to be additive, i.e., $w_{ij}(t) = w_{ij} + \epsilon_{ij}(t)$ with $\epsilon_{ij}(t) > 0$, here the only assumption  is that, under such perturbation the edge weight is bounded above 0, i.e.,
\begin{equation}\label{eq:boundnoise}
	0 < w_{\min} \leq w_{ij}(t) \leq w_{\max}, \forall t \geq 0, i \in V, j \in\cN(i).
\end{equation}
 Note that (\ref{eq:boundnoise}) allows asymmetric perturbation, i.e., $w_{ij}(t) \neq w_{ji}(t)$ is allowed. 
 
Before the stability analysis, we first translate (\ref{eq:per-protocol}) into a nonlinear dynamical system. We continue to adopt the notions of state error $e_i(t)$, and node $i$'s set of parent nodes $\cP(i)$ as defined in Section \ref{sec:pt}, with $x_i(t)$ now being defined as in (\ref{eq:per-protocol}).  We further define 
\begin{equation}\label{eq:input}
	u_{ij}(t) = w_{ij}(t) - w_{ij}
\end{equation}
as the deviation of the edge weight from is nominal value. Then (\ref{eq:per-protocol}) can be rewritten as 
\begin{flalign}
&\ed_i(t) = 0, ~\mathrm{if}~ i \in S \nonumber \\
&	\ed_i(t) = -\eta\big(x_i + e_i(t) - \nonumber \\
& ~~~~~~~~\min_{j \in \cN(i)}\{e_j(t)+ x_{j} + u_{ij}(t) + w_{ij}\}\big), ~\mathrm{if}~ i \notin S \label{eq:tran}
\end{flalign}
with $x_i, x_j$ and $w_{ij}$ the structural aspects the graph $G$. By taking $e_i(t)$ and $u_{ij}(t)$ as the state and the input, respectively, (\ref{eq:tran}) can be further described by a nonlinear map $f_i: \mathbb{R}^{|\cN(i)|+1} \times \mathbb{R}^{|\cN(i)|} \rightarrow \mathbb{R}$, i.e.,
\begin{equation}\label{eq:subsystem}
\ed_i(t) = f_i(e_i(t),e_{i_1}(t),\cdots,e_{i_{|\cN(i)|}}(t), u_i(t) ) 
\end{equation}
where $i_\ell\in \cN(i)$ for $\ell \in \{1,\cdots, |\cN(i)|  \}$ and $u_i(t) = [u_{i,i_1}(t), \cdots, u_{i,i_{|\cN(i)|}}(t)]^\top \in \mathbb{R}^{|\cN(i)|}$. The overall system can be defined by the following composite nonlinear map $F: \mathbb{R}^{n}\times \mathbb{R}^{2|E|} \rightarrow \mathbb{R}^{n}$
\begin{equation}\label{eq:overall}
\ed(t) = F(e(t), u(t))
\end{equation}
where $e(t) = [e_1(t), \cdots, e_n(t)] \in \mathbb{R}^{n}$ and $u(t) = \left( u_{ij}(t) \right)_{i\in V, j\in\cN(i)} \in \mathbb{R}^{2|E|}$. Also, $F(0,0) = 0$ as in this case $\ed_i(t) = e_i(t) = 0$ for all $i \in V$ by (\ref{eq:bellman}).

It can be seen from (\ref{eq:tran}), (\ref{eq:subsystem}) and (\ref{eq:overall}) that $F$ is continuous on $\mathbb{R}^{n}\times \mathbb{R}^{2|E|}$. Further, as $\eta > 0$ is a constant feedback gain, it follows from the proof of Theorem \ref{the:unique} that $F$ is globally Lipschitz continuous with respect to its first argument. Based on those facts, Assumption 2.1 in \cite{Mironchenko2023} holds, and thus global exponential input-to-state stability can be defined for (\ref{eq:overall}), with details given below. 
\begin{definition}\label{def:expISS}
\cite{Mironchenko2023} (\ref{eq:overall}) is said to be globally exponentially input-to-state stable (expISS) if 
\begin{itemize}
	\item (\ref{eq:overall}) is forward complete, i.e., for all the initial state $e(0) \in \mathbb{R}^n$ and all the input $u \in \ell^\infty(\mathbb{R}^{2|E|})$, the corresponding solution to (\ref{eq:overall}) exists and is unique on $[0,+\infty)$;
	\item  there exist $c, p > 0$ and $\lambda_u \in \cK$ such that, for all the initial state $e(0) \in \mathbb{R}^n$, $u \in \ell^\infty(\mathbb{R}^{2|E|})$ the following holds
	\begin{equation}\label{eq:expISS}
		|| e(t) || \leq ce^{-p t}|| e(0) || + \lambda_u(|| u ||_\infty),~\forall t \geq 0.
	\end{equation}
\end{itemize}
\end{definition}
Obviously (\ref{eq:overall}) is globally exponentially stable (per Definition 4.5 in \cite{khalil}) with 0-input once it is expISS. A direct method to demonstrate the expISS of (\ref{eq:overall}) is to show the existence of an exponential ISS Lyapunov function (refer to Definition 6.28 and Proposition 6.29 in \cite{Mironchenko2023} for more details), which is central in this section.
\begin{definition}\label{def:eISS}
	A continuous function $V: \mathbb{R}^n \rightarrow \mathbb{R}_+$ is called an exponential ISS Lyapunov function (eISS Lyapunov function) for (\ref{eq:overall}) if there exist constants $\underline{\omega}, \bar{\omega}, b, \kappa > 0$ and $\chi \in \cK_\infty$ such that for all $x \in \mathbb{R}^n$ and all $u \in \ell_\infty(\mathbb{R}^{2|E|})$
	\begin{equation}\label{eq:radial}
	\underline{\omega}|| x ||^b \leq	V(x) \leq \bar{\omega}|| x ||^b,
	\end{equation}
	\begin{flalign}
	& 	V(x) \geq \chi(|| u ||_\infty) \implies \forall \xi \in co\{\bar{\xi}: \exists x_k \rightarrow x: \frac{dV}{dx}\rightarrow \bar{\xi}   \}  \nonumber \\
	&\Vd(x) := <\xi, \xd(t)> \leq -\kappa V(x).\label{eq:imp}
	\end{flalign}
	where $<\cdot, \cdot>$ denotes the standard scalar product.
\end{definition} 
\begin{remark}\label{re:impdissi}
It should be noted that the upper right-hand Dini derivative is used in \cite{Mironchenko2023}, while here we use the Clarke's generalized derivative for (\ref{eq:imp}) instead, in this paper these two derivatives remain to be same as we will adopt $\ell_1$ norm $|\cdot|$  as our candidate eISS Lyapunov function.
	
	The above eISS Lyapunov function is called an implication-form eISS Lyapunov function. Another candidate to guarantee expISS is the dissipative-form eISS Lyapunov function, which satisfies (\ref{eq:radial}) with (\ref{eq:imp}) being replaced by $\Vd(x) \leq -\epsilon V(x) + \gamma(|| u ||_\infty)$ for some $\epsilon > 0$ and $\gamma \in \cK_\infty$. It has been shown in Proposition 2.11 of \cite{Mironchenko2023} that the dissipative-form eISS Lyapunov function implies the implication-form one, and both of them imply the expISS of (\ref{eq:per-protocol}).
\end{remark}
We summarize the assumption throughout this section.
\begin{assumption}\label{ass:ltz}
	Each edge weight is perturbed from its nominal value and obeys (\ref{eq:boundnoise}). All initial states are non-negative.
\end{assumption}
\begin{remark}\label{re:ltz}
	Unlike the requirements that all states need to be overestimates as assumed in the previous section or in \cite{mo2021lyapunov}, the non-negative restriction on the initial states given above is only used for facilitating the stability analysis. Indeed, by using absolute value of received states from neighbors, i.e., (\ref{eq:per-protocol}) becomes 
	\begin{equation}\label{eq:abs-protocol}
		\xd_i(t) = 
		\begin{cases}
			0, & i \in S\\
			-\eta\left(x_i(t) - \min_{j \in \cN(i)}\{|x_j(t)|+w_{ij}(t)\}\right), & i \notin S
		\end{cases},
	\end{equation}
all states will become non-negative after a finite time, and (\ref{eq:abs-protocol}) reduces to (\ref{eq:per-protocol}) thereafter. Proof of this claim is provided in the Appendix.
\end{remark}

The key steps of our proof consists of two parts, we will first show that the state error of each node behaves close to the implication-form eISS Lyapunov function given in Definition \ref{def:eISS}. Then we prove the expISS of (\ref{eq:per-protocol}), via showing the existence of an implication-form eISS Lyapunov function, which is constructed by imposing a cyclic small gain condition \cite{dashkovskiy2010small} on the state error trajectory of each node.

Recall the stationary value defined in (\ref{eq:bellman}), based on the fact that numbers of nodes and edges are both finite in graph $G$, there must exist a $\zeta \in (0,1)$ such that for all $i \in V$ with $\mathcal{N}(i)\setminus\mathcal{P}(i) \neq \emptyset$
\begin{equation}\label{eq:zeta1}
	\frac{ x_i }{ x_l + w_{il} } \leq \zeta, ~ \forall l \in \mathcal{N}(i)\setminus\mathcal{P}(i).
\end{equation}
We exemplify (\ref{eq:zeta1}) using Figure \ref{fig:example1}, for node 8, we have $\cN(8) = \{6,7,9\}$ while $\cP(8) = \{9\}$, and there holds $\frac{x_8}{x_6 + w_{68}} = \frac{x_8}{x_7 + w_{78}} = \frac{1}{3}$, and it can be verified that (\ref{eq:zeta1}) holds for any node $i$ satisfying $\mathcal{N}(i)\setminus\mathcal{P}(i) \neq \emptyset$ in Figure \ref{fig:example1} with $\zeta = \frac{2}{3}$. 

We first show that the state error of each node admits an eISS Lyapunov-like trajectory. 
\begin{lemma}\label{le:trajectory}
	Suppose Assumptions \ref{ass:main} and \ref{ass:ltz} hold, consider (\ref{eq:per-protocol}), with $\zeta$ defined in (\ref{eq:zeta1}), $e_i(t)$ and $\cP_i(t)$ the state error and the set of current parent nodes of node $i$ under (\ref{eq:per-protocol}), respectively, let $V_i(\cdot) = |\cdot|$ for $i \in V$, then for $t \geq 0$
	\begin{equation}\label{eq:fors}
		V_i(e_i(t)) = 0, ~ \forall i \in S,
	\end{equation}
	\begin{flalign}\label{eq:forns}
&	V_i(e_i(t)) \geq  \lambda_{ij} V_j(e_j(t)) + \lambda_{iu} || u ||_\infty \implies  \nonumber \\
& \Vd_i(e_i(t)) \leq -(\eta - 1)V_i(e_i(t)),~ \forall i \in V\setminus S
	\end{flalign}
	where in (\ref{eq:forns}) either $\lambda_{ij} = \lambda_{iu} = \eta$ with $j$ the true parent of $i$, or $\lambda_{ij} = \zeta\eta, \lambda_{iu} = \eta$ with $j \in \cP_i(t)$.  
\end{lemma}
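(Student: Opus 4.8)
The plan is to separate the source nodes from the non-source nodes, and for the latter to compute the Clarke generalized derivative of the candidate $V_i(e_i(t)) = |e_i(t)|$ by splitting on the sign of $e_i(t)$, exactly as in the proofs of Lemma \ref{le:greatest} and Lemma \ref{le:least}. For a source node $i \in S$, the dynamics in (\ref{eq:tran}) give $\ed_i(t) = 0$, and together with $x_i(0) = x_i = 0$ this yields $e_i(t) \equiv 0$, so that (\ref{eq:fors}) is immediate. For a non-source node, at points where $e_i(t) \neq 0$ the generalized derivative of $|e_i(t)|$ is simply $\mathrm{sgn}(e_i(t))\,\ed_i(t)$, and I would establish the implication (\ref{eq:forns}) by exhibiting, in each situation, one of the two admissible choices of $(j,\lambda_{ij},\lambda_{iu})$ for which it holds.

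I would first dispatch the overestimate case $e_i(t) > 0$, where $\Vd_i = \ed_i$. Taking $j$ to be a true parent of $i$, so that $x_i = x_j + w_{ij}$ by (\ref{eq:bellman}), I would bound the minimum in (\ref{eq:tran}) from above by its value at the single index $k = j$; the $x_j$ and $w_{ij}$ terms cancel, leaving $\Vd_i \leq -\eta V_i + \eta e_j(t) + \eta u_{ij}(t) \leq -\eta V_i + \eta V_j + \eta\|u\|_\infty$, where I use $|u_{ij}(t)| \leq \|u\|_\infty$. The first-option premise $V_i \geq \eta V_j + \eta\|u\|_\infty$ then absorbs the trailing two terms into $V_i$ and gives $\Vd_i \leq -(\eta-1)V_i$.

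The underestimate case $e_i(t) < 0$ is the heart of the argument and the main obstacle. Here $\Vd_i = -\ed_i$, and because the minimum in (\ref{eq:tran}) is \emph{attained exactly} at a current parent $j \in \cP_i(t)$, I can write $\Vd_i = \eta(x_i + e_i(t) - x_j - e_j(t) - w_{ij} - u_{ij}(t))$ with no slack introduced. If $j$ is itself a true parent the displacement $x_i - x_j - w_{ij}$ vanishes and the first-option bound results (the $V_j$ term even disappearing when $e_j(t) \geq 0$). If $j \notin \cP(i)$, then $\cN(i)\setminus\cP(i) \neq \emptyset$ and I would invoke the ratio bound (\ref{eq:zeta1}) in the form $x_i \leq \zeta(x_j + w_{ij})$, which contributes a term $-\eta(1-\zeta)(x_j + w_{ij})$. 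The crucial extra ingredient is the non-negativity supplied by Assumption \ref{ass:ltz}: states that start non-negative remain non-negative, since $\xd_i > 0$ whenever $x_i(t) = 0$, so $x_j(t) = x_j + e_j(t) \geq 0$, whence $x_j \geq -e_j(t) = V_j$ when $e_j(t) < 0$. Combining $-\eta(1-\zeta)(x_j+w_{ij}) \leq -\eta(1-\zeta)V_j$ with $-\eta e_j(t) = \eta V_j$ contracts the coefficient of $V_j$ from $\eta$ to $\zeta\eta$, producing exactly the second-option bound $\Vd_i \leq -\eta V_i + \zeta\eta V_j + \eta\|u\|_\infty$, after which the premise $V_i \geq \zeta\eta V_j + \eta\|u\|_\infty$ closes the estimate.

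Finally, at the boundary $e_i(t) = 0$ the generalized derivative set is $co\{\ed_i, -\ed_i\}$, so the conclusion $\Vd_i \leq 0$ forces $\ed_i = 0$; I would handle this by exploiting the ``either/or'' freedom in the statement. If $\ed_i > 0$ the current parent attaining the minimum must have $e_j(t) < 0$, so the second-option premise fails and the implication is vacuous; if $\ed_i < 0$ then, since $V_i = 0$ forces $\|u\|_\infty = 0$ through either premise, every true parent must satisfy $e_j(t) > 0$ by Bellman optimality, so the first-option premise fails; and if $\ed_i = 0$ the conclusion holds trivially. The only routine verifications I am skipping are the elementary sign manipulations in the two displayed bounds and the componentwise estimate $|u_{ij}(t)| \leq \|u\|_\infty$.
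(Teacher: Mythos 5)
Your treatment of the two main cases coincides with the paper's own proof. For $e_i(t)>0$ the paper likewise upper-bounds the minimum by its value at a true parent $j$, cancels $x_i=x_j+w_{ij}$, and obtains $\Vd_i\le-\eta V_i+\eta V_j+\eta\|u\|_\infty$; for $e_i(t)<0$ it likewise evaluates the minimum exactly at the current parent $j\in\cP_i(t)$ and combines (\ref{eq:zeta1}) with the non-negativity of states to contract the coefficient of $V_j$ to $\zeta\eta$. The only difference in the underestimate case is cosmetic: the paper scales the whole bracket $x_j+e_j(t)+w_{ij}$ by $\zeta$ (justified by $x_j(t)+w_{ij}\ge0$), whereas you keep it at coefficient one and trade the surplus $-\eta(1-\zeta)(x_j+w_{ij})$ against $V_j$ via $x_j\ge-e_j(t)$; both routes land on the same bound (\ref{eq:c2}). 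Your explicit sub-case split on whether the current parent is also a true parent is in fact \emph{more} careful than the paper, which applies (\ref{eq:zeta1}) to $j\in\cP_i(t)$ without observing that (\ref{eq:zeta1}) is only stated for $l\in\cN(i)\setminus\cP(i)$ and degenerates (ratio equal to one) when the current parent happens to be a true parent.

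The one place your argument does not close is the boundary case $e_i(t)=0$, which the paper's proof silently omits. Your claim that $\ed_i(t)>0$ forces the minimizing current parent to satisfy $e_j(t)<0$ is not correct: take $e_i(t)=0$, $u\equiv0$, and a neighbor $j\notin\cP(i)$ with $e_j(t)=0$ attaining the minimum (this happens whenever every true parent is a sufficiently large overestimate). Then $\ed_i(t)=\eta\bigl(x_j+w_{ij}-x_i\bigr)>0$ while the second-option premise $0\ge\zeta\eta\cdot0+\eta\cdot0$ is satisfied, so that implication cannot be dismissed as vacuous; a symmetric construction with an underestimating non-parent neighbor defeats the first option as well. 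So the set where $e_i(t)=0$ but $\ed_i(t)\neq0$ genuinely resists the pointwise implication as you (and the lemma) state it; it is rescued only in the downstream use (\ref{eq:tra}), whose premise dominates the errors of \emph{all} neighbors simultaneously and therefore forces $\ed_i(t)=0$ there. Since the paper does not treat this case at all, your proposal matches its proof on everything the paper actually proves, and the defect you inherit is the paper's own.
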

With state error trajectories characterized in (\ref{eq:fors}) and (\ref{eq:forns}), we define the $n \times n$ gain matrix $\Gamma: (\bar{\lambda}_{ij})_{i,j \in \{1,\cdots, n \}}$ with the element $\bar{\lambda}_{ij}$ in the $i$-th row and $j$-th column obeying 
\begin{equation}\label{eq:gain}
	\bar{\lambda}_{ij} = \begin{cases}
		\zeta\eta,& i\notin S, j \in \cP(i)\\
		\eta, &  i\notin S, j \notin \cP(i)\\
		0.5/\eta^{\cD(G) - 1}, & i \in S
	\end{cases}
\end{equation} 
where $\cP(i)$ and $\cD(G)$ in (\ref{eq:gain}) are the set of true parent nodes of node $i$ and the effective diameter of graph $G$, respectively. The corresponding $n \times n$ adjacency matrix $A_\Gamma: (a_{ij})_{i,j \in \{1,\cdots, n\}}$ is defined by 
\begin{equation}\label{eq:A}
a_{ij} = 0~ \mathrm{if} ~ \bar{\lambda}_{ij} = 0, ~\mathrm{and}~  a_{ij} = 0~ \mathrm{otherwise}
\end{equation}
Obviously it follows from (\ref{eq:fors}) and (\ref{eq:forns}) that for all $i \in V$
\begin{flalign}\label{eq:tra}
	&	V_i(e_i(t)) \geq \max\{  \bar{\lambda}_{i1} V_1(e_1(t)), \cdots,  \bar{\lambda}_{in} V_1(e_1(t)) \}+\eta || u ||_\infty   \nonumber \\
	&\implies \Vd_i(e_i(t)) \leq -(\eta - 1)V_i(e_i(t))
\end{flalign}

A gain matrix $\Gamma$ is called irreducible if and only if the graph resulted from its  corresponding ajacency matrix is strongly connected \cite{dashkovskiy2010small,berman1994nonnegative}. Then as $A_\Gamma$ has no zero elements by (\ref{eq:gain}), the gain matrix defined by (\ref{eq:gain}) is irreducible. 

Define the following map motivated by the gain matrix $\Gamma$
\begin{gather}\label{eq:gmd}
	\Gamma_{\bigoplus}: \bR_{+}^{n} \rightarrow \bR_{+}^{n}, \begin{bmatrix}
		s_1  \\
		\vdots  \\
		s_{n}
	\end{bmatrix} \mapsto \begin{bmatrix}
		\underset{k \in \{1,\cdots,n\}}{\max}\{\bar{\lambda}_{1k}s_k\} \\
		\vdots \\
		\underset{k \in \{1,\cdots,n\}}{\max}\{\bar{\lambda}_{n k}s_k\}
	\end{bmatrix},
\end{gather}
as well as the diagonal operator $D: \bR_+^n \rightarrow \bR_+^n$ by 
\begin{equation}\label{eq:D}
D: \bR_{+}^{n} \rightarrow \bR_{+}^{n}, [s_1, \cdots, s_n]^\top \mapsto [ds_1, \cdots, ds_n]^\top
\end{equation}
with the diagonal operator factor $d > 1$.

The following lemma gives a sufficient condition for the existence of a diagonal operator $D$ such that $D\left(\Gamma_{\bigoplus}(x)\right) \not\geq x$ for all $x \in \bR_{+}^{n}$, which is kernal of later small gain condition.
\begin{lemma}\label{le:D}
	Suppose Assumptions \ref{ass:main} and \ref{ass:ltz} hold, consider the gain matrix, the gain matrix induced map and the diagonal operator $D$ defined in (\ref{eq:gain}), (\ref{eq:gmd}) and (\ref{eq:D}), respectively, with $\eta$ and $\zeta$ defined in (\ref{eq:per-protocol}) and (\ref{eq:zeta1}), respectively, if $1 < \eta < 1/\zeta$, then there exists a diagonal operator $D$ such that $D\left(\Gamma_{\bigoplus}(x)\right) \not\geq x$ for all $x \in \bR_{+}^{n}$.
\end{lemma}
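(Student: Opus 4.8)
The plan is to reduce the statement to the construction of a single strictly positive scaling vector and then to exploit the positive homogeneity of the max-type map $\Gamma_{\bigoplus}$. First I would record that $\Gamma_{\bigoplus}$ in (\ref{eq:gmd}) is monotone (order preserving) and positively homogeneous of degree one, i.e. $\Gamma_{\bigoplus}(t s)=t\,\Gamma_{\bigoplus}(s)$ for every $t\geq 0$. Given these two properties it suffices to produce one vector $\sigma\in\bR_{++}^{n}$ (all entries strictly positive) together with a factor $d>1$ such that $d\,\Gamma_{\bigoplus}(\sigma)<\sigma$ holds componentwise. Indeed, for an arbitrary $x\in\bR_{+}^{n}\setminus\{0\}$ let $t^{\ast}=\max_{i}x_{i}/\sigma_{i}>0$ and let $i_{0}$ attain this maximum, so that $x\leq t^{\ast}\sigma$ with equality in coordinate $i_{0}$. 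Monotonicity and homogeneity give $[\Gamma_{\bigoplus}(x)]_{i_{0}}\leq t^{\ast}[\Gamma_{\bigoplus}(\sigma)]_{i_{0}}$, whence $d[\Gamma_{\bigoplus}(x)]_{i_{0}}\leq t^{\ast}\,d[\Gamma_{\bigoplus}(\sigma)]_{i_{0}}<t^{\ast}\sigma_{i_{0}}=x_{i_{0}}$; thus $[D(\Gamma_{\bigoplus}(x))]_{i_{0}}<x_{i_{0}}$ and $D(\Gamma_{\bigoplus}(x))\not\geq x$, which is exactly the claim (the trivial case $x=0$ being excluded).

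It therefore remains to build such a $\sigma$. Writing the target inequality row by row, I need $d\,\bar{\lambda}_{ik}\sigma_{k}<\sigma_{i}$ for every pair $(i,k)$ with $\bar{\lambda}_{ik}>0$, i.e. $\log\sigma_{i}-\log\sigma_{k}>\log(d\,\bar{\lambda}_{ik})$. This is a system of difference constraints on the potentials $\log\sigma_{i}$, and by the standard potential/Bellman--Ford feasibility criterion (the finite-dimensional form of the monotone small-gain results of \cite{ruffer2010monotone,dashkovskiy2010small}) such potentials exist for some $d>1$ if and only if every cycle of the dependency graph induced by $\Gamma$ has gain product strictly below $1$ with a uniform margin, $\prod_{\mathrm{cycle}}\bar{\lambda}<d^{-L}$. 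Concretely I would construct $\sigma$ from the layered decomposition $\cF_{0}=S,\cF_{1},\dots,\cF_{\cD(G)-1}$ of Definition \ref{def:longest}, assigning $\sigma_{i}$ geometrically in the layer index $\ell$ with $i\in\cF_{\ell}$ and giving the source layer the small value dictated by the source gain $0.5/\eta^{\cD(G)-1}$ in (\ref{eq:gain}); each row is then checked using the contraction factor $\zeta\eta<1$ toward true parents (guaranteed by $\eta<1/\zeta$) together with the ratio bound (\ref{eq:zeta1}).

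The hard part, and the step I expect to be the main obstacle, is the cyclic verification itself: I must show that for \emph{every} cycle of the interconnection the product of the gains in (\ref{eq:gain}) stays below $1$ with room for the margin $d>1$. This requires delimiting precisely which ordered pairs $(i,k)$ actually carry a positive gain (by Lemma \ref{le:trajectory} a node's error only ever responds to its true parent and to a neighbor that can become its current parent), and then balancing the expansive factors $\eta>1$ against the contractive factors $\zeta\eta<1$ and the source damping. The layered structure controls the non-source cycles---since true-parent edges strictly decrease the Bellman distance they form a directed acyclic sub-structure, so any cycle must collect contractive steps---while the exponent $\cD(G)-1$ in the source gain is tuned so that a cycle passing through a source can absorb at most $\cD(G)-1$ factors of $\eta$ and still contract. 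Once this cyclic small-gain estimate is closed under $1<\eta<1/\zeta$, the margin $d>1$ and the scaling $\sigma$ exist, and the homogeneity argument of the first paragraph delivers $D(\Gamma_{\bigoplus}(x))\not\geq x$ for all $x$.
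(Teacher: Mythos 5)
Your reduction of the lemma to the cyclic small-gain condition is correct and is a genuinely more self-contained route than the paper's: where you construct an explicit positive scaling vector $\sigma$ with $d\,\Gamma_{\bigoplus}(\sigma)<\sigma$ (via monotonicity, positive homogeneity, and a logarithmic potential / difference-constraint feasibility argument) and then propagate the strict decrease to an arbitrary $x$ through the maximizing index $i_0$, the paper simply invokes Theorem 6.4 of \cite{ruffer2010monotone}, which asserts the equivalence between $D\left(\Gamma_{\bigoplus}(x)\right)\not\geq x$ and the condition that every cycle of gains, each entry multiplied by $d$, has product below one. Both arguments therefore stand or fall with the same cycle condition, and your exclusion of the trivial point $x=0$ is the right reading of the statement.

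The gap is that the cycle verification --- which you yourself flag as the main obstacle and only sketch --- is the entire content of the lemma, and the sketch as written does not close for cycles that avoid the source rows. Saying that ``any cycle must collect contractive steps'' is not enough: under the sole hypothesis $1<\eta<1/\zeta$ a single contractive factor $\zeta\eta<1$ cannot dominate several expansive factors $\eta>1$; for instance a two-cycle pairing one gain equal to $\eta$ with one gain equal to $\zeta\eta$ has product $\zeta\eta^{2}$, which exceeds one when $\eta$ is close to $1/\zeta$, and since the paper takes $A_\Gamma$ to have no zero entries such mixed cycles are present in the interconnection graph. What the paper actually bounds is only the cycle built from consecutive true-parent edges: a true-parent chain must reach a source within $\cD(G)$ steps, so such a cycle accumulates at most $\eta^{r-1}$ with $r\leq\cD(G)$ and is closed by the source-row gain $0.5/\eta^{\cD(G)-1}$, giving $d^{r}\eta^{r-1}\cdot 0.5/\eta^{\cD(G)-1}\leq 0.5\,d^{\cD(G)}<1$ for $d$ near one, while every other single entry satisfies $d\bar{\lambda}_{ij}\leq d\zeta\eta<1$. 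If you carry out your plan you will hit exactly this mixed-cycle case, so you need either an explicit enumeration showing that every cycle's product contracts, or an argument that expansive (true-parent) entries cannot recur along a cycle without passing through a source row; make that step explicit rather than deferring it.
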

Now we are ready the provide the small gain based theorem that manifests the expISS of (\ref{eq:per-protocol}). 
\begin{theorem}\label{th:small}
	Suppose Assumptions \ref{ass:main} and \ref{ass:ltz} hold, consider (\ref{eq:per-protocol}), with $e_i(t)$ the state error of node $i$, $u_{ij}(t)$, $\eta$ and $\zeta$ defined in (\ref{eq:input}), (\ref{eq:per-protocol}) and (\ref{eq:zeta1}), respectively. If $1 < \eta < 1/\zeta$, then (\ref{eq:per-protocol}) admits  an eISS Lyapunov function defined in Definition \ref{def:eISS}, using $e(t)$ the state error vector as the argument and $u(t) = (u_{ij})_{i\in V, j \in \cN(i)}$ as the input. 
\end{theorem}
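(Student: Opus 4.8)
The plan is to assemble an implication-form eISS Lyapunov function for the composite system (\ref{eq:overall}) out of the per-node functions $V_i(\cdot)=|\cdot|$ furnished by Lemma \ref{le:trajectory}, following the max-form small-gain construction of \cite{dashkovskiy2010small,ruffer2010monotone}. The candidate is the weighted max
\[
V(e) = \max_{i \in V} \frac{V_i(e_i)}{s_i} = \max_{i \in V} \frac{|e_i|}{s_i},
\]
where $s = (s_1,\dots,s_n)^\top > 0$ is a scaling vector delivered by the small-gain condition. The work then splits into three parts: (i) producing $s$; (ii) verifying the radial bounds (\ref{eq:radial}); and (iii) verifying the decay implication (\ref{eq:imp}).

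First I would extract $s$. Lemma \ref{le:D} guarantees, under $1 < \eta < 1/\zeta$, a diagonal operator $D$ with factor $d>1$ such that $D(\Gamma_{\bigoplus}(x)) \not\geq x$ for all $x \in \bR_{+}^{n}$, which is exactly the cyclic small-gain condition for the irreducible operator $\Gamma_{\bigoplus}$. Since every entry $\bar\lambda_{ij}$ of (\ref{eq:gain}) is linear, $\Gamma_{\bigoplus}$ is positively homogeneous of degree one, so the $\Omega$-path supplied by \cite{dashkovskiy2010small,ruffer2010monotone} may be taken linear; equivalently, irreducibility together with the small-gain gap forces the max-eigenvalue of $\Gamma_{\bigoplus}$ below $1/d<1$, and its positive Perron eigenvector serves as $s$, yielding $\Gamma_{\bigoplus}(s) < s$ componentwise, i.e. $\max_k \bar\lambda_{ik}s_k < s_i$ for every $i$.

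The bounds (\ref{eq:radial}) are then immediate with $b=1$, since $V(e)$ is a weighted $\ell_\infty$ norm and hence equivalent to the monotonic norm $||\cdot||$. For the implication (\ref{eq:imp}), fix $e$ with $V(e)>0$ and pick $i^\ast \in \arg\max_i |e_i|/s_i$; note $i^\ast \notin S$ because $V_i \equiv 0$ on $S$ by (\ref{eq:fors}). Maximality gives $V_k(e_k) \leq (s_k/s_{i^\ast})V_{i^\ast}(e_{i^\ast})$ for all $k$, whence, using $\Gamma_{\bigoplus}(s)<s$,
\[
\max_k \bar\lambda_{i^\ast k} V_k(e_k) \leq \Big(\max_k \tfrac{\bar\lambda_{i^\ast k}s_k}{s_{i^\ast}}\Big)V_{i^\ast}(e_{i^\ast}) \leq \theta\, V_{i^\ast}(e_{i^\ast}),
\]
with $\theta := \max_{i \notin S}\max_k \bar\lambda_{ik}s_k/s_i < 1$ uniform over the finitely many nodes. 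Consequently $V_{i^\ast}(e_{i^\ast}) - \max_k \bar\lambda_{i^\ast k}V_k(e_k) \geq (1-\theta)V_{i^\ast}(e_{i^\ast})$. Choosing $\chi \in \cK_\infty$ linear with slope $\max_{i}\eta/((1-\theta)s_i)$, the hypothesis $V(e) \geq \chi(||u||_\infty)$ forces $(1-\theta)V_{i^\ast}(e_{i^\ast}) \geq \eta||u||_\infty$, so the premise of the per-node implication (\ref{eq:tra}) holds at $i^\ast$; that implication then gives $\Vd_{i^\ast}(e_{i^\ast}) \leq -(\eta-1)V_{i^\ast}(e_{i^\ast})$, i.e. $\Vd(e) \leq -(\eta-1)V(e)$ with $\kappa = \eta-1 > 0$.

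The step I expect to be delicate is the non-smoothness of the max when several indices tie for the maximum. There I would compute $\Vd$ through the Clarke generalized gradient exactly as in the proofs of Lemma \ref{le:greatest} and Lemma \ref{le:least}, noting that $\Vd(e) \in co\{\Vd_i(e_i)/s_i : i \in \arg\max\}$ and that the bound $\Vd_i(e_i)/s_i \leq -(\eta-1)V(e)$ holds for every tying index, so any convex combination inherits it. A secondary care point is making $\chi$ uniform over all pairs $(i^\ast,k)$ and confirming that the artificial source rows of $\Gamma$ (included only to render $\Gamma$ irreducible) never bind, since $V_i \equiv 0$ and $\Vd_i \equiv 0$ on $S$.
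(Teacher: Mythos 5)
Your proposal is correct and follows essentially the same route as the paper: per-node Lyapunov functions $V_i=|\cdot|$ from Lemma \ref{le:trajectory}, the cyclic small-gain condition of Lemma \ref{le:D} yielding a linear $\Omega$-path (your Perron-eigenvector scaling $s$ is exactly the linear $\sigma$ the paper obtains from Corollary 5.7 of \cite{geiselhart2016relaxed}), and the max-form composite $V(e)=\max_i V_i(e_i)/s_i$ as in Theorem 5.3 of \cite{dashkovskiy2010small}. The only difference is that you verify the radial bounds, the implication at the maximizing index, and the Clarke-derivative tie-breaking explicitly, whereas the paper delegates these steps to the cited results.
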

\begin{proof}
The proof mainly utilizes Theorem 5.3 and Corollary 5.6 in \cite{dashkovskiy2010small}. From Lemma \ref{le:trajectory}, $V_i = |\cdot|$ for node $i$ is indeed an ISS Lyapunov function as defined in Definition 2.5 in \cite{dashkovskiy2010small}. Further, as $1 < \eta < 1/\zeta$, Lemma \ref{le:D} holds, combining the fact that $\bar{\lambda}_{ij}$ in $\Gamma$ defined in (\ref{eq:gain}) as well as  the diagonal operator factor $d$ defined in (\ref{eq:D}) are both constants, it follows from Corollary 5.7 in \cite{geiselhart2016relaxed} and \cite{geiselhart2012numerical} that there exists a vector of linear functions $\sigma = [\sigma_1, \cdots, \sigma_n]^\top \in \cK_\infty^n$ such that $D\left(\Gamma_{\bigoplus}(\sigma(s)) \right) < \sigma(s)$ for all  $s > 0$.  

Utilizing the above $\sigma$, by Theorem 5.3 in \cite{dashkovskiy2010small}, the eISS Lyapunov function of (\ref{eq:per-protocol}) can be constructed as 
\begin{equation}\label{eq:V}
	V(e(t)) = \max_{i \in \{1, \cdots, n \}  }\sigma_i^{-1}V_i(e_i(t))
\end{equation}
where $\sigma_i$ obeys $\sigma_i^{-1}(\sigma_i(s)) = s$ for all $s \geq 0, i \in \{1\cdots,  n\}$ and $\sigma^{-1}$ is also linear. As $V_i(\cdot) = |\cdot|$ by Lemma \ref{le:trajectory}, (\ref{eq:radial}) holds for $V$ defined in (\ref{eq:V}). Further, as $\eta - 1 > 0$ is a constant, $\alpha$ in equation (5.9) of Theorem 5.3 in \cite{dashkovskiy2010small}  is also linear, indicating that (\ref{eq:imp}) also holds for $V$ defined in (\ref{eq:V}). Therefore, $V$ is an implication-form eISS Lyapunov function for (\ref{eq:per-protocol}) and thus (\ref{eq:per-protocol}) is expISS.
\end{proof}

\section{Simulations}\label{sec:simulations}
In this section, we empirically verify the theoretical results presented in the previous sections. Unless otherwise mentioned, all simulations are implemented on a $4 \times 1$ km field consisting of 500 nodes, including one source. All the nodes are randomly distributed and communicating within a 0.25km.
\subsection{The practical pre-specified finite time control strategy}

\begin{figure}
	\centering
	\subfigure[Randomized graph with $\cD(G) = 20$]{
		\includegraphics[width = 0.9\columnwidth]{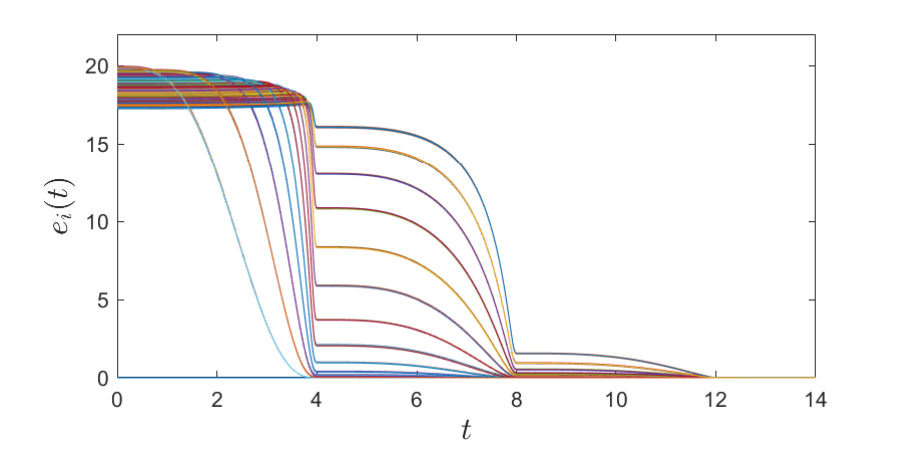}}
	\subfigure[Randomized graph with $\cD(G) = 22$]{  
		\includegraphics[width = 0.9\columnwidth]{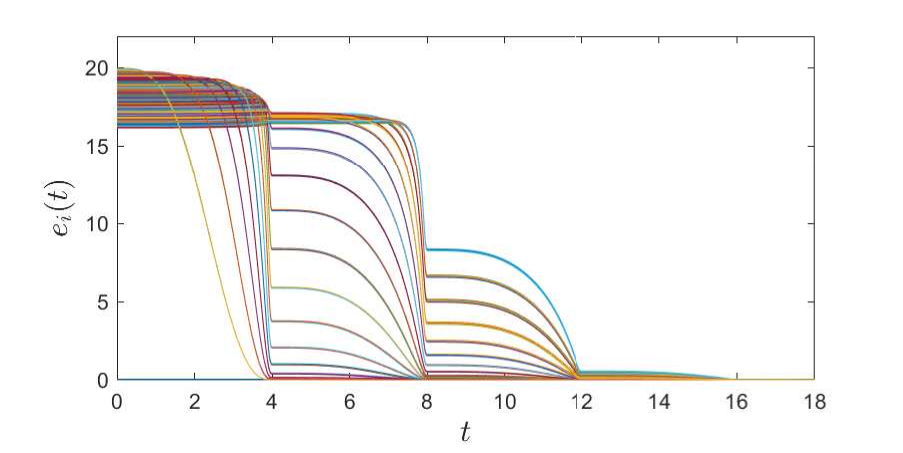}}
	\subfigure[Line graph with $n = \cD(G) = 50$]{  
		\includegraphics[width = 0.9\columnwidth]{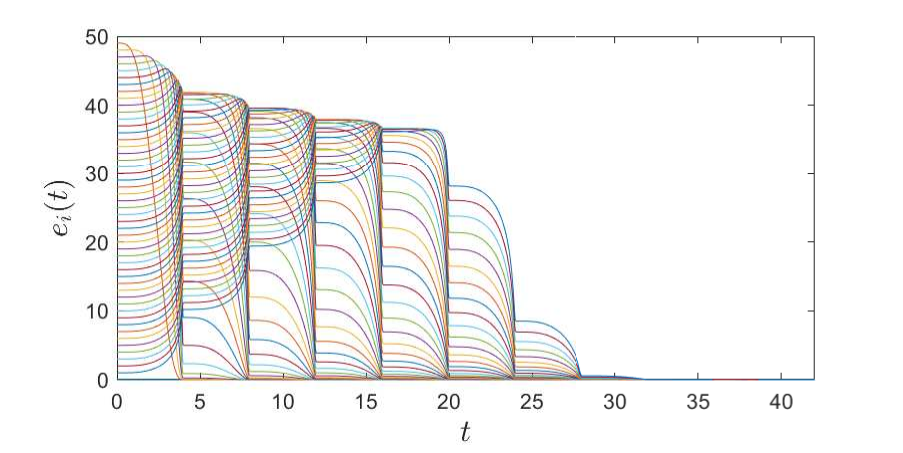}}
	\subfigure[Line graph with $n = \cD(G) = 100$]{  
		\includegraphics[width = 0.9\columnwidth]{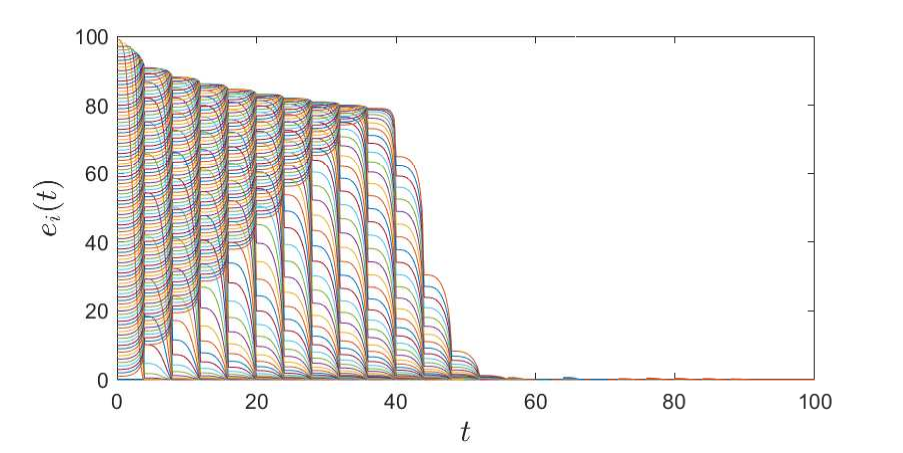}}
	\caption{Results of applying DBMC using the practical pre-specified finite time control strategy to randomized graphs and line graphs. In all scenarios, state errors drop below the bound given in Theorem \ref{the:ppt}. It takes much longer time for line graphs than randomized graphs for the state errors to be bounded, while the time needed for all the cases are less than the theoretical bound provided in Theorem \ref{the:ppt}.}
	\label{fig:ppt}
\end{figure}
We first adopt TBG $\varepsilon(t)$ defined in (\ref{eq:tbgex}) to build our TBG gain $\eta(t)$ as in (\ref{eq:TBG}). In this case $T_s = 4$ as $\varepsilon(4) = 1$. By setting $\delta$ in (\ref{eq:TBG}) as $\delta = 10^{-4}$, the simulation results are shown in Figure \ref{fig:ppt} (a) and (b). In this scenario, all initial states are set as overestimates such that Assumption \ref{ass:overestimate} holds. In Figure \ref{fig:ppt} (a), the effective diameter $\cD(G) = 20$, the state error $e_i(t)$ of each node $i \in V$ drops rapidly below the bound furnished in Theorem \ref{the:ppt} within $3 * T_s = 12$ units of time, the time needed for the state errors to be bounded is much less than the theoretical bound $(\cD(G) - 1)T_s$ units of time. Similarly, in Figure \ref{fig:ppt}, such time is $3 * T_s = 12$ units of time while $\cD(G)$ in this scenario is 22. A reasonable explanation could be the greatest overestimate $V^+(t)$, i.e., the largest state error, is non-increasing itself by Lemma \ref{le:greatest}. Also, the graph structural aspects will also play a role in the time guaranteeing all state errors to be bounded. As shown in Figure \ref{fig:ppt} (c) and (d), while implementing the protocol on line graphs with 50 and 100 nodes (in both line graphs the source node is located at the rightmost side such that the number of nodes is equal to the effective diameter), respectively, it takes much longer than randomized graphs for the state errors to be bounded, even though line graphs have fewer nodes. In the line graph with 50 nodes, $10 * T_s = 40$ units of time is needed for the state errors dropping below the bound, while $22 * T_s = 88$ units of time is needed for the other line graph.

\subsection{The pre-specified finite time control strategy}
\begin{figure}
	\centering
	\subfigure[Randomized graph with $\bar{T}_s = 2, h = 2, \gamma = 1$]{
		\includegraphics[width = 0.9\columnwidth]{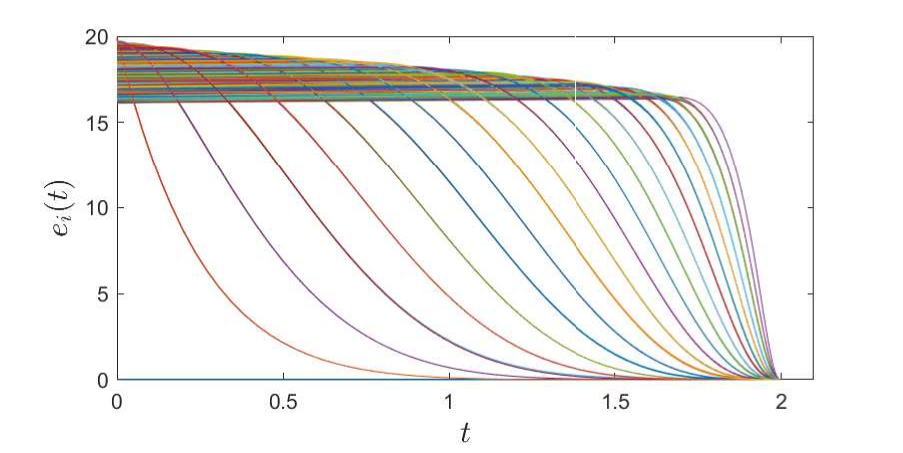}}
	\subfigure[Randomized graph with $\bar{T}_s = 4, h = 2, \gamma = 1$]{  
		\includegraphics[width = 0.9\columnwidth]{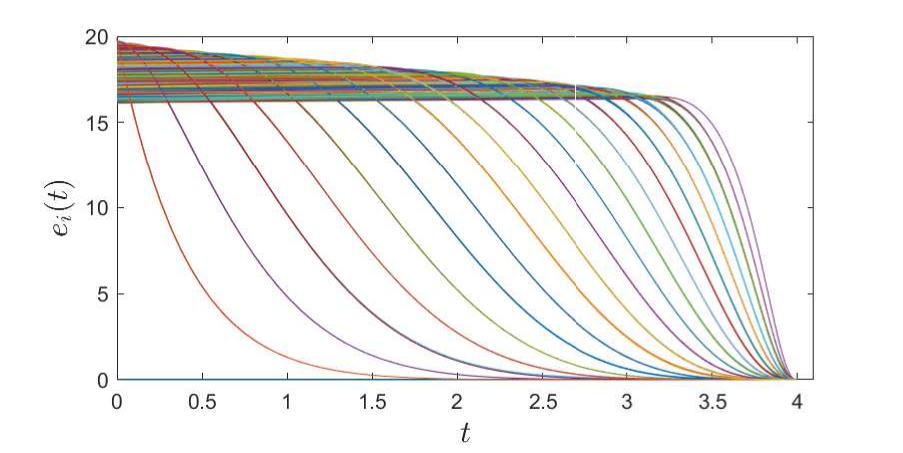}}
	\subfigure[Randomized graph with $\bar{T}_s = 4, h = 2, \gamma = 10$]{  
		\includegraphics[width = 0.9\columnwidth]{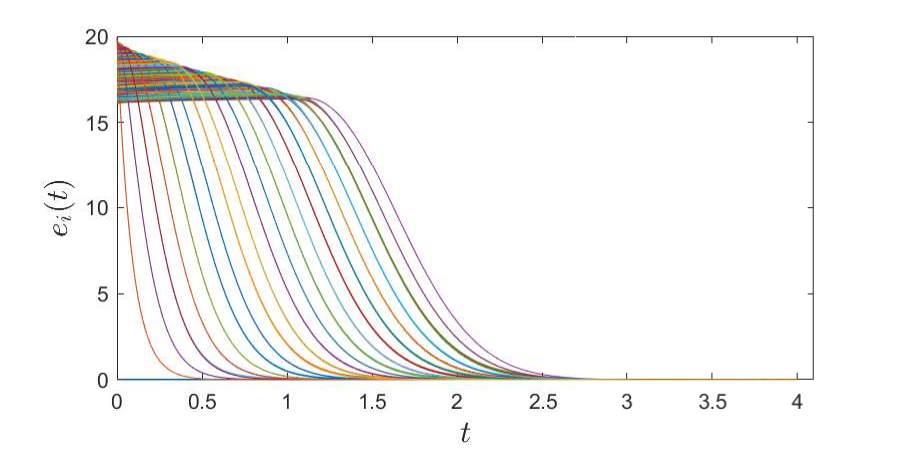}}
	\subfigure[Randomized graph with $\bar{T}_s = 4, h = 20, \gamma = 10$]{  
		\includegraphics[width = 0.9\columnwidth]{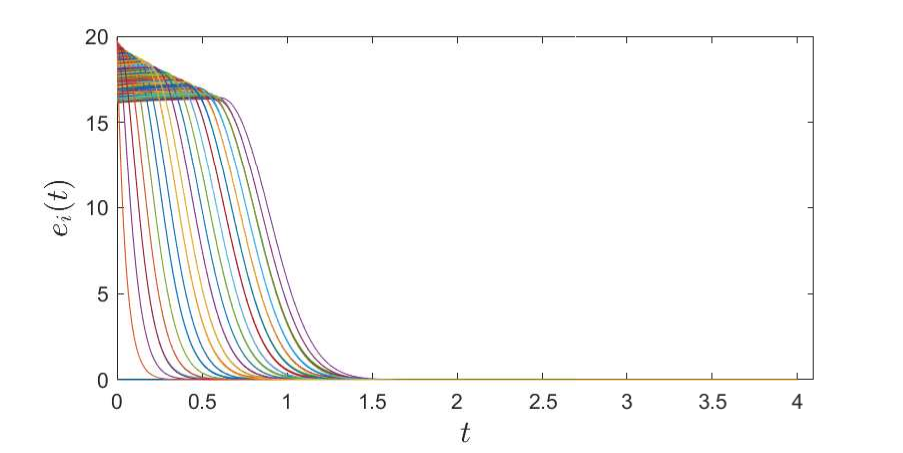}}
	\caption{Results of applying DBMC using the pre-specified finite time control strategy to randomized graphs. In all scenarios the state errors converge to 0 within the prescribed time. The protocol achieves a better convergence effect with a larger $h$ or $\gamma$.}
	\label{fig:pt}
\end{figure}
We next investigate the performance of the pre-specified finite time control strategy. For $\bar{\eta}(t)$ defined in (\ref{eq:ebar}), we first fix $h = 2$ and $\gamma = 1$ while varying $\bar{T}_s$, which is the prescribed time for the convergence of DBMC. It has been shown in Figure \ref{fig:pt} (a) and (b) that for a given randomized graph with $\cD(G) = 23$, all state errors converge within the prescribed time $\bar{T}_s = 2$ and $\bar{T}_s = 4$ units of time, respectively. Further, we fix $h = 2$ and $\bar{T}_s = 4$ while increasing $\gamma$ from 1 to 10, i.e., increasing the feedback gain, the result is shown in Figure \ref{fig:pt} (c), in this case all state errors still converge within $\bar{T}_s = 4$ units of time, but comparing with Figure \ref{fig:pt} (b), we can conclude that a larger $\gamma$ obtains a better convergence effect in the sense that at the same time instant the state error with $\gamma = 10$ is much less than that using $\gamma = 1$. The argument holds similarly for the case with a larger $h$. In Figure \ref{fig:pt} (d) we fix $\gamma = 10$ and $\bar{T}_s = 4$ while using $h = 20$, comparing with Figure \ref{fig:pt} (c), the convergence effect in this case is even better. 

\subsection{The distributed biased min-consensus protocol under perturbations}
\begin{figure}
	\centering
	\subfigure[DBMC without perturbations using $\eta = 1 + 10^{-8}$ ]{
		\includegraphics[width = 0.9\columnwidth]{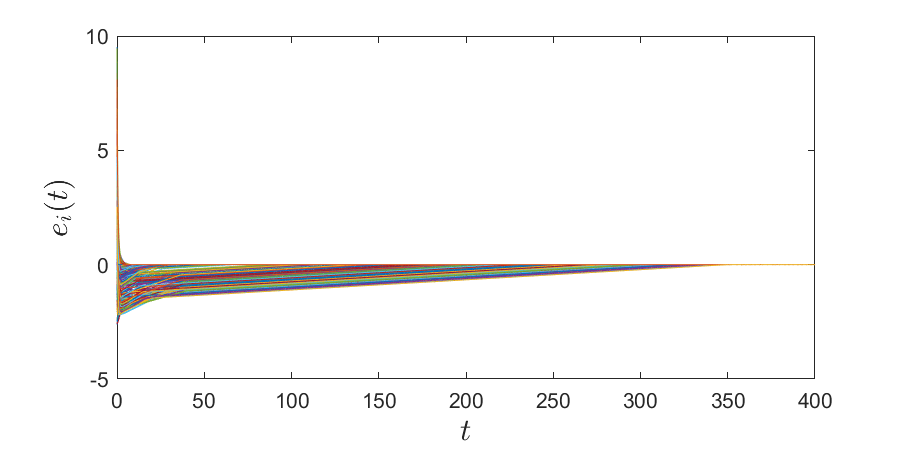}}
	\subfigure[DBMC with $0.9w_{ij} \leq w_{ij}(t) \leq 1.1w_{ij}$ using $\eta = 1 + 10^{-8}$]{  
		\includegraphics[width = 0.9\columnwidth]{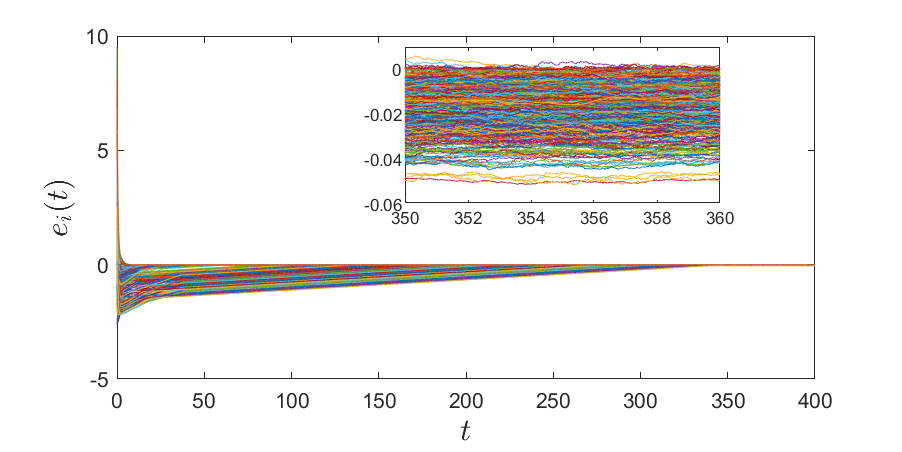}}
	\subfigure[DBMC with $0.8w_{ij} \leq w_{ij}(t) \leq 1.2w_{ij}$ using $\eta = 1 + 10^{-8}$]{  
		\includegraphics[width = 0.9\columnwidth]{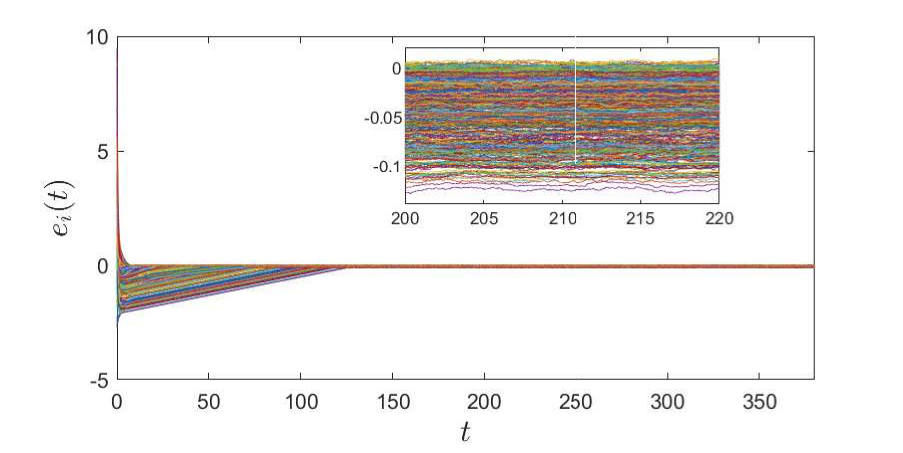}}
	\subfigure[DBMC with $0.9w_{ij} \leq w_{ij}(t) \leq 1.1w_{ij}$ using $\eta = 20$]{  
		\includegraphics[width = 0.9\columnwidth]{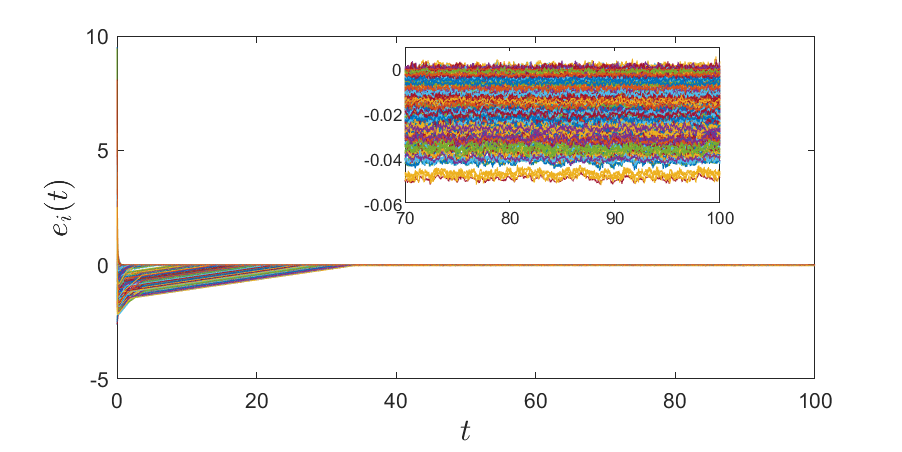}}
	\caption{Results of applying nominal DBMC to randomized graphs with varying $\eta$. With $\eta = 1 + 10^{-8}$, the nominal DBMC achieves exponential stability without perturbations, and expISS with the edge weight $w_{ij}(t)$ ranging from $0.8w_{ij}$ to $1.2w_{ij}$. The nominal DBMC can still achieve expISS with a larger $\eta$, manifesting the conservatism of Lyapunov-based analysis.}
	\label{fig:sg}
\end{figure}

In this part, we look into the behaviors of DBMC, with perturbations including, the change of source and the measurement noise on the edge weights. Unlike the previous two subsection, here we don't have the restriction that all states need to be overestimates, and we assume all initial state are randomly distributed between [0, 10] such that initial states include overestimates and underestimates. We first set $\eta$ in (\ref{eq:practical-protocol}) as $\eta = 1 + 10^{-8}$ (i.e., condition on $\eta$ in Theorem \ref{th:small} is satisfied). Figure \ref{fig:sg} (a) demonstrates the state errors with 0-input, by definition of expISS in Definition \ref{def:expISS}, in this case DBMC converges exponentially fast. Also, states starting with overestimates converge much faster than those with underestimates. Therefore, to achieve a better convergence speed, one can set all initial states in the nominal DBMC to be overestimates. Figure \ref{fig:sg} (b) further provides the trajectories of state errors under perturbations on the edge weights, here we assume each edge weight $w_{ij}(t)$ varies from $0.9w_{ij}$ to $1.1w_{ij}$ continuously. Similarly, all state errors drop below a bound at an exponential speed while those with overestimates decline even faster. It is worth noting that the largest state error is around 0.05 while $u_{ij}(t)$ defined in (\ref{eq:input}) obeys $u_{ij}(t) \leq 0.025$ for all $i \in V$, $j \in \cN(i)$ and all $t \geq 0$, indicating the superior robustness of DBMC. Figure \ref{fig:sg} further consolidates this claim, in this case the edge weight $w_{ij}(t)$ varies from $0.8w_{ij}$ to $1.2w_{ij}$ continuously while $\eta = 1 + 10^{-8}$, the largest state error now is roughly 0.13, implying that it linearly depends on the magnitude of the input. 

We next violate the condition in Theorem \ref{th:small} by increasing $\eta$ from $1 + 10^{-8}$ to $20$ while each edge weight $w_{ij}(t)$ still varies from $0.9w_{ij}$ to $1.1w_{ij}$ continuously. Then it can be seen from Figure \ref{fig:sg} (c) that (\ref{eq:practical-protocol}) is also expISS with all state errors dropping below the bound much more rapidly as the feedback gain increases, manifesting the conservatism of the requirement on $\eta$ in Theorem \ref{th:small}, originated from the conservative nature of Lyapunov-based analysis.  

Finally we consider the scenario where the source changes during the implementation of the nominal DBMC. In this scenario, $\eta = 1 + 10^{-8}$, the edge weight still bears the previous perturbation and the source node changes from node 1 to node 20 when $t = 380$. As can be seen from Figure \ref{fig:sc}, all state errors decline rapidly till dropping below the bound. Then abrupt spikes emerge at $t = 380$ as now the source becomes node 20, leading to the difference between the estimated distance towards the previous source and the true distance from current source becoming the new initial state errors. As DBMC is globally exponentially input-to-state stable, i.e., all state errors will drop below a bound defined by the deviation of edge weight from its nominal value at an exponential rate regardless of the initial states, then the state errors decline below a bound again after the source change.        
\begin{figure}
	\centering
		\includegraphics[width = 1\columnwidth]{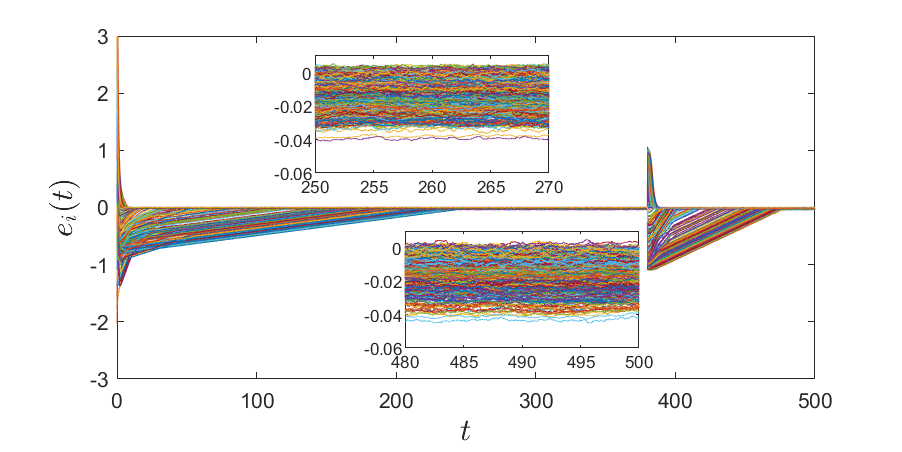}
	\caption{Results of applying nominal DBMC to randomized graphs with the change of source. In this scenario, the source changes from node 1 to node 20 at $t = 380$ while the edge weight $w_{ij}(t)$ ranges from $0.9w_{ij}$ to $1.1w_{ij}$ continuously. (\ref{eq:practical-protocol}) is expISS before and after the source change.}
	\label{fig:sc}
\end{figure}
\section{Conclusions}\label{sec:con}
In this paper, PPT and PT control strategies have been both designed for DBMC. Under the proposed control strategies DBMC can converge to the stationary value exactly or to a certain level adjustable by user-defined parameter, within a presetting finite time. As for the nominal DBMC, we have given the sufficient condition regarding the range of the feedback gain of DBMC to ensure its expISS, i.e., DBMC is globally exponentially stable without perturbations, and with perturbations its state error will drop exponentially fast below a bound defined by the magnitude of the perturbation. An interesting research direction would be designing a two or higher order DBMC, as many systems in engineering are modeled by higher order dynamics, and several works regarding to higher-order min-consensus protocols \cite{zhang2018second,singh2020min} are worthy of reference.   
\bibliographystyle{IEEEtran}
\bibliography{refs}

\appendix
\noindent

{\bf Proof of Theorem \ref{the:unique}:} It follows from Remark \ref{re:TBG} that $0 \leq \eta(t) \leq L$ for all $t \geq 0$ with some $L > 0$. For $i \in V$, define a nonlinear map $f_i: [0,+\infty) \times \mathbb{R}^n \rightarrow \mathbb{R}$ such that $f_i = 0$ for $i \in S$, and for $i \in V\setminus S$, $f_i(t, \tilde{x}) = -\eta(t) (\tilde{x}_i - \min_{j \in \cN(i)} \{\tilde{x}_j + w_{ij} \})$ with $\tilde{x} = [\tilde{x}_1, \cdots, \tilde{x}_n]^\top \in \mathbb{R}^n$ and $i \in \{1,\cdots,n \}$, and the composite map is defined as $f: [0,+\infty) \times \mathbb{R}^n \rightarrow \mathbb{R}^n$. Then for any $\xb = [\xb_1, \cdots, \xb_n]^\top \in \mathbb{R}^n$, there holds
\begin{flalign}
& |f(t,\tilde{x}) - f(t,\xb)|_\infty = |f_i(t,\tilde{x}) - f_i(t,\xb)|  \label{eq:composite} \\
&\leq L|(\tilde{x}_i - \min_{j \in \cN(i)} \{\tilde{x}_j + w_{ij} \}) - (\xb_i - \min_{j \in \cN(i)} \{\xb_j + w_{ij} \})| \nonumber \\
& \leq L|\tilde{x}_i - \xb_i| + L| \min_{j \in \cN(i)} \{\xb_j + w_{ij} \} - \min_{j \in \cN(i)} \{\tilde{x}_j + w_{ij} \} | \nonumber \\
& \leq L|\tilde{x}_i - \xb_i| + L|\xb_{i'} + w_{i,i'} - (\tilde{x}_{j'} + w_{i,j'})| \label{eq:argm} \\
& \leq L|\tilde{x}_i - \xb_i| + L|\xb_{j'} + w_{i,j'} - (\tilde{x}_{j'} + w_{i,j'})| \label{eq:tri} \\
& \leq 2L|\tilde{x} - \xb|_\infty \nonumber
\end{flalign}
where in (\ref{eq:composite}) we assume $i = \arg \max_{j \in \{1,\cdots,n \}}\{|f_j(t,\tilde{x}) - f_j(t,\xb)| \}$, in (\ref{eq:argm}) we assume $i' = \arg \min_{j \in \cN(i)} \{\xb_j + w_{ij} \}$, $j' =\arg \min_{j \in \cN(i)} \{\tilde{x}_j + w_{ij} \}$, and (\ref{eq:tri}) uses the fact that $\xb_{i'} + w_{i,i'} \leq \xb_{j} + w_{i,j}$ for all $j \in \cN(i)$ and the fact that node $j'$ is a neighbor of node $i$. Therefore, the composite map $f(\cdot,\cdot)$ is globally Lipschitz continuous with respect to its second argument for $t \geq 0$ and continuous with respect to its first argument, from Theorem 3.2 in \cite{khalil}, (\ref{eq:protocol}) admits a unique solution.

{\bf Proof of Lemma \ref{le:practical}:} It can be readily verified that the solution to (\ref{eq:TBGy}) over the interval $[0, T_s]$ obeys
\begin{equation}
y(t) = y_0\left(1 - \frac{\varepsilon(t)}{1 + \delta} \right),
\end{equation}
and thus we have $y(T_s) = \frac{\delta}{1 + \delta}y(0)$.

{\bf Proof of Lemma \ref{le:bound}:} 
We prove our claim by induction. We begin with $\ell = 1$, for $i \in \cF_1$, it has a true parent node $j \in S = \cF_0$, then it follows from (\ref{eq:protocol}) and (\ref{eq:error}) that for all $t \in [0, T_s]$
\begin{flalign}
	&\ed_{i}(t) = \xd_i(t) =  -\eta(t)\left(x_i(t) - \min_{k \in \cN(i)}\{x_k(t)+w_{ik}\}\right) \nonumber \\
	& = -\eta(t)\left(x_i(t) - (x_j + w_{ij}) \right)\label{eq:bov} \\
	&= -\eta(t)e_i(t) \label{eq:ust}
\end{flalign}
where (\ref{eq:bov}) uses Lemma \ref{le:allover} such that $x_i(t) \geq x_i$ for all $i \in V$ and all $t \geq 0$, and the fact that $j \in S$ is the true parent node of $i$, i.e., $x_j(t) + w_{ij} = x_j + w_{ij} = w_{ij} \leq x_k + w_{ik} \leq x_k(t) + w_{ik}$ for all $k \in \cN(i)$, and (\ref{eq:ust}) uses (\ref{eq:bellman}). Further, with Lemma \ref{le:practical}, for $t \in [0, T_s]$ there holds
\begin{equation}
e_i(t) = e_i(0)\left(1 - \frac{\varepsilon(t)}{1 + \delta} \right)\leq e_{\max}(0)\left(1 - \frac{\varepsilon(t)}{1 + \delta} \right),
\end{equation} 
and thus $e_i(T_s) \leq \frac{\delta}{1 + \delta}e_{\max}(0)$. 

We next prove that $e_i(t) \leq e_i(T_s)$ for all $t \geq T_s$. Note that $\ed_{i}(t) = -\eta(t)e_i(t)$ still holds for $t \geq T_s$. Applying Lemma \ref{le:practical} on $\ed_{i}(t)$ over the interval $[T_s, 2T_s]$, we obtain 
\begin{flalign}
e_i(t) = e_i(T_s)\left(1 - \frac{\varepsilon(t - T_s)}{1 + \delta} \right) \leq e_i(T_s), 
\end{flalign}
repeating the above procedure, there holds $e_i(t) \leq e_i(T_s) = \frac{\delta}{1 + \delta}e_{\max}(0)$ for all $t \geq T_s$, and (\ref{eq:instant}) holds with $\ell = 1$.

Now suppose (\ref{eq:instant}) holds with some $\ell \in \{ 1,\cdots, \cD(G)-2   \}$. For $i \in \cF_{\ell + 1}$, it follows from (\ref{eq:protocol}) and (\ref{eq:error}) that for all $t \in [\ell T_S, (\ell + 1)T_s]$, we have
\begin{flalign}
&\ed_{i}(t) = \xd_i(t)  = -\eta(t)\left(x_i(t) - \min_{k \in \cN(i)}\{x_k(t)+w_{ik}\}\right)\nonumber \\
& \leq  -\eta(t)(x_i(t) - (x_j(t)+w_{ij})) \label{eq:asj} \\
& = -\eta(t)(x_i + e_i(t) - (x_j + e_j(t)+w_{ij})) \nonumber \\
& \leq -\eta(t)(e_i(t) - \frac{\ell\delta}{1+\delta}e_{\max}(0)) \label{eq:ell}
\end{flalign}
where in (\ref{eq:asj}) we assume $j$ is the true parent node of $i$, and (\ref{eq:ell}) uses (\ref{eq:bellman}), the fact that $j \in \cF_\ell$, as well as our induction hypothesis that $e_j(t) \leq \frac{\ell\delta}{1+\delta}e_{\max}(0)$ for all $t \geq \ell T_s$.

As (\ref{eq:ell}) implies $(e_i(t) - \ell\frac{\delta}{1+\delta}e_{\max}(0))' = -\eta(t)(e_i(t) - \ell\frac{\delta}{1+\delta}e_{\max}(0)) $, by applying Lemma \ref{le:practical} and comparison principle \cite{khalil}, we obtain for $t \in [\ell T_S, (\ell + 1)T_s]$
\begin{flalign}
&e_i(t) - \ell\frac{\delta}{1+\delta}e_{\max}(0) \nonumber \\
&\leq\left(e_i(\ell T_s) - \ell\frac{\delta}{1+\delta}e_{\max}(0)\right)\left(1 - \frac{\varepsilon(t-\ell T_s)}{1 + \delta} \right), \label{eq:ine}
\end{flalign}
which further leads to 
\begin{flalign}
&e_i((\ell+1)T_s)  \nonumber \\ &\leq\left(e_{\max}(0) \!-\! \frac{\ell\delta}{1+\delta}e_{\max}(0)\right)\frac{\delta}{1 + \delta} + \frac{\ell\delta}{1+\delta}e_{\max}(0) \label{eq:vp} \\
& \leq \frac{(\ell+1)\delta}{1+\delta}e_{\max}(0) \nonumber 
\end{flalign}
where (\ref{eq:vp}) uses Lemma \ref{le:greatest} and $\varepsilon(T_s) = 1$. To complete our proof, we need to show that $e_i(t) \leq e_i((\ell+1)T_s)$ for all $t \geq (\ell+1)T_s$. Again, note that (\ref{eq:ell}) still holds over $[(\ell + 1)T_s, (\ell + 2)T_s]$, it follows from (\ref{eq:ine}) that for $t \in [(\ell + 1)T_s, (\ell + 2)T_s]$
\begin{flalign}
&e_i(t) - \ell\frac{\delta}{1+\delta}e_{\max}(0) \nonumber \\
&\leq\left(e_i((\ell+1) T_s) - \ell\frac{\delta}{1+\delta}e_{\max}(0)\right)\left(1 - \frac{\varepsilon(t-(\ell+1) T_s)}{1 + \delta} \right) \nonumber \\
& \leq e_i((\ell+1) T_s) - \ell\frac{\delta}{1+\delta}e_{\max}(0) \label{eq:varl}
\end{flalign}
where (\ref{eq:varl}) uses the fact that $\varepsilon(t) \in [0, 1]$ for $t \in [0,T_s]$. Continuing the above procedure, we can conclude that $e_i(t) \leq \frac{(\ell+1)\delta}{1+\delta}e_{\max}(0)$ for $t \geq (\ell + 1)T_s$, establishing our claim.

{\bf Proof of Lemma \ref{le:limi}:}
Multiplying $\rho^\alpha(t)$ on both sides of (\ref{eq:y}), we obtain 
\begin{equation}
\rho^\alpha(t)\dot{y}(t) = -\gamma\rho^\alpha(t)y(t) - \alpha\rho(t)^{\alpha - 1}\dot{\rho}(t)y(t)
\end{equation}
which further gives 
\begin{equation}\label{eq:cy}
\frac{d(\rho^\alpha(t)y(t))}{dt} =  \rho^\alpha(t)\dot{y}(t) + \alpha\rho(t)^{\alpha - 1}\dot{\rho}(t)y(t) = -\gamma\rho^\alpha(t)y(t)
\end{equation}
Applying comparison principle on (\ref{eq:cy}) gives 
\begin{equation}\label{eq:rho0}
\rho^\alpha(t)y(t) = e^{-\gamma t}\rho^\alpha(0)y_0 = e^{-\gamma t}y_0, \forall t \in [0, \bar{T}_s),
\end{equation}
where the last equality uses $\rho(0) = 1$.

{\bf Proof of Lemma \ref{le:leftlimit}:} Consider a sequence of nodes $i_0, i_1, \cdots, i_{T}$ such that $i_\ell \in \cF_\ell$ with $\ell \in \{0,\cdots,T\}$ and $i_\ell$ is the true parent node of $i_{\ell + 1}$ with $\ell \in \{0,\cdots,T-1\}$. Every node in the graph $G$ is in such a sequence and $T \leq \cD(G) - 1$. We begin with node $i_2$. It follows from (\ref{eq:prespecified}) and (\ref{eq:error}) that for $t \in [0, \bar{T}_s)$
\begin{flalign}
& \ed_{i_2}(t) = \xd_{i_2}(t) \nonumber \\
& = -\bar{\eta}(t)\left(x_{i_2}(t) - \min_{j \in \cN(i_2)}\{x_j(t)+w_{i_2j}\}\right) \nonumber\\
& = -\bar{\eta}(t)\left(x_{i_2}(t) - (x_{i_1} + w_{i_1i_2}) \right) \label{eq:i2} \\
&= -\bar{\eta}(t)e_{i_2}(t) \label{eq:red}
\end{flalign}
where (\ref{eq:i2}) uses Lemma \ref{le:alloverpt} such that $x_i(t) \geq x_i$ for all $i \in V$ and all $t \geq 0$, and the fact that $i_1 \in S$ is the true parent node of $i_2$, i.e., $x_{i_1}(t) + w_{i_1i_2} = x_{i_1} + w_{i_1i_2} = w_{i_1i_2} \leq x_j + w_{i_2j} \leq x_j(t) + w_{i_2j}$ for all $j \in \cN(i_2)$, and (\ref{eq:red}) uses (\ref{eq:bellman}). 

From Lemma \ref{le:limi}, (\ref{eq:red}) gives
\begin{equation}\label{eq:form}
e_{i_2}(t) = \rho^{-2}(t)e^{-\gamma t}e_{i_2}(0),
\end{equation}
leading to $\lim_{t \rightarrow \bar{T}_s^-} e_{i_2}(t) = 0,$ and thus $\lim_{t \rightarrow \bar{T}_s^-} x_{i_2}(t) = x_{i_2}$.

For node $i_3$, we prove our claim using contradiction. Suppose $\lim_{t \rightarrow \bar{T}_s^-} e_{i_3}(t) \neq 0$. As $e_i(t) \geq 0$ for all $i \in V$ and $t \in [0, \bar{T}_s)$ given in Lemma \ref{le:alloverpt},
this assumption is equivalent to $\exists \epsilon_1 > 0$, such that $\forall \delta_1 > 0$, there exists a $t_1 \in [\bar{T}_s - \delta_1, \bar{T}_s)$ such that $e_{i_3}(t) \geq \epsilon_1$. As $\lim_{t \rightarrow \bar{T}_s^-} e_{i_2}(t) = 0$ implies that for some $\epsilon_2 = \frac{1}{3}\epsilon_1$, $\exists \delta_2 > 0$ such that $e_{i_2}(t) < \epsilon_2$ for all $t \in [\bar{T}_s - \delta_2, \bar{T}_s)$, it follows from (\ref{eq:prespecified}) and (\ref{eq:error}) that for $t \in [\bar{T}_s - \delta_2, \bar{T}_s)$
\begin{flalign}
& \ed_{i_3}(t) = -\bar{\eta}(t)\left(x_{i_3}(t) - \min_{j \in \cN(i_3)}\{x_j(t)+w_{i_3j}\}\right) \nonumber\\
&\leq -\bar{\eta}(t)(x_{i_3}(t) - (x_{i_2}(t)+w_{i_2i_3})) \nonumber \\
&\leq -\bar{\eta}(t)(x_{i_3}(t) - (x_{i_2}+ \epsilon_2 + w_{i_2i_3})) \nonumber \\
& = -\bar{\eta}(t) (e_{i_3}(t) - \epsilon_2  )\label{eq:long}
\end{flalign}
where (\ref{eq:long}) uses (\ref{eq:bellman}). Using comparison principle and Lemma \ref{le:limi}, there holds
\begin{equation}\label{eq:ft}
e_{i_3}(t) \leq \underbrace{\rho^{-2}(t)e^{-\gamma (t - (\bar{T}_s - \delta_2))}e_{i_3}(\bar{T}_s - \delta_2)}_{\Delta(t)} + \epsilon_2.
\end{equation}
As the left-hand limit at $\bar{T}_s$ of $\Delta(t)$ of (\ref{eq:ft}) is 0  and $\Delta(t) \geq 0$ for $t\in [\bar{T}_s - \delta_2, \bar{T}_s)$, we have for $\epsilon_2 > 0$, $\exists \delta_3 > 0 (\delta_3 < \delta_2)$ such that $\Delta(t) < \epsilon_2$ for all $t \in [\bar{T}_s - \delta_3, \bar{T}_s)$, leading to $e_{i_3}(t) \leq \epsilon_2 = \frac{2}{3}\epsilon_1$ during such an interval, establishing the contradiction, and thus $\lim_{t \rightarrow \bar{T}_s^-} x_{i_3}(t) = x_{i_3}$.

Repeating the above steps on the remaining nodes, it can be verified that $\lim_{t \rightarrow \bar{T}_s^-} x_{i_\ell}(t) = x_{i_\ell}$ for all $\ell \in \{1,\cdots,T \}$, completing our proof.
\begin{figure}[h]
	\centering
	\includegraphics[width=1\columnwidth]{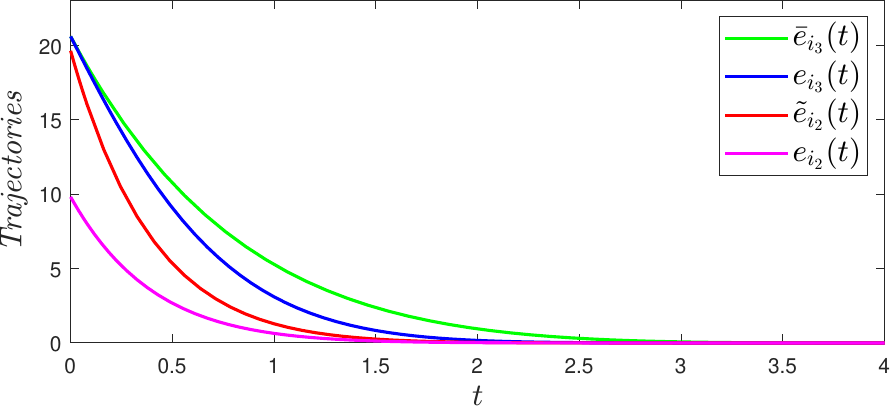}
	\caption{Illustration of the proof of Lemma \ref{le:continuity} for the case 1): $e_{i_3}(0) \geq ke_{i_2}(0)$, via implementing DBMC with the PT control strategy on an undirected graph. In this scenario, $k = 2$, $e_{i_3}(0) = 20.6059 > 2*e_{i_2}(0) = 19.6381$, and $t_1$ here obeys $t_1 = \bar{T}_s = 4$.}
	\label{fig:traj}
\end{figure}

{\bf Proof of Lemma \ref{le:continuity}:}
We use the sequence of nodes $i_0, i_1, \cdots, i_{T}$ as considered in the proof of Lemma \ref{le:leftlimit}. We begin with node $i_2$, it follows from (\ref{eq:red}) and (\ref{eq:form}) in the proof of Lemma \ref{le:leftlimit} that for $t \in [0, \bar{T}_s)$
\begin{flalign}
&\ed_{i_2}(t)  =  -\bar{\eta}(t)\rho^{-2}(t)e^{-\gamma t}e_{i_2}(0) \nonumber \\
& =  - \left( \gamma\frac{(\bar{T}_s - t)^{2 + 2h}}{\bar{T}_s^{2 + 2h}} + 2\frac{((1 + h)\bar{T}_s - t)^{1 + 2h}}{\bar{T}_s^{2 + 2h}}  \right)e^{-\gamma t}e_{i_2}(0),\nonumber
\end{flalign}
implying $\lim_{t \rightarrow \bar{T}_s^-} \xd_{i_2}(t) = \ed_{i_2}(t) = 0.$

We prove (\ref{eq:upper}) by using node $i_3$ as an example, the proof of other nodes is in a similar fashion and thus omitted.

Consider node $i_3$, we consider two cases: 1) $e_{i_3}(0) \geq ke_{i_2}(0)$ for some $k \geq 2$; 2) $e_{i_3}(0) \leq ke_{i_2}(0)$.  In the first case, we define the following three variables over the interval $[0, \bar{T}_s)$:
\begin{eqnarray}\label{eq:trva}
	&&\dot{\bar{e}}_{i_3}(t) = -\bar{\eta}(t)(1 - \frac{1}{k})\bar{e}_{i_3}(t),\nonumber \\
	&&\dot{\tilde{e}}_{i_3}(t) = -\bar{\eta}(t)\tilde{e}_{i_3}(t),\nonumber \\
	&& \dot{\tilde{e}}_{i_2}(t) = -\bar{\eta}(t)\tilde{e}_{i_2}(t)
\end{eqnarray}
with $\bar{e}_{i_3}(0) = \tilde{e}_{i_3}(0) = e_{i_3}(0)$ and $\tilde{e}_{i_2}(t) = ke_{i_2}(0)$.  Using comparison principle and Lemma \ref{le:limi} yields for all $t \in [0, \bar{T}_s)$
\begin{equation}\label{eq:treq}
ke_{i_2}(t) \!= \!	\tilde{e}_{i_2}(t)\! \leq\! \tilde{e}_{i_3}(t) \!\leq\! \bar{e}_{i_3}(t) \!=\! \frac{1}{\rho^{2(1 - \frac{1}{k}) }(t) }e^{-\gamma(1 - \frac{1}{k})t}e_{i_3}(0)
\end{equation}
We will prove that in case 1) $e_{i_3}(t) \leq \bar{e}_{i_3}(t)$ for all $t \in [0, \bar{T}_s)$.

By the continuity of $e_{i_3}(t)$ and $e_{i_2}(t)$ over $[0, \bar{T}_s)$, there exists a $t_1 \leq \bar{T}_s$ such that $e_{i_3}(t) \geq ke_{i_2}(t)$ for $[0, t_1]$. Then it follows from (\ref{eq:prespecified}) that over $[0, t_1)$
\begin{flalign}
	& \ed_{i_3}(t) = 	-\bar{\eta}(t)\left(x_{i_3}(t) - \min_{j \in \cN(i_3)}\{x_j(t)+w_{i_3j}\}\right) \nonumber \\
    &\leq -\bar{\eta}(t) (  x_{i_3}(t) - ( x_{i_2}(t) + w_{i_2i_3} ) ) \nonumber \\
    & = -\bar{\eta}(t)(e_{i_3}(t) - e_{i_2}(t))  \label{eq:ge} \\
    & \leq -\bar{\eta}(t)(1 - \frac{1}{k})e_{i_3}(t)\label{eq:uk}
\end{flalign}
(\ref{eq:ge}) uses that $i_2$ is a true parent node of $i_3$,  and (\ref{eq:uk}) uses $e_{i_3}(t) \geq ke_{i_2}(t)$, leading to $e_{i_3}(t) \leq \bar{e}_{i_3}(t)$ for $t \in [0, t_1)$.  Now suppose there exists a $\bar{T}_s \geq t_2 > t_1$ such that $e_{i_3}(t) \leq ke_{i_2}(t) = \tilde{e}_{i_2}(t)$, there holds $e_{i_3}(t) \leq \bar{e}_{i_3}(t)$ by (\ref{eq:treq}). 
Next suppose there exists a $\bar{T}_s \geq t_3 > t_2 $ such that $e_{i_3}(t) \geq ke_{i_2}(t)$ again for $t \in [t_2, t_3]$, it follows from (\ref{eq:ge}), (\ref{eq:uk}) and Lemma \ref{le:limi} that for all $t \in [t_2, t_3]$
\begin{flalign}
&	e_{i_3}(t) \leq \frac{1}{\rho^{2(1 - \frac{1}{k}) }(t) }e^{-\gamma(1 - \frac{1}{k})(t - t_2)}\rho^{2(1 - \frac{1}{k}) }(t_2)\bar{e}_{i_3}(t_2) \nonumber \\
&   \leq \frac{1}{\rho^{2(1 - \frac{1}{k}) }(t) }e^{-\gamma(1 - \frac{1}{k})(t - t_2)} \rho^{2(1 - \frac{1}{k}) }(t_2)\cdot \nonumber \\
& ~~~ \frac{1}{\rho^{2(1 - \frac{1}{k}) }(t_2) }e^{-\gamma(1 - \frac{1}{k})t_2}e_{i_3}(0)  \label{eq:usdef} \\
&  = \bar{e}_{i_3}(t) \nonumber 
\end{flalign} 
where (\ref{eq:usdef}) uses (\ref{eq:treq}). Along this line, it can be established that $e_{i_3}(t) \leq \bar{e}_{i_3}(t)$ for $t \in [0, \bar{T}_s)$.

For the latte case, we further define
\begin{equation}
	\ed'_{i_3}(t) = -\bar{\eta}(t)(1 - \frac{1}{k})e'_{i_3}(t)
\end{equation} 
with $e'_{i_3}(0) = ke_{i_2}(0)$. Then there holds 
\begin{equation}\label{eq:eprime}
	\frac{1}{\rho^{2(1 - \frac{1}{k}) }(t) }e^{-\gamma(1 - \frac{1}{k})t}ke_{i_2}(0) = e'_{i_3}(t) \geq  \tilde{e}_{i_2}(t) = ke_{i_2}(t)
\end{equation}
for all $t \in [0, \bar{T}_s)$, with $\tilde{e}_{i_2}(t)$ in (\ref{eq:trva}). We will prove that in case 2) $e_{i_3}(t) \leq e'_{i_3}(t)$ for all $t \in [0, \bar{T}_s)$.

By the continuity of $e_{i_3}(t)$ and $e_{i_2}(t)$ over $[0, \bar{T}_s)$, there exists a $t_1 \leq \bar{T}_s$ such that $e_{i_3}(t) \leq ke_{i_2}(t) \leq e'_{i_3}(t)$ for $[0, t_1]$. Now suppose there exists a $\bar{T}_s \geq t_2 > t_1$ such that $e_{i_3}(t) \geq ke_{i_2}(t)$ for $t \in [t_1, t_2]$. As $e_{i_3}(t_1) = ke_{i_2}(t_1) \leq e'_{i_3}(t_1)$ It follows from (\ref{eq:ge}), (\ref{eq:uk}) and Lemma \ref{le:limi} that for all $t \in [t_1, t_2]$
\begin{flalign}
	&	e_{i_3}(t) \leq \frac{1}{\rho^{2(1 - \frac{1}{k}) }(t) }e^{-\gamma(1 - \frac{1}{k})(t - t_1)}\rho^{2(1 - \frac{1}{k}) }(t_1)e'_{i_3}(t_1) \nonumber \\
	& \leq \frac{1}{\rho^{2(1 - \frac{1}{k}) }(t) }e^{-\gamma(1 - \frac{1}{k})(t - t_1)}\rho^{2(1 - \frac{1}{k}) }(t_1) \cdot \nonumber \\
	& ~~~\frac{1}{\rho^{2(1 - \frac{1}{k}) }(t_1) }e^{-\gamma(1 - \frac{1}{k})t_1}ke_{i_2}(0) \nonumber \\
	& = e'_{i_3}(t)
\end{flalign}
Along this line, we can conclude that $e_{i_3}(t) \leq e'_{i_3}(t)$ for $t \in [0, \bar{T}_s)$. Combining (\ref{eq:treq}) and (\ref{eq:eprime}), our claim follows.

{\bf Proof of Lemma \ref{le:squeeze}:} It follows from (\ref{eq:prespecified}) that for $i \in \cF_\ell$ with  $\ell \in \{2,\cdots, \cD(G) - 1 \}$
\begin{flalign}
&	\xd_i(t) = \ed_i(t) = -\eta(t)\left(x_i(t) - \min_{j \in \cN(i)}\{x_j(t)+w_{ij}\}\right) \nonumber \\
& \geq -\bar{\eta}(t)(x_i(t) - x_i) = -\bar{\eta}(t)e_i(t) \label{eq:uall}
\end{flalign}
where (\ref{eq:uall}) uses Lemma \ref{le:alloverpt} such that $\min_{j \in \cN(i)}\{x_j(t)+w_{ij}\} \geq \min_{j \in \cN(i)}\{x_j+w_{ij}\} = x_i$. Putting (\ref{eq:upper}) into (\ref{eq:uall}) yields
\begin{flalign}
&	\ed_i(t) \geq -\bar{\eta}(t)\frac{1}{\rho^{2(1 - \frac{1}{k}) }(t) }e^{-\gamma(1 - \frac{1}{k})t}ke_{\max}(0) \nonumber \\
& =	-\left(\gamma\left(\frac{\bar{T}_s - t}{\bar{T}_s} \right)^{\Gamma(h,k)} +\frac{(2 + 2h)(\bar{T}_s - t)^{\Gamma(h,k)-1}}{\bar{T}_s^{\Gamma(h,k)}} \right)\cdot  \nonumber\\
& ~~~~e^{-\gamma(1 - \frac{1}{k})t}ke_{\max}(0) \label{eq:com}
\end{flalign}
where in (\ref{eq:com}) $\Gamma(h,k) = (2+2h)(1 - 1/k)$. As $k \geq 2$ and $h > 1 (h \in \mathbb{Z}_+)$, there holds $\lim_{t \rightarrow \bar{T}_s^-}\ed_i(t) \geq 0$.

To derive the lower bound of the left-hand limit of $\ed_i(t)$ at $\bar{T}_s$, note that (\ref{eq:uall}) implies $e_i(t)\geq \frac{1}{\rho^2(t)}e^{-\lambda t}e_i(0)$ via Lemma \ref{le:limi}, together with (\ref{eq:prespecified}), there holds
\begin{flalign}
	& \ed_{i}(t) = 	-\bar{\eta}(t)\left(x_{i}(t) - \min_{j \in \cN(i)}\{x_j(t)+w_{ij}\}\right) \nonumber \\
	&\leq -\bar{\eta}(t) (  x_{i}(t) - ( x_{j}(t) + w_{ij} ) ) \label{eq:ag} \\
	& = -\bar{\eta}(t)(e_{i}(t) - e_{j}(t)) \label{eq:utt}  \\
	& \leq -\bar{\eta}(t)\left(\frac{1}{\rho^2(t)}e^{-\lambda t}e_i(0) +  \frac{1}{\rho^{2(1 - \frac{1}{k}) }(t) }e^{-\gamma(1 - \frac{1}{k})t}ke_{\max}(0) \right) \label{eq:ute}
\end{flalign}
where in (\ref{eq:ag}) we assume $j$ is true parent node of $i$, and (\ref{eq:utt}) uses (\ref{eq:bellman}), and (\ref{eq:ute}) uses Lemma \ref{le:continuity}. Then it follows from (\ref{eq:ebar}) and (\ref{eq:com}) that $\lim_{t \rightarrow \bar{T}_s^-}\ed_i(t) \leq 0$. Therefore, our result holds by Squeeze theorem \cite{stewart2007multivariable}.

{\bf Proof of the claim in Remark \ref{re:ltz}:} It follows from (\ref{eq:abs-protocol}) that for $i \in V\setminus S$
\begin{flalign}
	&\xd_i(t) = -\eta\left(x_i(t) - \min_{j \in \cN(i)}\{|x_j(t)|+w_{ij}(t)\}\right) \nonumber \\
	& \geq -\eta(x_i(t) - w_{\min} ) \label{eq:abs}
\end{flalign}
where (\ref{eq:abs}) uses (\ref{eq:boundnoise}). Applying comparison principle on (\ref{eq:abs}), we obtain 
\begin{equation}
	x_i(t) \geq e^{-\eta t}(x_i(0) - w_{\min}) + w_{\min},
\end{equation}
leading to $x_i(t) \geq 0$ for $t \geq 0$ if $x_i(0) \geq 0$, and $x_i(t) \geq 0$ for $t \geq \frac{1}{\eta}\mathrm{In}\frac{w_{\min} - x_i(0)}{w_{\min}}$ otherwise.

{\bf Proof of Lemma \ref{le:trajectory}:} The proof for (\ref{eq:fors}) is trivial. Consider $i \notin S$ and $e_i(t) > 0$, let $j$ be the true parent node of node $i$, it follows from (\ref{eq:per-protocol}) that
\begin{flalign}
&	\Vd_i(x_i(t)) = \xd_i(t) \nonumber \\
& = -\eta\left(x_i(t) - \min_{j \in \cN(i)}\{x_j(t)+w_{ij}(t)\}\right) \nonumber \\
& \leq -\eta(x_i(t) - (x_j(t)+w_{ij}+u_{ij}(t))) \label{eq:inp} \\
& = -\eta(x_j+w_{ij}+ e_i(t) - (x_j + e_j(t)+w_{ij}+u_{ij}(t))) \label{eq:utr} \\
& = -\eta e_i(t) +\eta e_j(t) + \eta u_{ij}(t) \nonumber \\
& \leq -\eta V_i(e_i(t)) + \eta V_j(e_j(t)) + \eta || u ||_\infty \label{eq:c1}
\end{flalign}
where (\ref{eq:inp}) uses (\ref{eq:per-protocol}) and (\ref{eq:input}), (\ref{eq:utr}) uses (\ref{eq:bellman}), and (\ref{eq:c1}) uses the fact that $e_i(t) > 0$ such that $V_i(e_i(t)) = e_i(t)$.

Suppose $e_i(t) < 0$, let $j$ be the current parent node of node $i$, it follows from (\ref{eq:per-protocol}) that 
\begin{flalign}
	&	\Vd_i(x_i(t)) = -\xd_i(t) \nonumber \\
	& = \eta\left(x_i(t) - \min_{j \in \cN(i)}\{x_j(t)+w_{ij}(t)\}\right) \nonumber \\
	& = \eta(x_i + e_i(t) - (x_j + e_j(t)+w_{ij}+u_{ij}(t))) \nonumber \\
	& \leq \eta\big(\zeta(x_j + w_{ij}) + e_i(t) - \zeta(x_j + e_j(t)+w_{ij}) - u_{ij}(t)   \big) \label{eq:hj} \\
	& = -\eta(-e_i(t)) -\eta\zeta e_j(t) -\eta u_{ij}(t) \nonumber \\
	& = -\eta V_i(e_i(t)) + \eta\zeta V_j(e_j(t)) +\eta|| u ||_\infty \label{eq:c2}
\end{flalign}
where (\ref{eq:hj}) uses (\ref{eq:zeta1}) and the fact that $x_j(t) = x_j + e_j(t) \geq 0, w_{ij} \geq 0$, and (\ref{eq:c2}) uses that $V_i(e_i(t)) = -e_i(t)$ resulted from $e_i(t) < 0$. Combining (\ref{eq:c1}) and (\ref{eq:c2}), (\ref{eq:forns}) holds.

{\bf Proof of Lemma \ref{le:D}:} From Theorem 6.4 in \cite{ruffer2010monotone}, we need to show that all cycles composed of entries in $\Gamma$ defined in (\ref{eq:gain}), each of which multiplied by $d$, the factor of the diagonal operator $D$, are contractions, i.e., 
\begin{equation}\label{eq:contraction}
	d\bar{\lambda}_{i_0i_1} \times d\bar{\lambda}_{i_1i_2} \times \cdots \times d\bar{\lambda}_{i_ri_1} < 1
\end{equation}
where $r = 2,\cdots, n$, $i_j \in \{1,\cdots,n \}$ for $j \in \{1,\cdots, r \}$ and $i_j \neq i_{j'}$ if $j \neq j'.$ If $1 < \eta < 1/\zeta$, there exists a $d_1 > 1$ such that $d_1 \eta \zeta < 1$, and thus it follows from (\ref{eq:gain}) that $d_1\bar{\lambda}_{ij} > 1$ only happens when $i \notin S$ and $j$ is a true parent node of $i$, as in this case $\bar{\lambda}_{ij} = \eta > 1$. Then the only cycle induced by such elements is generated by the following sequence of nodes: $i_1, i_2, \cdots, i_r$, such that $i_\ell$ is the true parent of node $i_{\ell - 1}$ for $\ell \in \{2, \cdots, r\}$. It follows from Definition \ref{def:true} that $r \leq \cD(G)$ and $i_r \in S$, together with (\ref{eq:gain}), we obtain
\begin{flalign}
	&d\bar{\lambda}_{i_1i_2} \times d\bar{\lambda}_{i_2i_3} \times \cdots \times d\bar{\lambda}_{i_ri_1} \leq d^{r} \cdot \eta^{r - 1} \cdot \frac{0.5}{\eta^{\cD(G) - 1}} \nonumber \\
	& \leq 0.5 \cdot d^{\cD(G)}  \label{eq:ed}
\end{flalign}
and thus there must exist a $d_2 > 1$ satisfying $0.5 \cdot d_2^{\cD(G)}  < 1$. Let $d = \min\{d_1,d_2  \}$ be the diagonal operator factor of $D$, our claim follows.
\end{document}